\documentclass[11pt]{article}

\usepackage[utf8]{inputenc}

\usepackage[ruled,vlined,linesnumbered,noend]{algorithm2e}
\SetKwInOut{Parameter}{Parameters}

\usepackage{amsthm}
\usepackage{amsmath}
\usepackage{amssymb}
\usepackage{enumerate}
\usepackage{graphicx}
\usepackage[margin=1in]{geometry}
\usepackage{dsfont}
\usepackage{booktabs} 
\usepackage{tabularx}
\usepackage{multirow}
\usepackage{enumitem}
\usepackage{tablefootnote}

\usepackage{soul}

\usepackage[dvipsnames]{xcolor}
\definecolor{ForestGreen}{rgb}{0.1333,0.5451,0.1333}
\definecolor{DarkRed}{rgb}{0.65,0,0}

\usepackage{hyperref}
\hypersetup{
    colorlinks=true,
    linkcolor=DarkRed,
    filecolor=magenta,      
    urlcolor=cyan,
    citecolor=ForestGreen,
}

\usepackage{cleveref}

\usepackage{framed}
\usepackage[framemethod=tikz]{mdframed}
\usepackage{tikz-cd}

\newenvironment{wrapper}[1]
{
	\begin{center}
		\begin{minipage}{\linewidth}
			\begin{mdframed}[hidealllines=true, backgroundcolor=gray!20, leftmargin=0cm,innerleftmargin=0.4cm,innerrightmargin=0.4cm,innertopmargin=0.4cm,innerbottommargin=0.4cm,roundcorner=0pt]
				#1}
			{\end{mdframed}
		\end{minipage}
	\end{center}
}

\DeclareMathAlphabet{\mathmybb}{U}{bbold}{m}{n}

\newtheorem{theorem}{Theorem}[section]
\newtheorem{lemma}[theorem]{Lemma}

\newtheorem{corollary}[theorem]{Corollary}

\newtheorem{claim}[theorem]{Claim}
\newtheorem{question}{Question}

\DeclareMathOperator*{\poly}{poly}

\newcommand{\proj}{\pi}

\newcommand{\A}{\textsc{Alg}}

\newcommand{\OPT}{\textsc{OPT}}

\newcommand{\cl}{\texttt{cost}}

\newcommand{\cost}{\textnormal{\texttt{cost}}}
\newcommand{\dist}{\textnormal{\texttt{dist}}}

\newcommand{\MPAlg}{\textnormal{\textsc{MP-Alg}}}
\newcommand{\Restr}{\textnormal{\textsc{Res-Greedy}}}

\newcommand{\calP}[0]{\mathcal{P}}

\newcommand{\adv}[0]{\mathcal{A}}

\title{Deterministic $k$-Median Clustering in Near-Optimal Time}

\author{Mart\'{i}n Costa\thanks{University of Warwick, \texttt{Martin.Costa@warwick.ac.uk}. Supported by a Google PhD Fellowship.}
\and Ermiya Farokhnejad\thanks{University of Warwick, \texttt{Ermiya.Farokhnejad@warwick.ac.uk}}}

\date{}

\begin{document}

\maketitle

\begin{abstract}
    The \emph{metric $k$-median} problem is a textbook clustering problem. As input, we are given a metric space $V$ of size $n$ and an integer $k$, and our task is to find a subset $S \subseteq V$ of at most $k$ `centers' that minimizes the total distance from each point in $V$ to its nearest center in $S$.

    Mettu and Plaxton [UAI'02] gave a \textbf{randomized} algorithm for $k$-median that computes a $O(1)$-approximation in $\tilde O(nk)$ time.\footnote{We use $\tilde O(\cdot)$ to hide polylog factors in the size $n$ and the aspect ratio $\Delta$ (see \Cref{sec:prelim}) of the metric space.} They also showed that any algorithm for this problem with a bounded approximation ratio must have a running time of $\Omega(nk)$. Thus, the running time of their algorithm is optimal up to polylogarithmic factors.

    For \textbf{deterministic} $k$-median, Guha et al.~[FOCS'00] gave an algorithm that computes a 
    $\poly(\log (n/k))$-approximation in $\tilde O(nk)$ time, where the degree of the polynomial in the approximation is unspecified. To the best of our knowledge, this remains the state-of-the-art approximation of any deterministic $k$-median algorithm with this running time. 
    
    This leads us to the following natural question: \textbf{What is the best approximation of a deterministic $k$-median algorithm with near-optimal running time?} We make progress in answering this question by giving a deterministic algorithm that computes a $O(\log(n/k))$-approximation in $\tilde O(nk)$ time. We also provide a lower bound showing that any deterministic algorithm with this running time must have an approximation ratio of $\Omega(\log n/(\log k + \log \log n))$, \emph{establishing a gap between the randomized and deterministic settings for $k$-median}.
\end{abstract}

\section{Introduction}\label{sec:intro}

Clustering data is one of the fundamental tasks in unsupervised learning. As input, we are given a dataset, and our task is to partition the elements of the dataset into groups called \emph{clusters} so that similar elements are placed in the same cluster and dissimilar elements are placed in different clusters. One of the basic formulations of clustering is \emph{metric $k$-clustering}, where we are given a (weighted) metric space $(V,w,d)$ of size $n$, and our goal is to find a subset $S \subseteq V$ of at most $k$ \emph{centers} that minimizes an objective function.
We focus on the \emph{$k$-median} problem, where the objective is defined as $\cl(S) := \sum_{x \in V} w(x) \cdot d(x, S)$, where $d(x, S) := \min_{y \in S} d(x,y)$.
Equivalently, we want to minimize the total weighted distance from points in $V$ to their nearest center in $S$.

\medskip
\noindent \textbf{The State-of-the-Art for $k$-Median.} Metric $k$-median is a fundamental clustering problem with many real-world applications and has been studied extensively across many computational models 
\cite{charikar1999constant,jain2001approximation,ahmadian2019better,byrka2017improved,charikar2003better, ailon2009streaming}. The problem is NP-Hard and there is a long line of work designing efficient approximation algorithms for $k$-median using a variety of different techniques, such as local search \cite{AryaGKMMP04} and Lagrangian relaxation \cite{jain2001approximation}.
Mettu and Plaxton gave a randomized algorithm for $k$-median that computes a $O(1)$-approximation in $\tilde O(nk)$ time \cite{MettuP02}, where the approximation guarantee holds with high probability. 
They also showed that any algorithm for this problem with a non-trivial approximation ratio must have a running time of $\Omega (nk)$.
It follows that their algorithm is \emph{near-optimal}, i.e.~optimal up to polylogarithmic factors in the running time and the constant in the approximation ratio.

\medskip
\noindent \textbf{Deterministic Algorithms for $k$-Median.}
Designing \emph{deterministic} algorithms for fundamental problems is an important research direction within algorithms \cite{focs/Cohen-AddadSS23,focs/HaeuplerLS24,stoc/AssadiCS22,talg/NeimanS16, HLRW24}.
Even though the randomized complexity of $k$-median is well understood, we do not have the same understanding of the problem in the deterministic setting.
For deterministic $k$-median,
Mettu and Plaxton gave an algorithm that computes a $O(1)$-approximation in $\tilde O(n^2)$ time \cite{MettuP00}, and Jain and Vazirani gave an algorithm with an improved approximation of $6$ and a running time of $\tilde O(n^2)$ \cite{jain2001approximation}. 
Whenever $k = \Omega(n)$, it follows from the lower bound of \cite{MettuP02} that these algorithms are near-optimal in both the approximation ratio and running time. 
On the other hand, for $k \ll n$, these algorithms are slower than the randomized $O(1)$-approximation algorithm of \cite{MettuP02}.
Guha et al.~gave an algorithm that computes a $\poly(\log(n/k))$-approximation in a near-optimal running time of $\tilde O(nk)$ \cite{focs/GuhaMMO00}, where the degree of the polynomial in the approximation is unspecified. However, it is not clear how much this approximation ratio can be improved, and in particular, whether or not we can match the bounds for randomized algorithms. This leads us to the following question.

\begin{wrapper}
\begin{question}\label{Q1}
\begin{center}
What is the best approximation of any deterministic algorithm for $k$-median that runs in $\tilde O(nk)$ time?
\end{center}
\end{question}
\end{wrapper}

\subsection{Our Results}

We make progress in answering \Cref{Q1} by giving a deterministic algorithm with near-optimal running time and an improved approximation of $O(\log(n/k))$, proving the following theorem.

\begin{theorem}\label{thm:main:fast}
    There is a deterministic algorithm for $k$-median that, given a metric space of size $n$, computes an $O(\log(n/k))$-approximate solution in $\tilde O(nk)$ time.
\end{theorem}

We obtain our algorithm by adapting the ``hierarchical partitioning'' approach of Guha et al.~\cite{focs/GuhaMMO00}. We show that a modified version of this hierarchy can be implemented efficiently by using ``restricted $k$-clustering'' algorithms---a notation that was recently introduced by Bhattacharya et al.~to design fast dynamic clustering algorithms \cite{focs/BCLP24}. We design a deterministic algorithm for restricted $k$-median based on the reverse greedy algorithm of Chrobak et al.~\cite{ChrobakKY06} and combine it with the hierarchical partitioning framework to construct our algorithm.

In addition to our algorithm, we also provide a lower bound on the approximation ratio of any deterministic algorithm with a running time of $\tilde O(nk)$, proving the following theorem.

\begin{theorem}\label{thm:main:lower}
    Any deterministic algorithm for $k$-median that runs in $\tilde O(nk)$ time when given a metric space of size $n$ has an approximation ratio of
    $$ \Omega \! \left( \frac{\log n}{\log k + \log \log n} \right). $$
\end{theorem}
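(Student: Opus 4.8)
The plan is to prove the lower bound via an adversary argument against deterministic algorithms, building a hard instance that looks essentially the same to the algorithm until it has made too many distance queries. The key insight is that a deterministic algorithm running in $\tilde O(nk)$ time can only probe $\tilde O(nk)$ entries of the distance matrix, so it learns very little about an $n$-point metric when $k \ll n$. I would set up the metric so that the ``interesting'' structure—the location of a cheap cluster configuration—is hidden in a part of the matrix the algorithm is unlikely to look at, forcing it to either output a solution that is far from optimal on the hidden structure or to spend more than $\tilde O(nk)$ time.

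Concretely, I would use a layered/recursive construction reminiscent of the hard instances for hierarchical partitioning: partition the $n$ points into roughly $k$ groups, and within each group embed a smaller hard instance, iterating to depth $\Theta(\log n / \log(k \log n))$ so that the total ``multiplicative penalty'' accumulated across levels is $\Omega(\log n/(\log k + \log\log n))$. At each level the adversary maintains a set of consistent completions of the metric; since the algorithm makes only $\tilde O(nk)$ queries total but there are $\gg nk$ ``slots'' where the cheap substructure could hide at each level, by an averaging/counting argument there is always a level and a group within it where the algorithm has queried too few pairs to distinguish the cheap configuration from an expensive one. The adversary then answers queries so as to keep the expensive configuration consistent wherever the algorithm has not looked, and at the end reveals a metric on which the algorithm's output is suboptimal by the claimed factor while $\OPT$ is small because the cheap configuration genuinely exists in the revealed metric.

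The main technical steps, in order, would be: (i) define the family of metrics precisely, parametrized by the ``hidden'' choices at each level, and verify the triangle inequality for every member of the family (this is where recursive distance scalings must be chosen carefully so that inter-level distances dominate intra-level ones); (ii) compute $\OPT$ for the ``good'' completion and a lower bound on $\cost(S)$ for any $S$ of size $\le k$ that is ``consistent'' with the algorithm's transcript but misaligned with the hidden choices—the ratio should telescope to $\Omega(\log n/(\log k + \log\log n))$; (iii) run the adversary argument: simulate the deterministic algorithm, answer each of its $\le \tilde O(nk)$ queries consistently, and use a pigeonhole argument over the $\Theta(\log n/\log(k\log n))$ levels and the $\approx k$ groups per level to find a place where the algorithm is ``blind,'' then fix the hidden choices adversarially there; (iv) conclude that on the resulting fixed metric, the algorithm—being deterministic, hence producing the same transcript and output—returns a solution of cost $\Omega(\log n/(\log k + \log\log n)) \cdot \OPT$.

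The hard part will be step (iii) combined with getting the quantitative bound in step (ii) to come out to exactly $\Omega(\log n/(\log k + \log\log n))$ rather than something weaker like $\Omega(\log n/\log n) = \Omega(1)$ or $\Omega(\log_k n)$. The delicate point is balancing the branching factor at each level (which should be around $k$ so that an optimal solution can afford one center per group, keeping $\OPT$ small) against the number of levels and the per-level query budget: the algorithm has $\tilde O(nk)$ queries to spend across all levels, and I need that at \emph{some} level this is too few to resolve all the hidden binary-ish choices, which forces the number of hidden choices per level to be $\omega(k \cdot \polylog n)$ while still allowing a depth of $\Omega(\log n/(\log k + \log\log n))$—the $\log k + \log\log n$ in the denominator is exactly the $\log$ of the per-level ``resolution cost'' $k \cdot \polylog n$. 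Verifying that these constraints are simultaneously satisfiable, and that the triangle inequality survives the recursion, is the crux; everything else is bookkeeping.
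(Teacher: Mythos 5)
Your meta-strategy---an adaptive adversary that answers the deterministic algorithm's $\tilde O(nk)$ distance queries consistently, fixes a completion of the metric at the end, and then invokes determinism to argue the algorithm outputs the same bad solution on the revealed metric---is exactly the right frame and matches the paper. But the actual content of the theorem lies in the construction you leave open, and the specific hierarchical plan you sketch has an unresolved internal contradiction that you yourself flag as ``the crux.'' You need the per-level branching to be about $k$ so that an optimal solution can place one center per group and keep $\OPT$ small, yet you also need $\omega(k \cdot \polylog n)$ hidden choices per level so that $\tilde O(nk)$ queries cannot resolve them; with inter-level distances dominating, having $\gg k$ groups at the top level already forces \emph{every} size-$k$ solution, including the optimal one, to pay the top-level scale for most groups, so the ratio does not grow with the depth. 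Moreover, a pigeonhole that finds \emph{one} level and \emph{one} group where the algorithm is blind can only perturb the cost of that group's share of the instance, which for the $k$-median objective yields at best a constant-factor gap, not a gap of order the number of levels; and it is unclear why a recursion with multiplicative distance scales would ``telescope'' to a factor linear in the depth rather than exponential in it or constant. So step (ii) of your plan, the quantitative heart, is not established, and I do not see how to make a level-by-level hiding argument of this shape produce $\Omega(\log n/(\log k + \log\log n))$.

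The paper avoids the hierarchy entirely and resolves the tension with a single adaptively built graph metric (adapted from Bateni et al.): there is a hub $g^\star$ at distance $\log_M n$ from every point, where $M = 10k\delta\log n$; a query between two ``open'' points is answered $1$, a point is ``closed'' once its degree reaches $M$, and queries involving a closed point are answered by shortest paths in an auxiliary graph where all open pairs have distance $1$. After the algorithm outputs $S$, the adversary additionally fixes all distances from $S$ to $V$ and takes the shortest-path metric. Consistency is \Cref{lem:queried-weights-equal-shortest-paths}. The degree cap gives the \emph{global} property your local pigeonhole lacks: around each of the $k$ returned centers, at most $M(M-1)^{i-1}$ points lie at distance $i$ (\Cref{lem:number-of-neighbours-of-distance-one}), so at least $n/2$ points are at distance $\lfloor \log_M n\rfloor$ from $S$ and $\cost(S) \geq (n/2)\lfloor\log_M n\rfloor$ (\Cref{lem:lower-bound-on-approx-of-Z}); meanwhile only about $n/\log n$ points ever get closed, so any solution containing a single open point costs at most $3n$ because open points are mutually at distance $1$ and closed points are within $2\log_M n$ via the hub (\Cref{lem:upper-bound-on-optimum-k-median}). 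This simultaneously makes the algorithm's cost large and $\OPT$ linear, giving the ratio $\Omega(\log_M n) = \Omega(\log n/(\log\log n + \log k + \log\delta))$; your proposal is missing this mechanism (or an equivalent one), and without it the claimed bound is not proved.
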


\emph{This lower bound establishes a separation between the randomized and deterministic settings for $k$-median}---ruling out the possibility of a deterministic $O(1)$-approximation algorithm that runs in near-optimal $\tilde O(nk)$ time for $k = n^{o(1)}$. For example, when $k = \poly(\log n)$, 
\Cref{thm:main:lower} shows that any deterministic algorithm with a near-optimal running time must have an approximation ratio of $\Omega(\log n / \log \log n)$. 
On the other hand, \Cref{thm:main:fast} gives such an algorithm with an approximation ratio of $O(\log n)$, \emph{which matches the lower bound up to a lower order $O (\log \log n)$ term}.

We prove \Cref{thm:main:lower} by adapting a lower bound on the \emph{query complexity} of dynamic $k$-center given by Bateni et al.~\cite{BateniEFHJMW23}, where the query complexity of an algorithm is the number of queries that it makes to the distance function $d(\cdot, \cdot)$.
Our lower bound holds for any deterministic algorithm with a query complexity of $\tilde O(nk)$.
Since the query complexity of an algorithm is a lower bound on its running time, this gives us \Cref{thm:main:lower}.
In general, establishing a gap between the deterministic and randomized query complexity of a problem is an interesting research direction \cite{MN20}. Our lower bound implies such a gap for $k$-median when $k$ is sufficiently small.

For the special case of $1$-median, Chang showed that, for any constant $\epsilon > 0$, any deterministic algorithm with a running time of $O(n^{1 + \epsilon})$ has an approximation ratio of $\Omega (1/\epsilon)$ \cite{Chang16}. The lower bound for $1$-median by \cite{Chang16} uses very similar techniques to the lower bounds of Bateni et al.~\cite{BateniEFHJMW23}, which we adapt to obtain our result. In \Cref{thm:lower bound}, we provide a generalization of the lower bound in \Cref{thm:main:lower}, giving a similar tradeoff between running time and approximation.

\medskip
\noindent\textbf{Our Results for $k$-Means.}
Another related clustering problem is \textit{metric $k$-means}, where the objective is defined as $\cl(S) := \sum_{x \in V} w(x) \cdot d(x, S)^2$.
For $k$-means, the current state-of-the-art is essentially the same as for $k$-median.
Using randomization, it is known how to obtain a $O(1)$-approximation in $\tilde O(nk)$ time \cite{MettuP02}.
In \Cref{sec:guha alg}, we describe a generalization of the deterministic algorithm of \cite{focs/GuhaMMO00} and show that it works for $k$-means as well as $k$-median, giving a $\poly(\log(n/k))$-approximation for $k$-means in near-optimal $\tilde O(nk)$ time.

Both our algorithm and lower bound for $k$-median extend to $k$-means as well.
The following theorems summarize our results for deterministic $k$-means.
We describe how to extend our results to $k$-means in \Cref{sec:out-k-means}.

\begin{theorem}
    There is a deterministic algorithm for $k$-means that, given metric space of size $n$, computes an $O(\log^2(n/k))$-approximate solution in $\tilde{O}(nk)$ time.
\end{theorem}

\begin{theorem}
    Any deterministic algorithm for $k$-means that runs in $\tilde{O}(nk)$ time when given a metric space of size $n$ has an approximation ratio of
    $$ \Omega \! \left( \left( \frac{\log n}{\log k + \log\log n} \right)^2 \right). $$
\end{theorem}

\subsection{Related Work}\label{sec:related}

Another well-studied metric $k$-clustering problems related to $k$-median and $k$-means is \emph{$k$-center}.
For $k$-center, the situation is quite different. The classic greedy algorithm given by Gonzalez \cite{tcs/Gonzalez85} is deterministic and returns a $2$-approximation in $O(nk)$ time.
It is known that any non-trivial approximation algorithm must run in $\Omega(nk)$ time \cite{BateniEFHJMW23}, and it is NP-Hard to obtain a $(2 - \epsilon)$-approximation for any constant $\epsilon > 0$ \cite{dam/HsuN79}. Thus, this algorithm has an exactly optimal approximation ratio and running time (assuming $\text{P} \neq \text{NP}$).

Many specific cases and generalizations of $k$-median have also been considered. A particularly important line of work considers the specific case of Euclidean spaces. It was recently shown how to obtain a $\poly(1/\epsilon)$-approximation in $\tilde O(n^{1 + \epsilon + o(1)})$ time in such spaces \cite{DupreS24}. They obtain their result by adapting the $\tilde O(n^2)$ time deterministic algorithm of \cite{MettuP00} using locality-sensitive hashing.
The more general \emph{non-metric} $k$-median problem, where the distances between points do not have to satisfy the triangle inequality, has also been considered. Recently, \cite{young25} designed a $\tilde{O}(n^2k)$ time algorithm for computing a \emph{$O(\log(n/k))$-size-approximation}, where the cost of the returned solution is at most the cost of the optimal solution of size $k$ (i.e.~$\OPT_k$) and the size of the solution is at most $O(\log(n/k)) \cdot k$.

The $k$-median problem has also recently received much attention in the \emph{dynamic} setting, where points in the metric space are inserted and deleted over time and the objective is to maintain a good solution. A long line of work \cite{nips/Cohen-AddadHPSS19,HenzingerK20,ourneurips2023,esa/TourHS24,focs/BCLP24,ourstoc25} recently led to a fully dynamic $k$-median algorithm with $O(1)$-approximation and $\tilde O(k)$ update time against adaptive adversaries, giving near-optimal update time and approximation.\footnote{Note that, since we cannot obtain a running time of $o(nk)$ in the static setting, we cannot obtain an update time of $o(k)$ in the dynamic setting.}

\subsection{Organization}
In \Cref{sec:prelim}, we give the preliminaries and describe the notation used throughout the paper.
In \Cref{sec:tech}, we give a technical overview of our results.
We present our algorithm in \Cref{sec:restr,sec:our alg}.
Our lower bound is described in \Cref{sec:lower}.
Finally, in \Cref{sec:out-k-means}, we describe our results for the $k$-means problem.

\section{Preliminaries}\label{sec:prelim}
Let $(V, w, d)$ be a weighted metric space of size $n$, where $w: V \longrightarrow \mathbb R_{\geq 0}$ is a weight function and $d: V \times V \longrightarrow \mathbb R_{\geq 0}$ is a metric satisfying the triangle inequality. The aspect ratio $\Delta$ of the metric space is the ratio of the maximum and minimum non-zero distances in the metric space. We use the notation $\tilde O(\cdot)$ to hide polylogarithmic factors in the size $n$ and the aspect ratio $\Delta$ of the metric space.
Given subsets $S, U \subseteq V$, we define the cost of the solution $S$ with respect to $U$ as
$$ \cost(S, U) := \sum_{x \in U} w(x) \cdot d(x,S), $$
where $d(x, S) := \min_{y \in S} d(x,y)$.\footnote{Note that we do not necessarily require that $S$ is a subset of $U$.} When we are considering the cost of $S$ w.r.t.~the entire space $V$, we abbreviate $\cost(S,V)$ by $\cost(S)$. 
In the $k$-median problem on the metric space $(V,w,d)$, our objective is to find a subset $S \subseteq V$ of size at most $k$ which minimizes $\cost(S)$.
Given an integer $k \geq 1$ and subsets $X, U \subseteq V$, we define the optimal cost of a solution of size $k$ within $X$ with respect to $U$ as
$$\OPT_k(U, X) := \min_{S \subseteq X, \, |S| = k} \cost(S, U).$$
When $X$ and $U$ are the same, we abbreviate $\OPT_k(U,X)$ by $\OPT_k(U)$.
Thus, the optimal solution to the $k$-median problem on the metric space $(V,w,d)$ has cost $\OPT_k(V)$.
For any $U \subseteq V$, we denote the metric subspace obtained by considering the metric $d$ and weights $w$ restricted to only the points in $U$ by $(U,w,d)$.

\medskip
\noindent \textbf{The Projection Lemma.} Given sets $A, B \subseteq V$, we let $\proj(A,B)$ denote the projection of $A$ onto the set $B$, which is defined as the subset of points $y \in B$ such that some point $x \in A$ has $y$ as its closest point in $B$ (breaking ties arbitrarily). In other words, we define
$ \proj(A,B) := \left\{\arg \min_{y \in B} d(y, x) \; \middle| \; x \in A\right\}$.
We use the following well-known \emph{projection lemma} throughout the paper, which allows us to upper bound the cost of the projection $\proj(A,B)$ in terms of the costs of $A$ and $B$ \cite{GuptaT08, ChrobakKY06}.
\begin{lemma}\label{lem:projection}
    For any subsets $A, B \subseteq V$, we have that $\cost(\proj(A,B)) \leq \cost(B) + 2  \cdot \cost(A)$.
\end{lemma}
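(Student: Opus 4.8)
The plan is to exhibit a single feasible ``witness'' solution whose cost we can bound, namely $C := \proj(A,B)$ itself, and to charge the cost of every point $x \in V$ to a short two-leg path: first from $x$ to its nearest point $b(x) \in B$, and then from $b(x)$ to a point of $C$ that is guaranteed to be close to $b(x)$. The key structural observation is that for every $b \in B$ that is the $B$-nearest point of some $a \in A$, the point $b$ itself lies in $C$; and more usefully, for \emph{every} $b \in B$ we can find a point of $C$ within distance $\dist(b, \proj(A,B))$ controlled by how far $b$'s ``pre-image'' region is from $A$. Concretely, let $a(x) \in A$ be a nearest point of $A$ to $x$, and let $c(x) \in C$ be the $B$-projection of $a(x)$, i.e.\ the nearest point of $B$ to $a(x)$; by definition of $\proj(A,B)$ we have $c(x) \in C$. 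I would then bound $d(x, C) \le d(x, c(x))$ and apply the triangle inequality twice.

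The main computation is the triangle-inequality chain. For a fixed $x \in V$, write $b(x)$ for a nearest point of $B$ to $a(x)$, so $b(x) = c(x) \in C$. Then
\begin{align*}
d(x, C) \;\le\; d(x, b(x)) \;&\le\; d(x, a(x)) + d(a(x), b(x)) \\
&\le\; d(x, a(x)) + d(a(x), \beta(x)),
\end{align*}
where $\beta(x)$ denotes the nearest point of $B$ to $x$; the last step uses that $b(x)$ is the \emph{nearest} point of $B$ to $a(x)$, so $d(a(x), b(x)) \le d(a(x), \beta(x)) \le d(a(x), x) + d(x, \beta(x))$. Combining, $d(x, C) \le 2\, d(x, A) + d(x, B)$. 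Multiplying by $w(x)$ and summing over all $x \in V$ gives $\cost(\proj(A,B)) \le 2\,\cost(A) + \cost(B)$, as claimed.

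I expect the only subtle point to be making sure the right point of $C$ is chosen for each $x$: one must project $a(x)$ (not $x$) onto $B$ so that the resulting point is \emph{guaranteed} to lie in $\proj(A,B)$, and then pay the extra $d(x, a(x))$ term to get back to $x$ — this is exactly where the factor $2$ on $\cost(A)$ comes from (one copy of $d(x,A)$ to reach $A$, and one more hidden inside bounding $d(a(x), b(x))$ by $d(a(x), \beta(x))$). There are no issues with the weights, ties in $\arg\min$ can be broken arbitrarily since we only ever use inequalities, and the bound is entirely pointwise before summation, so no global optimization argument is needed.
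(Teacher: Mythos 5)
Your proposal is correct and follows essentially the same pointwise triangle-inequality argument as the paper: route $x$ through its nearest point $a(x) \in A$, use that the $B$-projection of $a(x)$ lies in $\proj(A,B)$, and bound $d(a(x), B) \leq d(a(x), x) + d(x, B)$ before summing with weights. The paper phrases the middle step via the nearest point of $C$ to $a(x)$ rather than the projection of $a(x)$ onto $B$, but the underlying inequality and the source of the factor $2$ are identical.
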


\begin{proof}
    Let $C$ denote $\proj(A,B)$. Let $x \in V$ and let $y^\star$ and $y$ be the closest points to $x$ in $A$ and $B$ respectively. Let $y'$ be the closest point to $y^\star$ in $C$. Then we have that
    $$ d(x,C) \leq d(x, y') \leq d(x, y^\star) + d(y^\star, y') \leq d(x, y^\star) + d(y^\star, y) \leq d(x, y) + 2 \cdot d(x, y^\star), $$
    and so $d(x, C) \leq d(x, B) + 2 \cdot d(x, A)$. It follows that 
    $$\cost(C) = \sum_{x \in V} w(x) d(x,C) \leq \sum_{x \in V} w(x) (d(x, B) + 2 \cdot d(x, A)) = \cost(B) + 2  \cdot \cost(A).\qedhere$$
\end{proof}

The following well-known corollary of the projection lemma shows that, for any set $U \subseteq V$, the optimal cost of the $k$-median problem in $(U,w,d)$ changes by at most a factor of $2$ if we are allowed to place centers anywhere in $V$.

\begin{corollary}\label{cor:improper}
    For any subset $U \subseteq V$, we have that $\OPT_k(U) \leq 2 \cdot \OPT_k(U,V)$.
\end{corollary}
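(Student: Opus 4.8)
The plan is to take an optimal ``improper'' solution for the right-hand side and project it into $U$, losing only a factor of $2$ by the projection lemma. Concretely, let $B^\star \subseteq V$ with $|B^\star| \le k$ be a solution attaining $\cost(B^\star, U) = \OPT_k(U, V)$, and set $C := \proj(B^\star, U)$. By construction $C \subseteq U$ and $|C| \le |B^\star| \le k$, so $C$ (padded to exactly $k$ points of $U$ if necessary, which only decreases its cost) is a feasible solution for the $k$-median problem in the subspace $(U, w, d)$, and hence $\OPT_k(U) \le \cost(C, U)$. It therefore suffices to show $\cost(C, U) \le 2 \cdot \cost(B^\star, U)$.

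For this I would apply (a restricted version of) \Cref{lem:projection} with $A = B^\star$ and $B = U$, but measuring cost only with respect to the points of $U$ rather than all of $V$. The key observation is that $\cost(U, U) = \sum_{x \in U} w(x) \cdot d(x, U) = 0$, since every point of $U$ lies at distance zero from $U$ itself; so the ``$\cost(B)$'' term in the projection lemma vanishes and we are left with $\cost(C, U) \le 2 \cdot \cost(B^\star, U)$. Chaining the two inequalities gives $\OPT_k(U) \le \cost(C, U) \le 2 \cdot \cost(B^\star, U) = 2 \cdot \OPT_k(U, V)$, as desired.

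The only real subtlety is that \Cref{lem:projection} as stated concerns $\cost(\cdot) = \cost(\cdot, V)$, whereas here we need the analogous bound for $\cost(\cdot, U)$. This is not a genuine obstacle: one can either re-run the proof of \Cref{lem:projection} verbatim while summing only over $x \in U$, or apply \Cref{lem:projection} in the reweighted metric space $(V, w', d)$ where $w'$ agrees with $w$ on $U$ and is zero elsewhere (so that $\cost_{w'}(\cdot) = \cost(\cdot, U)$ and $\cost_{w'}(U) = 0$). Either way, the one-point estimate driving it is: for $x \in U$, if $y^\star$ is $x$'s closest point in $B^\star$ and $c$ is $y^\star$'s closest point in $U$, then $c \in C$ and, since $x \in U$, we have $d(y^\star, c) \le d(y^\star, x)$, giving $d(x, C) \le d(x, c) \le d(x, y^\star) + d(y^\star, c) \le 2 \, d(x, B^\star)$; summing this against $w$ over $U$ completes the argument.
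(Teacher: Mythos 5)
Your proof is correct and follows essentially the same route as the paper: project the optimal improper solution onto $U$ and use the pointwise estimate behind \Cref{lem:projection}, noting that the $d(x,U)$ (equivalently $\cost(U,U)$) term vanishes for $x \in U$. Your extra care about the lemma being stated for $\cost(\cdot, V)$ rather than $\cost(\cdot, U)$ is exactly what the paper handles implicitly by invoking the pointwise inequality $d(x, \proj(S^\star_V, U)) \leq d(x,U) + 2\, d(x, S^\star_V)$ directly.
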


\begin{proof}
    Let $S_V^\star$ be a subset of $V$ of size at most $k$ that minimizes $\cost(S_V^\star, U)$ and let $S_U^\star = \proj(S_V^\star, U)$. Then, for any $x \in U$, it follows from \Cref{lem:projection} that $d(x, S^\star_U) \leq d(x,U) + 2\cdot d(x, S^\star_V) = 2\cdot d(x, S^\star_V)$, which implies the corollary.
\end{proof}

\section{Technical Overview}\label{sec:tech}

We begin by describing the hierarchical partitioning approach used by Guha et al.~\cite{focs/GuhaMMO00} to obtain a $\poly(\log(n/k))$-approximation algorithm with near-optimal running time. We then discuss the limitations of this approach and describe how we overcome these limitations to obtain our result.

\subsection{The Hierarchical Partitioning Framework}

Guha et al.~\cite{focs/GuhaMMO00} showed how to combine an $\tilde O(n^2)$ time $k$-median algorithm with a simple hierarchical partitioning procedure in order to produce a faster algorithm---while incurring some loss in the approximation. Their approach is based on the following divide-and-conquer procedure: 
\begin{enumerate}
    \item Partition the metric space $(V,w,d)$ into $q$ metric subspaces $(V_1,w,d), \dots, (V_q,w,d)$.
    \vspace{-0.2cm}
    \item Solve the $k$-median problem on each subspace $(V_i,w,d)$ to obtain a solution $S_i \subseteq V_i$.
    \vspace{-0.2cm}
    \item Combine the solutions $S_1,\dots,S_q$ to get a solution $S$ for the original space $(V,w,d)$.\label{step:merge}
\end{enumerate}
The main challenge in this framework is implementing Step~\ref{step:merge}---finding a good way to merge the solutions from the subspaces into a solution for the original space. To implement this step, they prove the following lemma, which, at a high level, shows how to use the solutions $S_i$ to construct a \emph{sparsifier} for the metric space $(V,w,d)$ that is much smaller than the size of the space.

\begin{lemma}[\cite{focs/GuhaMMO00}]\label{lem:intro:sparse}
    Suppose that each solution $S_i$ is a $\beta$-approximate solution to the $k$-median problem in $(V_i, w,d)$.
    Let $V' = \bigcup_i S_i$ and, for each $y \in S_i$, let $w'(y)$ denote the total weight of points in $V_i$ that are assigned to $y$ in the solution $S_i$. Then any $\alpha$-approximate solution $S$ to the $k$-median problem in the space $(V', w', d)$ is a $O(\alpha\beta)$-approximation in the space $(V,w,d)$.
\end{lemma}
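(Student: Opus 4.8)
The plan is to show that $(V',w')$ behaves like a \emph{coreset} for $(V,w,d)$: the cost of any size-$\le k$ solution transfers between the two instances up to a small additive error, and the optimum of the sparsified instance is itself only $O(\beta)\cdot\OPT_k(V)$.

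First I would set up the bookkeeping. For $x \in V$ let $V_i$ be the (unique) part containing $x$ and let $\sigma(x) \in S_i \subseteq V'$ be the center that $x$ is assigned to in the solution $S_i$; write $L := \sum_{x \in V} w(x)\,d(x,\sigma(x)) = \sum_i \cost(S_i, V_i)$ for the total ``sparsification error''. By construction $w'(y) = \sum_{x:\sigma(x)=y} w(x)$, so for every $T \subseteq V$ we have the identity $\cost(T,V') = \sum_{x \in V} w(x)\, d(\sigma(x), T)$. Applying the triangle inequality $d(x,T) \le d(x,\sigma(x)) + d(\sigma(x),T)$ and its symmetric version then yields the two-sided ``cost transfer'' bound $|\cost(T,V) - \cost(T,V')| \le L$ for every $T$; this is the elementary core of the argument.

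Next I would bound the two quantities $L$ and $\OPT_k(V',w')$ in terms of $\OPT_k(V)$. Fix an optimal solution $S^\star$ with $\cost(S^\star,V) = \OPT_k(V)$. For each $i$, \Cref{cor:improper} applied in the subspace $(V_i,w,d)$ gives $\OPT_k(V_i) \le 2\,\OPT_k(V_i,V) \le 2\,\cost(S^\star, V_i)$, and since $S_i$ is $\beta$-approximate in $(V_i,w,d)$ we get $\cost(S_i,V_i) \le \beta\,\OPT_k(V_i) \le 2\beta\,\cost(S^\star,V_i)$; summing over $i$ and using that the $V_i$ partition $V$ gives $L \le 2\beta\,\OPT_k(V)$. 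For the sparsified optimum, take the valid size-$\le k$ solution $\proj(S^\star,V') \subseteq V'$: by \Cref{lem:projection} (with $A = S^\star$, $B = V'$) we have $\cost(\proj(S^\star,V'),V) \le \cost(V',V) + 2\,\OPT_k(V)$, and $\cost(V',V) = \sum_x w(x)\,d(x,V') \le L$ since $\sigma(x) \in V'$; combining with the cost-transfer bound, $\OPT_k(V',w') \le \cost(\proj(S^\star,V'),V') \le L + \cost(\proj(S^\star,V'),V) \le 2L + 2\,\OPT_k(V) = O(\beta)\,\OPT_k(V)$.

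Finally I would assemble the pieces: if $S \subseteq V'$ is $\alpha$-approximate in $(V',w',d)$, then $\cost(S,V) \le L + \cost(S,V') \le L + \alpha\,\OPT_k(V',w') \le L + \alpha\cdot O(\beta)\,\OPT_k(V) = O(\alpha\beta)\,\OPT_k(V)$, using $L \le 2\beta\,\OPT_k(V)$; since $S$ has size $\le k$ and lies in $V$, it is an $O(\alpha\beta)$-approximation in $(V,w,d)$. I expect the only real subtlety to be the repeated use of the projection lemma to move between ``centers restricted to a subspace'' and ``centers allowed anywhere'' — bounding $\OPT_k(V_i)$ by $\cost(S^\star,V_i)$ and $\OPT_k(V',w')$ by a multiple of $\OPT_k(V)$ — which is precisely where the constant factors are incurred; everything else is triangle-inequality accounting.
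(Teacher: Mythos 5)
Your proof is correct, and it is essentially the standard argument that the paper relies on (it cites the lemma and, in the appendix, invokes the mapping-based version \Cref{lem:bicri=sparsifier} from \cite{focs/BCLP24}, whose proof is exactly this triangle-inequality cost transfer plus the projection lemma to bound the sparsified optimum). Your use of \Cref{cor:improper} to turn the per-part guarantee into $L \le 2\beta\cdot\OPT_k(V)$ even reproduces the same constants, namely $2\alpha + (1+2\alpha)\cdot 2\beta$, so there is nothing to add.
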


Using \Cref{lem:intro:sparse}, we can compute a (weighted) subspace $(V', w', d)$ that has size only $\sum_i |S_i| = O(kq)$. Crucially, we have the guarantee that any good solution that we find within this subspace is also a good solution in the space $(V,w,d)$. Thus, we can then use the deterministic $O(1)$-approximate $\tilde O(n^2)$ time $k$-median algorithm of $\cite{MettuP00}$ to compute a solution $S$ for $(V', w', d)$ in $\tilde O(k^2 q^2)$ time.

\medskip
\noindent \textbf{A 2-Level Hierarchy.} Suppose we run this divide-and-conquer framework for one step (i.e.~without recursing on the subspaces $(V_i, w, d)$) and just compute the solutions $S_i$ for $(V_i, w, d)$ using the $\tilde O(n^2)$ time algorithm of \cite{MettuP00}. It follows immediately from \Cref{lem:intro:sparse} that the approximation ratio of the final solution $S$ is $O(1)$. We can also observe that, up to polylogarithmic factors, the total time taken to compute the $S_i$ is $\simeq q \cdot (n/q)^2 = n^2/q$, since the size of each subspace is $O(n/q)$. Furthermore, the time taken to compute $S$ is $\tilde O( k^2q^2)$. By taking $q = (n/k)^{2/3}$, we get that the total running time of the algorithm is $\tilde O(nk \cdot (n/k)^{1/3})$, giving a polynomial improvement in the running time for $k \ll n$.

\medskip
\noindent \textbf{An $\ell$-Level Hierarchy.} Now, suppose we  run this framework for $\ell$ steps. To balance the running time required to compute the solutions at each level of this divide-and-conquer procedure, we want to subdivide each metric subspace at depth $i$ in the recursion tree into (roughly) $q_i = (n/k)^{2^{-i}}$ further subspaces. Guha et al.~show that the running time of this algorithm is $\tilde O(nk \cdot (n/k)^{2^{-\ell}})$. By \Cref{lem:intro:sparse}, we can also see that the approximation ratio of the final solution $S$ is $2^{O(\ell)}$. Setting $\delta = (n/k)^{2^{-\ell}}$, we get the following theorem.
\begin{theorem}\label{thm:intro:guha}
    There is a deterministic algorithm for $k$-median that, given a metric space of size $n$, computes a $\poly(\log(n/k) / \log \delta)$-approximation in $\tilde O(nk\delta)$ time, for any $2 \leq \delta \leq n/k$.
\end{theorem}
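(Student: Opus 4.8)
The plan is to spell out the $\ell$-level hierarchy sketched above, with the recursion depth tuned so that the leaf subspaces have size $\Theta(k\delta)$. Concretely, set $\ell := \lceil \log_2(\log(n/k)/\log\delta) \rceil$, and build a recursion tree as follows: a subspace at depth $i < \ell$ is partitioned, arbitrarily and into pieces of size as equal as possible, into $q_{i+1} := \lceil (n/k)^{2^{-(i+1)}} \rceil$ child subspaces, while a subspace at depth $\ell$ is a leaf. A telescoping product gives $\prod_{j \le i} q_j = \Theta((n/k)^{1 - 2^{-i}})$, so a depth-$i$ subspace has size $n_i = \Theta(k \cdot (n/k)^{2^{-i}})$; in particular every leaf has size $n_\ell = \Theta(k\delta)$, and since $n \ge k$ and $\delta \ge 2$ we have $q_i \ge \delta \ge 2$ throughout, so the tree is well-defined and has $\ell = O(\log\log(n/k))$ levels.

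The algorithm then proceeds bottom-up. On each leaf subspace of size $m = \Theta(k\delta)$ it runs the deterministic $O(1)$-approximate $\tilde O(m^2)$-time algorithm of Mettu and Plaxton \cite{MettuP00} to obtain a solution $S_i$. Processing internal subspaces from depth $\ell - 1$ down to $0$, at each internal subspace it forms the weighted sparsifier $(V', w', d)$ of \Cref{lem:intro:sparse} from the solutions of its $q_{i+1}$ children (which has size $O(k q_{i+1})$), runs the algorithm of \cite{MettuP00} on it, and takes the result as that subspace's solution. The output is the solution computed at the root.

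For the approximation I would induct on depth: leaf solutions are $O(1)$-approximate by \cite{MettuP00}, and if every depth-$(i+1)$ solution is $\beta$-approximate then, since we solve the sparsifier to within an $O(1)$ factor, \Cref{lem:intro:sparse} makes each depth-$i$ solution $O(\beta)$-approximate. Hence the root solution is $2^{O(\ell)}$-approximate, and $2^{O(\ell)} = \poly(\log(n/k)/\log\delta)$ by the choice of $\ell$. For the running time I would sum level by level. The leaves contribute $\Theta(n/n_\ell)$ calls of cost $\tilde O(n_\ell^2)$ each, i.e.\ $\tilde O(n \cdot n_\ell) = \tilde O(nk\delta)$ since $n_\ell = O(k\delta)$. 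The merges at depth $i$ contribute $\Theta(n/n_i)$ calls of cost $\tilde O((k q_{i+1})^2)$ each; substituting the expressions for $n_i$ and $q_{i+1}$, the exponents of $n/k$ cancel and this is $\tilde O(nk)$ per level, hence $\tilde O(nk \cdot \ell) = \tilde O(nk)$ over all merge levels. Since the aspect ratio of every subspace and every sparsifier is at most $\Delta$, the $\tilde O(\cdot)$ notation is used consistently, and the total running time is $\tilde O(nk\delta)$.

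The step I expect to be the main obstacle is the running-time balancing: one must pick the branching factors so that the leaf work and the merge work at each of the $\ell$ levels are simultaneously $\tilde O(nk\delta)$, and this is precisely what forces the doubly-exponential schedule $q_i = \Theta((n/k)^{2^{-i}})$ together with the choice of $\ell$. Everything else---the fact that the partition at each node can be arbitrary, the rounding of the $q_i$ to integers, slightly uneven piece sizes, and the degenerate cases $\delta = 2$ and $\delta = n/k$ (where the hierarchy becomes trivial)---only introduces constant-factor slack and is routine to check.
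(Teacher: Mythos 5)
Your proposal is correct and follows essentially the same route as the paper's proof in the appendix: the same doubly-exponential branching schedule $q_i = \lceil (n/k)^{2^{-i}}\rceil$ with $\ell = \lceil \log_2(\log(n/k)/\log\delta)\rceil$ levels, bottom-up calls to the Mettu--Plaxton algorithm, the sparsifier lemma applied once per level to get a $2^{O(\ell)}$ approximation, and the same level-by-level time accounting ($\tilde O(nk\delta)$ at the leaves, $\tilde O(nk)$ per merge level). The only difference is presentational: the paper formalizes solutions as mappings with explicitly aggregated weights and invokes the bicriteria-sparsifier lemma with explicit constants, which is the detail your per-subspace induction leaves implicit.
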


Setting $\delta = O(1)$, we get immediately the following corollary.

\begin{corollary}
    There is a deterministic algorithm for $k$-median that, given a metric space of size $n$, computes a $\poly(\log(n/k))$-approximation in $\tilde O(nk)$ time.
\end{corollary}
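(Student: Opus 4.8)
The plan is to obtain this immediately from \Cref{thm:intro:guha} by fixing the parameter $\delta$ to a constant. First I would dispose of the regime $k > n/2$: there $\tilde O(nk) = \tilde O(n^2)$, so running the deterministic $O(1)$-approximate $\tilde O(n^2)$-time algorithm of Mettu and Plaxton \cite{MettuP00} already proves the claim (an $O(1)$-approximation is a $\poly(\log(n/k))$-approximation, with the usual convention that this quantity is read as $\Omega(1)$ when $n/k = O(1)$).

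In the remaining case $k \le n/2$, the choice $\delta = 2$ satisfies the hypothesis $2 \le \delta \le n/k$ of \Cref{thm:intro:guha}. Applying the theorem then gives a deterministic algorithm with running time $\tilde O(nk \cdot 2) = \tilde O(nk)$ and approximation ratio $\poly(\log(n/k)/\log 2) = \poly(\log(n/k))$, which is exactly what is claimed.

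There is no substantive obstacle in the corollary itself: all of the difficulty lies in proving \Cref{thm:intro:guha}, i.e.\ in setting up the $\ell$-level hierarchical partitioning with $\ell = \Theta(\log\log(n/k))$ (so that the subspaces at the bottom level have $O(1)$ branching factor) and in bounding both the accumulated approximation loss and the total running time across the $\ell$ levels via \Cref{lem:intro:sparse}. The corollary is simply the instantiation $\delta = O(1)$ of that result.
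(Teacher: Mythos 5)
Your proposal is correct and matches the paper's own argument, which simply instantiates \Cref{thm:intro:guha} with $\delta = O(1)$; your additional handling of the regime $k > n/2$ via the $\tilde O(n^2)$-time algorithm of \cite{MettuP00} is a sensible (if implicit in the paper) tidying of the boundary case where $\delta = 2$ would violate $\delta \le n/k$.
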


We remark that the results in \cite{focs/GuhaMMO00} are presented differently and only claim an approximation ratio of $\poly(\log n /\log \delta)$. In \Cref{sec:guha alg}, we describe a generalization of their algorithm, proving \Cref{thm:intro:guha} and also showing that it extends to $k$-means.

\subsection{The Barrier to Improving The Approximation}

While the sparsification technique that is described in \Cref{lem:intro:sparse} allows us to obtain much faster algorithms by sparsifying our input in a hierarchical manner, this approach has one major drawback. Namely, the fact that sparsifying the input in this manner also leads to an approximation ratio that scales exponentially with the number of levels in the hierarchy. Unfortunately, this exponential growth seems unavoidable with this approach. This leads us to the following question.
\begin{wrapper}
\begin{question}\label{Q2}
\begin{center}
Is there a different way to combine the solutions $S_1, \dots,S_q$ that does \textbf{\emph{not}} lead to exponential deterioration in the approximation?
\end{center}
\end{question}
\end{wrapper}

\subsection{Idea I: Sparsification via Restricted $k$-Median}

Very recently, Bhattacharya et al.~introduced the notion of ``restricted $k$-clustering'' in order to design efficient and consistent dynamic clustering algorithms \cite{focs/BCLP24}. The \textbf{restricted $k$-median} problem on the space $(V,w,d)$ is the same as the $k$-median problem, except that we are also given a subset $X \subseteq V$ and have the additional restriction that our solution $S$ must be a subset of $X$. Crucially, even though the algorithm can only place centers in the solution $S$ within $X$, it receives the entire space $V$ as input and computes the cost of the solution $S$ w.r.t.~the entire space. 

The restricted $k$-median problem allows us to take a different approach toward implementing Step~\ref{step:merge} of the divide-and-conquer framework. Instead of compressing the entire space $(V,w,d)$ into only $O(kq)$ many weighted points, we restrict the output of the solution $S$ to these $O(kq)$ points but still consider the rest of the space while computing the cost of the set $S$.
This can be seen as a less aggressive way of sparsifying the metric space, where we lose less information.
It turns out that this approach allows us to produce solutions of higher quality, where the approximation scales linearly in the number of levels in the hierarchy instead of exponentially.

Efficiently implementing this new hierarchy is challenging since we need to design a fast deterministic algorithm for restricted $k$-median with the appropriate approximation guarantees.
To illustrate our approach while avoiding this challenge, we first describe a simple version of our algorithm with improved approximation and near-optimal \emph{query complexity}. We later show how to design such a restricted algorithm, allowing us to implement this algorithm efficiently.

\subsection{Our Algorithm With Near-Optimal Query Complexity}\label{sec:low query}

Let $(V, w, d)$ be a metric space of size $n$ and $k \leq n$ be an integer.
We define the value $\ell := \log_2(n/k)$, which we use to describe our algorithm.
Our algorithm works in the following 2 \emph{phases}.

\medskip
\noindent
\textbf{Phase I:} In the first phase of our algorithm, we construct a sequence of partitions $Q_0,\dots, Q_\ell$ of the metric space $V$, such that the partition $Q_i$ is a \emph{refinement} of the partition $Q_{i-1}$.\footnote{i.e.~for each element $X \in Q_{i-1}$, there are elements $X_1,\dots, X_q \in Q_i$ such that $X=X_1\cup \dots \cup X_q$.} We let $Q_0 := \{V\}$. Subsequently, for each $i = 1,\dots,\ell$, we construct the partition $Q_i$ by arbitrarily partitioning each $X \in Q_{i-1}$ into subsets $X_1$ and $X_{2}$ of equal size and adding these subsets to $Q_i$.\footnote{Note that it might not be possible for the subsets $X_1$ and $X_{2}$ to have equal size.
For simplicity, we ignore this detail in the technical overview.}
For $X \in Q_{i-1}$, we define $\mathcal P(X) := \{X' \in Q_{i} \mid X' \subseteq X\}$.

\medskip
\noindent
\textbf{Phase II:}
The second phase of our algorithm proceeds in \emph{iterations}, where we use the partitions $\{Q_i\}_i$ to compute the solution in a bottom-up manner.
Let $V_{\ell + 1}$ denote the set of points $V$.
For each $i = \ell ,\dots, 0$, our algorithm constructs $V_i$ as follows:

\begin{wrapper}
    For each $X \in Q_i$, let $S_X$ be the optimal solution to the $k$-median problem in the metric space $(X,w,d)$ such that $S_X \subseteq X \cap V_{i+1}$. Let $V_i := \bigcup_{X \in Q_i} S_X$.
\end{wrapper}

\medskip
\noindent
\textbf{Output:}
The set $V_0$ is the final output of our algorithm.

\medskip

The following theorem summarizes the behaviour of this algorithm.

\begin{theorem}\label{thm:intro:query}
    There is a deterministic algorithm for $k$-median that, given a metric space of size $n$, computes a $O(\log(n/k))$-approximation with $\tilde O(nk)$ queries (but exponential running time).
\end{theorem}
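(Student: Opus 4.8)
The plan is to analyze the two-phase algorithm described above and prove both the approximation guarantee and the query complexity bound. For the query complexity, the observation is that at each iteration $i$, we solve $k$-median instances on the spaces $(X, w, d)$ for $X \in Q_i$, restricted to centers in $X \cap V_{i+1}$. Since $|V_{i+1} \cap X|$ is small (roughly $k$ times the number of children, which is $2k$ by the refinement structure, except at the bottom where $|X| \leq 2k$), and $\sum_{X \in Q_i} |X| = n$, the total number of distance queries needed to evaluate all candidate solutions at level $i$ is $\tilde O(nk)$; summing over the $\ell+1 = O(\log(n/k))$ levels still gives $\tilde O(nk)$ (the $\log$ factor is absorbed by $\tilde O$). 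The exponential running time comes from brute-forcing the optimal restricted $k$-median solution at each node, but this does not affect query complexity.

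The heart of the proof is the approximation analysis, and this is where I expect the main obstacle to lie. The key structural fact to establish is a per-level cost bound: I would show that for each $i$, $\cost(V_i) \leq \cost(V_{i+1}) + c \cdot \OPT_k(V)$ for some absolute constant $c$, which telescopes to $\cost(V_0) \leq \cost(V_{\ell+1}) + c(\ell+1)\OPT_k(V) = O(\log(n/k)) \cdot \OPT_k(V)$ since $V_{\ell+1} = V$ contributes $\cost(V) = 0$ trivially — wait, more carefully, $V_{\ell+1} = V$ so $\cost(V_{\ell+1}) = 0$, and each of the $\ell+1$ steps adds $O(\OPT_k(V))$, giving the claimed bound. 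To prove the per-level bound, fix the optimal global solution $S^\star$ with $|S^\star| \leq k$ and $\cost(S^\star) = \OPT_k(V)$. For each $X \in Q_i$, the solution $S_X$ is the \emph{optimal} restricted solution with centers in $X \cap V_{i+1}$, so $\cost(S_X, X) \leq \cost(T_X, X)$ for any competitor $T_X \subseteq X \cap V_{i+1}$ with $|T_X| \leq k$. The natural competitor is $T_X := \proj(S^\star, X \cap V_{i+1})$ — but we must be careful: $S^\star$ lives in $V$, not necessarily in $V_{i+1}$, so I would instead project in two steps, or observe that a good competitor is obtained by projecting $S^\star$ onto $X$ and then onto $V_{i+1}$, invoking \Cref{lem:projection} / \Cref{cor:improper} to control the cost blowup at each projection. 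Summing $\cost(S_X, X)$ over the partition $Q_i$ recovers $\cost(V_i)$ since the $X$'s partition $V$ and each point's distance to $V_i$ is at most its distance to $S_X$ for its own block $X$.

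The technical subtlety — and likely the crux — is handling the interaction between the restriction sets $V_{i+1}$ across levels and ensuring the competitor argument composes without losing more than a constant factor per level. Concretely, I would argue inductively that $V_{i+1}$ is a $\beta_{i+1}$-sparsifier in the sense that projecting $S^\star$ onto $V_{i+1}$ (globally, or within each relevant block) costs at most $O(i) \cdot \OPT_k(V)$ extra, and then the additional projection onto $X \cap V_{i+1}$ for the block-local competitor costs only $O(1) \cdot \OPT_k(V)$ more when summed over $X \in Q_i$ — this is exactly the place where the \emph{restricted} formulation buys linear rather than exponential growth, because we compute $\cost(S_X, X)$ against the true points of $X$ rather than against a reweighted compression. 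I would make this precise by defining, for each level, the quantity $\Phi_i := \cost(V_i) - \OPT_k(V) \cdot (\text{number of steps taken})$ and showing $\Phi_i$ is non-increasing, or more directly just proving the one-step inequality $\cost(V_i) \leq \cost(V_{i+1}) + O(\OPT_k(V))$ by exhibiting, for each $X \in Q_i$, the competitor $T_X = \proj(\proj(S^\star, V_{i+1}) \cap X, X)$ or similar and bounding $\sum_X \cost(T_X, X)$ via two applications of \Cref{lem:projection}. Finally I would assemble: query complexity $\tilde O(nk)$ from the level-by-level counting, approximation $O(\ell) = O(\log(n/k))$ from the telescoping, $|V_0| \leq k$ since each $S_X$ has size $\leq k$ and $|Q_0| = 1$, and determinism since every choice (partition into halves, optimal restricted solution) is made deterministically. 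The main obstacle remains getting the competitor/projection bookkeeping to yield a clean additive $O(\OPT_k(V))$ per level with no hidden dependence on $i$ or on the block sizes.
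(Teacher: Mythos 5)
Your overall strategy is the same as the paper's---a level-by-level telescoping argument that loses an additive $O(1)\cdot\OPT_k(V)$ per level via the projection lemma, together with the same per-level query counting---but the inductive statement you propose to telescope has a genuine gap. You want the one-step bound $\cost(V_i) \leq \cost(V_{i+1}) + c\cdot\OPT_k(V)$. The competitor/projection argument does not prove this statement. What it proves is the corresponding inequality for the \emph{block-wise sums}: writing $A_i := \sum_{X\in Q_i}\cost(S_X,X)$, the optimality of $S_X$ against the feasible competitor $T_X := \proj(S^\star_X,\, X\cap V_{i+1})$ (where $S^\star_X$ is an optimal $k$-median solution of $(X,w,d)$, or alternatively the global optimum $S^\star$) together with \Cref{lem:projection} gives $\cost(S_X,X) \leq \cost(X\cap V_{i+1},X) + 2\cdot\cost(S^\star_X,X)$, and since every $x\in X'\in\mathcal P(X)$ satisfies $d(x, X\cap V_{i+1}) \leq d(x, S_{X'})$, summing over $X\in Q_i$ yields $A_i \leq A_{i+1} + O(1)\cdot\OPT_k(V)$ (using $\sum_{X\in Q_i}\OPT_k(X)\leq 2\cdot\OPT_k(V)$ from \Cref{cor:improper}, or avoiding it by using $S^\star$ directly). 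Your version with $\cost(V_{i+1})$ on the right-hand side is strictly stronger than what this delivers: $\cost(V_{i+1})\leq A_{i+1}$ because in $\cost(V_{i+1},V)$ a point of $X'$ may be served by a center of a different block, so to close your induction you would additionally need $A_{i+1}\leq \cost(V_{i+1}) + O(\OPT_k(V))$, for which you give no argument. The repair is to telescope $A_i$ itself, note that $\cost(V_0)=A_0$ because $Q_0=\{V\}$ (your statement that summing $\cost(S_X,X)$ ``recovers'' $\cost(V_i)$ is only an upper bound at intermediate levels), and finish with the base case $A_\ell=\sum_{X\in Q_\ell}\OPT_k(X)\leq 2\cdot\OPT_k(V)$; this is precisely \Cref{claim:apx:1}, \Cref{lem:apx:2} and \Cref{eq:myeqq2} in the paper.

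A second, smaller issue is the competitor itself: you hedge between several candidates, and the concrete one you write down, $\proj(\proj(S^\star,V_{i+1})\cap X,\, X)$, is not even feasible for the restricted problem on $X$ (it is a subset of $X$ rather than of $X\cap V_{i+1}$, and $\proj(S^\star,V_{i+1})\cap X$ may be empty). The worry that ``$S^\star$ lives in $V$, not necessarily in $V_{i+1}$'' is a non-issue: $\proj(S^\star, X\cap V_{i+1})$ is by definition a subset of $X\cap V_{i+1}$ of size at most $k$, and the pointwise inequality behind \Cref{lem:projection} applies regardless of where $S^\star$ lies, so a single projection suffices---no two-step projection and no separate ``sparsifier'' invariant for $V_{i+1}$ is needed. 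Your query-complexity accounting, the $|V_0|\leq k$ observation, and the determinism remark are fine and match the paper.
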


We now sketch the proof of \Cref{thm:intro:query} by outlining the analysis of the approximation ratio and query complexity. The formal proof follows from the more general analysis in \Cref{sec:anal:ouralg}.

\subsubsection*{Approximation Ratio}

To bound the cost of the solution $V_0$ returned by our algorithm, we first need to be able to relate the costs of a solution in the hierarchy to the costs of the solutions in the subsequent levels.
Given any set $X$ within a partition $Q_i$, the following key claim establishes the relationship between the cost of the solution $S_X$ on the metric subspace $(X,w,d)$ and the costs of the solutions $\{S_{X'}\}_{X' \in \mathcal P(X)}$ on the metric subspaces $\{(X',w,d)\}_{X' \in \mathcal P(X)}$.

\begin{claim}\label{claim:apx:1}
For any set $X \in \bigcup_{i=0}^{\ell - 1} Q_i$, we have that
$$ \cost(S_X, X) \leq \sum_{X' \in \mathcal P(X)} \cost(S_{X'}, X') + O(1) \cdot \OPT_k(X). $$
\end{claim}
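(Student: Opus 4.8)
The plan is to exhibit a concrete solution of size $k$ inside $X \cap V_{i+1}$ whose cost with respect to $X$ is at most $\sum_{X' \in \mathcal P(X)} \cost(S_{X'}, X') + O(1) \cdot \OPT_k(X)$; since $S_X$ is the \emph{optimal} such restricted solution, this immediately yields the claim. Recall that $X = \bigcup_{X' \in \mathcal P(X)} X'$ is a partition into the children of $X$, and each $S_{X'} \subseteq X' \cap V_{i+2}$ is the optimal $k$-median solution for $(X',w,d)$ restricted to $X' \cap V_{i+2}$; moreover $V_{i+1} = \bigcup_{X' \in Q_{i+1}} S_{X'}$, so $\bigcup_{X' \in \mathcal P(X)} S_{X'} \subseteq X \cap V_{i+1}$. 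The natural candidate is thus the projection of an optimal solution for $X$ onto the set $\bigcup_{X' \in \mathcal P(X)} S_{X'}$.

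Concretely, let $S^\star$ be an optimal $k$-median solution for $(X,w,d)$, so $\cost(S^\star, X) = \OPT_k(X)$. First I would use the union of the children's solutions as an intermediate anchor: set $W := \bigcup_{X' \in \mathcal P(X)} S_{X'}$ and let $\hat S := \proj(S^\star, W)$, which has size at most $k$ and lies in $X \cap V_{i+1}$. Applying the projection lemma (\Cref{lem:projection}), restricted to the ground set $X$, gives $\cost(\hat S, X) \leq \cost(W, X) + 2 \cdot \cost(S^\star, X) = \cost(W,X) + 2\,\OPT_k(X)$. It remains to bound $\cost(W, X)$. Since the children $\{X'\}$ partition $X$, we have $\cost(W, X) = \sum_{X' \in \mathcal P(X)} \cost(W, X')$, and because $S_{X'} \subseteq W$ we get $\cost(W, X') \leq \cost(S_{X'}, X')$ for each child. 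Summing over children yields $\cost(W, X) \leq \sum_{X' \in \mathcal P(X)} \cost(S_{X'}, X')$, and combining with the projection bound gives exactly $\cost(S_X, X) \leq \cost(\hat S, X) \leq \sum_{X' \in \mathcal P(X)} \cost(S_{X'}, X') + 2 \cdot \OPT_k(X)$.

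The only subtle point — and the step I expect to require the most care — is making sure $\hat S$ is actually a \emph{feasible} restricted solution for the subproblem defining $S_X$, i.e.\ that $\hat S \subseteq X \cap V_{i+1}$ and $|\hat S| \leq k$. The size bound is immediate since projection does not increase cardinality; the containment follows from $W = \bigcup_{X' \in \mathcal P(X)} S_{X'} \subseteq V_{i+1}$ together with $W \subseteq X$ (each $S_{X'} \subseteq X' \subseteq X$). A second place to be slightly careful is the application of \Cref{lem:projection}: as stated it is with respect to the full space $V$, so I would either note it holds verbatim for any fixed ground set (the proof is pointwise) or apply it within the metric subspace $(X,w,d)$. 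One should also double-check the edge case where $W$ is empty or has fewer than $k$ points, but this causes no problem since the projection and all cost bounds go through unchanged.
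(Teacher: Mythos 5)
Your proposal is correct and follows essentially the same route as the paper's proof: project an optimal solution for $(X,w,d)$ onto $R = \bigcup_{X' \in \mathcal P(X)} S_{X'}$, apply the projection lemma to get $\cost(\proj(S^\star,R),X) \leq \cost(R,X) + 2\cdot\OPT_k(X)$, bound $\cost(R,X)$ child-by-child using the partition, and invoke optimality of the restricted solution $S_X$. The feasibility check you flag (that the projected set lies in $X \cap V_{i+1}$ and has size at most $k$) is exactly the implicit step the paper relies on, so nothing is missing.
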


\begin{proof}
    Let $R$ denote the set $\bigcup_{X' \in \mathcal P(X)} S_{X'}$. Let $S^\star$ be an optimal solution to the $k$-median problem in the metric space $(X,w,d)$ and let $S' := \proj(S^\star, R)$ be the projection of $S^\star$ onto $R$. 
    
    Since $S_X$ is the optimal solution to the $k$-median problem in $(X,w,d)$ such that $S_X \subseteq R$, it follows that $\cost(S_X, X) \leq \cost(S', X)$.
    Applying the projection lemma (\Cref{lem:projection}) to the projection $S'$ of $S^\star$ onto $R$, we get that $\cost(S', X) \leq \cost(R, X) + 2 \cdot \cost(S^\star, X) = \cost(R, X) + O(1) \cdot \OPT_k(X)$.
    Combining these two inequalities, it follows that
    \begin{equation}\label{eq:kay apx}
        \cost(S_X, X) \leq \cost(R, X) + O(1) \cdot \OPT_k(X).
    \end{equation}
    The claim then follows from the fact that 
    $$ \cost(R,X) = \sum_{X' \in \mathcal P(X)} \cost(R, X') \leq \sum_{X' \in \mathcal P(X)} \cost(S_{X'}, X').\qedhere $$
\end{proof}

By repeatedly applying \Cref{claim:apx:1} to the sum $\sum_{X\in Q_i}\cost(S_X, X)$, we obtain the following upper bound on $\cost(V_0)$:
\begin{equation}\label{eq:myeqq}
    \cost(V_0) \leq \sum_{X \in Q_\ell} \cost(S_{X}, X) +  O(\ell) \cdot \OPT_k(V).
\end{equation}
We prove \Cref{eq:myeqq} in \Cref{lem:apx:2}.
Now, let $S^\star$ be an optimal solution to $k$-median in $(V,w,d)$ and consider any $X \in Q_\ell$. Since $S_X$ is an optimal solution to $k$-median in the subspace $(X, w, d)$, we have that $\cost(S_X, X) = \OPT_k(X) \leq 2 \cdot \OPT_k(X, V) \leq 2 \cdot \cost(S^\star, X)$, where the first inequality follows from \Cref{cor:improper}. Thus, summing over each $X \in Q_\ell$, we get that
\begin{equation}\label{eq:myeqq2}
    \sum_{X \in Q_\ell} \cost(S_X, X) \leq 2 \cdot \sum_{X \in Q_\ell} \cost(S^\star, X) = 2 \cdot \cost(S^\star) = 2\cdot \OPT_k(V).
\end{equation}
Combining \Cref{eq:myeqq,eq:myeqq2}, it follows that $\cost(V_0) \leq O(\ell) \cdot \OPT_k(V)$. Thus, the solution $V_0$ returned by our algorithm is a $O(\ell) = O(\log(n/k))$ approximation.

\subsubsection*{Query Complexity}
Since the partitions constructed in Phase I of the algorithm are arbitrary, we do not make any queries to the metric in Phase I.
Thus, we now focus on bounding the query complexity of Phase~II.

The total number of queries made during the $i^{th}$ iteration in Phase II is the sum of the number of queries required to compute the solutions $\{S_X\}_{X \in Q_i}$.
In the first iteration (when $i = \ell$), we compute $|Q_\ell|$ many solutions, each one on a subspace of size $n/|Q_\ell|$.\footnote{Again, we assume for simplicity that each set in the partition has the same size.} We can trivially compute an optimal solution in $(X,w,d)$ by querying every distance between all $O(|X|^2)$ pairs of points in $X$ and then checking every possible solution. Thus, we can upper bound the number of queries by
\begin{equation*}
\left(\frac{n}{|Q_\ell|}\right)^2 \cdot |Q_\ell| 
= \frac{n^2}{|Q_\ell|} 
= \frac{n^2}{2^\ell} = O(nk),
\end{equation*}
where we are using the fact that the number of sets in the partition $Q_\ell$ is $2^\ell = n/k$.
For each subsequent iteration (when $0 \leq i < \ell$), we compute $|Q_i|$ many solutions, each one on a subspace $(X,w,d)$ of size at most $n/|Q_i|$, where the solution is restricted to the set  $X \cap V_{i+1}$, which has size at most
$ |X \cap V_{i+1}| = |S_{X_1} \cup S_{X_2}| \leq 2k, $
where $\calP(X) = \{X_1, X_2\}$ and $S_{X_1}$ and $S_{X_2}$ are computed in the previous iteration.
Since we only need to consider solutions that are contained in some subset of at most $O(k)$ points, we can find an optimal such restricted solution with $O(k) \cdot (n/|Q_i|)$ queries.
It follows that the total number of queries that we make during this iteration is at most
$O(nk/|Q_i|) \cdot |Q_i| \leq O(nk)$.
Thus, the total query complexity of our algorithm is $\ell \cdot O(nk) = \tilde O(nk)$.

\subsection{Idea II: A Deterministic Algorithm for Restricted $k$-Median}

In order to establish the \emph{approximation guarantees} of our algorithm in \Cref{sec:low query}, we use the fact that, given a metric subspace $(V,w,d)$ of size $n$ and a subset $X \subseteq V$, we can find a solution $S \subseteq X$ to the $k$-median problem such that
\begin{equation}\label{eq:watwe need}
    \cost(S) \leq \cost(X) + O(1) \cdot \OPT_k(V).
\end{equation}
We use this fact in the proof of \Cref{claim:apx:1} (see \Cref{eq:kay apx}), which is the key claim in our analysis of our algorithm.
Furthermore, to establish the \emph{query complexity} bound of our algorithm, we use the fact that we can find such a solution $S_X$ with $O(n |X|)$ many queries. To implement this algorithm efficiently, we need to be able to find such a solution in $\tilde O(n |X|)$ \emph{time}. While designing an algorithm with these exact guarantees seems challenging (since it would require efficiently matching the bounds implied by the projection lemma), we can give an algorithm with the following relaxed guarantees, which suffice for our applications.

\begin{lemma}\label{thm:rest:intro}
    There is a deterministic algorithm that, given a metric space $(V,w,d)$ of size $n$, a subset $X \subseteq V$, and a parameter $k$, returns a solution $S \subseteq X$ of size $2k$ in $\tilde O(n |X|)$ time such that
    $$ \cost(S) \leq \cost(X) + O \!\left( \log \! \left(\frac{|X|}{k} \right)\right) \cdot \OPT_k(V). $$
\end{lemma}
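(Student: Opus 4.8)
The plan is to prove \Cref{thm:rest:intro} by designing a deterministic restricted $k$-median algorithm built on the reverse greedy algorithm of Chrobak et al.~\cite{ChrobakKY06}, which the authors have already flagged as the intended approach. Recall that reverse greedy starts from some initial candidate set of centers and repeatedly removes the center whose removal increases the cost by the least amount, stopping once the desired number of centers remains. The key known guarantee is that, when run on a point set of size $m$ down to $k$ centers, reverse greedy produces an $O(\log(m/k))$-approximation to the optimal $k$-median solution \emph{restricted to that point set}. My first step would be to set up the right initial candidate set: I would take $X$ itself (of size $|X|$) as the ground set of allowed centers, but measure cost against all of $V$. Running reverse greedy from $X$ down to $2k$ centers would then, by the Chrobak et al.~analysis, yield a set $S \subseteq X$ with $|S| = 2k$ whose cost with respect to $X$ is at most $O(\log(|X|/k))$ times $\OPT_{2k}(X)$ — but I need the bound to be against $\OPT_k(V)$ and with respect to all of $V$, so the structure of the argument needs more care.

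The heart of the argument is therefore a ``charging'' step that converts the restricted-to-$X$ guarantee into the claimed bound $\cost(S) \le \cost(X) + O(\log(|X|/k)) \cdot \OPT_k(V)$. I would proceed as follows. Let $S^\star$ be an optimal $k$-median solution in $(V,w,d)$, so $\cost(S^\star) = \OPT_k(V)$. Consider $S^\star_X := \proj(S^\star, X)$, the projection of $S^\star$ onto $X$; by the projection lemma (\Cref{lem:projection}) applied with $A = S^\star$ and $B = X$, we get $\cost(S^\star_X) \le \cost(X) + 2\cdot\cost(S^\star) = \cost(X) + 2\cdot\OPT_k(V)$. Since $|S^\star_X| \le k \le 2k$, the set $S^\star_X$ is a feasible candidate for the restricted problem ``best $2k$ centers inside $X$'', so $\OPT_{2k}(V, X) \le \cost(S^\star_X) \le \cost(X) + 2\cdot\OPT_k(V)$. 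Now I need the reverse greedy output $S$ to satisfy $\cost(S, V) \le O(\log(|X|/k)) \cdot \OPT_{2k}(V, X)$ — i.e., I need the reverse greedy guarantee to hold when costs are measured against the \emph{larger} set $V$ rather than just $X$. This is the delicate point: the standard reverse greedy analysis is stated for the unrestricted problem where the cost is measured on the same ground set the centers come from. I would handle this by observing that the reverse greedy cost-increase quantities and the inductive potential argument of Chrobak et al.~only use the metric and the point weights as a ``black box'' over whatever set the cost is evaluated on; running reverse greedy on candidate centers $X$ while scoring removals by their effect on $\cost(\cdot, V)$ is a legitimate instantiation, and the $O(\log(m/k))$ bound then reads as $\cost(S, V) \le O(\log(|X|/2k))\cdot \min_{|T|\le 2k,\, T \subseteq X}\cost(T, V) = O(\log(|X|/k))\cdot \OPT_{2k}(V,X)$. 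Chaining the two inequalities gives $\cost(S) \le O(\log(|X|/k)) \cdot (\cost(X) + 2\cdot\OPT_k(V))$, which is slightly weaker than claimed because of the $\log$ factor multiplying $\cost(X)$.

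To get the stated bound with $\cost(X)$ appearing \emph{without} the logarithmic factor, I would refine the reverse greedy analysis rather than invoke it as a black box. The standard telescoping argument bounds the final cost by the optimum plus a sum of per-step cost increases, each controlled by an averaging argument against the current solution. Here the ``base'' cost $\cost(X)$ (the cost of keeping all of $X$ as centers, which w.r.t.~$V$ is just $\sum_{x\in V} w(x) d(x, X)$) is the starting point of the removal process and is never amplified — only the \emph{incremental} cost of going from $\cost(X)$ down to a $2k$-center solution gets the $\log$ factor, and that incremental part is bounded by $O(\log(|X|/k))\cdot\OPT_{2k}(V,X)$, which in turn is $O(\log(|X|/k))\cdot(\cost(X) + 2\OPT_k(V))$ — so naively this still amplifies $\cost(X)$. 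The cleaner route, which I expect to be the main obstacle, is to re-run the Chrobak et al.~potential-function argument with the cost function $\cost(\cdot, V)$ and exploit that the per-step increase can be charged against $\OPT_k(V)$ directly (via a projection-lemma-style argument comparing the current reverse-greedy solution to $S^\star$ projected appropriately) rather than against $\OPT_{2k}(V,X)$; the starting cost $\cost(X)$ then only ever contributes additively once. Finally, the running time: each reverse greedy step over candidate set $X$ requires evaluating, for each of the $O(|X|)$ candidates, the cost increase of its removal, which with appropriate bookkeeping (maintaining for each point in $V$ its nearest and second-nearest current center) costs $\tilde O(n)$ per step and $\tilde O(|X|)$ steps, for $\tilde O(n|X|)$ total — I would spell out the data structure (a priority queue keyed by removal cost, plus per-point second-closest-center pointers updated lazily) to confirm the $\tilde O(n|X|)$ bound, noting the $\polylog$ factors hide dependence on $n$ and $\Delta$ as per the paper's convention.
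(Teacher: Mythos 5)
Your eventual plan---initialize reverse greedy at $S \leftarrow X$, score each removal by its effect on $\cost(\cdot, V)$, stop once $|S| \le 2k$, charge each removal step directly to $\OPT_k(V)$ via a projection of the optimal solution onto the \emph{current} greedy solution, and let $\cost(X)$ enter only additively as the start of the telescoping---is exactly the paper's proof (\Cref{thm:restr}). But you explicitly leave that charging step unproven (you call it ``the main obstacle''), and it is precisely where the content of the lemma lies, so as written there is a genuine gap. The missing argument is \Cref{lem:greedy:main}: if the current solution $S_i \subseteq X$ has size $i$, then with $S'_i := \proj(S^\star, S_i)$ (of size at most $k$), the greedy increase $\cost(S_{i-1}) - \cost(S_i)$ is at most the \emph{average} increase over the at least $i-k$ centers $y \in S_i \setminus S'_i$; by the inequality $\sum_{y \in B \setminus A}\bigl(\cost(B-y) - \cost(B)\bigr) \le \cost(A) - \cost(B)$ for $A \subseteq B$ (\Cref{claim:greedy:proof}, a short cluster-by-cluster argument your sketch never states) this average is at most $\frac{1}{i-k}\bigl(\cost(S'_i) - \cost(S_i)\bigr)$, and by \Cref{lem:projection} $\cost(S'_i) \le \cost(S_i) + 2\cdot\OPT_k(V)$, giving $\cost(S_{i-1}) - \cost(S_i) \le \frac{2}{i-k}\cdot\OPT_k(V)$. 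Summing over $i$ from $2k+1$ to $|X|$ yields $\sum_{i=2k+1}^{|X|} \frac{2}{i-k} = O(\log(|X|/k))$, and telescoping from $\cost(S_{|X|}) = \cost(X)$ gives exactly the stated bound.

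Two of your intermediate ``known guarantee'' invocations are also inaccurate, and they obscure where the $\log(|X|/k)$ actually comes from. Reverse greedy run all the way down to $k$ centers is only $O(\log m)$-approximate (Chrobak et al.~even prove an $\Omega(\log m/\log\log m)$ lower bound), not $O(\log(m/k))$; and comparing a size-$2k$ output against $\OPT_{2k}(V,X)$, as in your second paragraph, would again give only $O(\log|X|)$, because the averaging slack there is $i-2k$, which drops to $1, 2, \dots$ near termination. The $\log(|X|/k)$ saving arises precisely because the output keeps $2k$ centers while the benchmark has only $k$, so the slack $i-k$ stays at least $k+1$ throughout the run; this is the point your plan needs to make explicit. (The abandoned black-box route also, as you note, multiplies $\cost(X)$ by the log factor, which would be unusable downstream.) Your implementation outline---nearest and second-nearest center bookkeeping, $\tilde O(n)$ per removal, at most $|X|$ removals---matches the paper's and is fine.
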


To prove \Cref{thm:rest:intro}, we give a simple modification of the reverse greedy algorithm of Chrobak, Kenyon, and Young \cite{ChrobakKY06}. The reverse greedy algorithm initially creates a solution $S \leftarrow V$ consisting of the entire space, and proceeds to repeatedly peel off the point in $S$ whose removal leads to the smallest increase in the cost of $S$ until only $k$ points remain. Chrobak et al.~showed that this algorithm has an approximation ratio of $O(\log n)$ and $\Omega(\log n / \log \log n)$.
While this large approximation ratio might seem impractical for our purposes, we can make 2 simple modifications to the algorithm and analysis in order to obtain the guarantees in \Cref{thm:rest:intro}.
\begin{enumerate}
    \item We start off by setting $S \leftarrow X$ instead of $S \leftarrow V$. This ensures that the output $S$ is a subset of $X$, and gives us the guarantee that $\cost(S) \leq \cost(X) + O(\log |X|) \cdot \OPT_k(V)$.
    \item We return the set $S$ once $|S| \leq 2k$ instead of $|S| \leq k$. This allows us to obtain the guarantee that $\cost(S) \leq \cost(X) + O(\log (|X|/k)) \cdot \OPT_k(V)$.
\end{enumerate}
We provide the formal proof of \Cref{thm:rest:intro} in \Cref{sec:restr}.

We can now make the following key observation: In our algorithm from \Cref{sec:low query}, whenever we compute a solution $S_X$ to the $k$-median problem, we impose the constraint that $S_X \subseteq R$ for a set $R$ of size $R = O(k)$. Thus, if we use the algorithm from \Cref{thm:rest:intro} to compute our solutions within the hierarchy, whenever we apply this lemma we can assume that $|X| = O(k)$. Consequently, the approximation guarantee that we get from \Cref{thm:rest:intro} becomes
$$ \cost(S) \leq \cost(X) + O(1) \cdot \OPT_k(V), $$
matching the required guarantee from \Cref{eq:watwe need}.

\medskip
\noindent \textbf{Dealing with Bicriteria Approximations.} One caveat from \Cref{thm:rest:intro} is that the solutions output by the algorithm have size $2k$ instead of $k$. In other words, these are ``bicriteria approximations'' to the $k$-median problem, i.e.~solutions that contain more than $k$ points. Thus, the final output of our algorithm has size $2k$. By using the extraction technique of Guha et al.~\cite{focs/GuhaMMO00} described in \Cref{lem:intro:sparse}, it is straightforward to compute a subset of $k$ of these points in $\tilde O(k^2)$ time while only incurring a $O(1)$ increase in the approximation ratio.

\medskip
\noindent \textbf{Putting Everything Together.} By combining our algorithm from \Cref{sec:low query} with \Cref{thm:rest:intro}, we get our main result, which we prove in \Cref{sec:our alg}.

\begin{theorem}\label{thm:main:fast:restattet}
    There is a deterministic algorithm for $k$-median that, given a metric space of size $n$, computes a $O(\log(n/k))$-approximate solution in $\tilde O(nk)$ time.
\end{theorem}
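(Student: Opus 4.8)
The plan is to combine the two ingredients from the technical overview: the hierarchical partitioning scheme of \Cref{sec:low query} (which yields a $O(\log(n/k))$-approximation with $\tilde O(nk)$ queries but exponential time), and the deterministic restricted $k$-median algorithm of \Cref{thm:rest:intro} (a modified reverse greedy). The key point is that in Phase~II of the hierarchy, whenever we must compute a solution $S_X$ for a subspace $(X,w,d)$, we only ever restrict the solution to lie within a set $R \subseteq X \cap V_{i+1}$ of size $O(k)$ (for $i < \ell$), or we are at the bottom level $i = \ell$ where $|X| = O(n/2^\ell) = O(k)$. So in \emph{every} invocation the ``restricted ground set'' has size $O(k)$, which is exactly the regime where \Cref{thm:rest:intro} gives $\cost(S) \le \cost(X) + O(1)\cdot\OPT_k(V)$ in time $\tilde O(n\cdot|X|) = \tilde O(nk)$.

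First I would set up the hierarchy exactly as in \Cref{sec:low query}: build partitions $Q_0 = \{V\}, Q_1,\dots,Q_\ell$ with $\ell = \lceil\log_2(n/k)\rceil$, each $Q_i$ refining $Q_{i-1}$ by splitting every part roughly in half, so $|Q_i| \approx 2^i$ and each $X \in Q_i$ has $|X| \approx n/2^i$; these splits are arbitrary and cost no queries/time. Then run Phase~II bottom-up: at level $i = \ell$, for each $X \in Q_\ell$ (with $|X| = O(k)$), apply \Cref{thm:rest:intro} with ground set $X$ itself to get $S_X \subseteq X$ with $|S_X| \le 2k$ and $\cost(S_X, X) \le \cost(X,X) + O(1)\cdot\OPT_k(X) = O(1)\cdot\OPT_k(X)$ — wait, more carefully, $\cost(X,X) = 0$, so we just get $O(\log(|X|/k))\cdot\OPT_k(X) = O(1)\cdot\OPT_k(X)$; in fact since $X$ is its own ground set this is just a near-optimal $k$-median solution on $(X,w,d)$ up to the $O(1)$ factor, which replaces the ``exact OPT'' used in the query-complexity version. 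For $i < \ell$, set $R = \bigcup_{X' \in \calP(X)} S_{X'}$, which has size $\le 2\cdot 2k = O(k)$, and apply \Cref{thm:rest:intro} with ground set $R$ (inside the space $(X,w,d)$) to obtain $S_X \subseteq R$ with $|S_X|\le 2k$ and $\cost(S_X,X) \le \cost(R,X) + O(1)\cdot\OPT_k(X)$. Let $V_0 = \bigcup_{X\in Q_0}S_X = S_V$ be the output, then run the Guha et al.\ extraction step (as in \Cref{lem:intro:sparse}) once more on the weighted instance induced by $V_0$ to bring the size down from $2k$ to $k$ at $O(1)$ multiplicative loss and $\tilde O(k^2) = \tilde O(nk)$ extra time.

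For the \textbf{approximation analysis}, I would re-derive \Cref{claim:apx:1} with the weaker guarantee: the only place the query-complexity proof used ``$S_X$ is optimal restricted to $R$'' was to write $\cost(S_X,X)\le\cost(S',X)$ for $S' = \proj(S^\star,R)$, and then bound $\cost(S',X) \le \cost(R,X) + 2\cost(S^\star,X)$ by the projection lemma; now instead \Cref{thm:rest:intro} directly gives $\cost(S_X,X)\le\cost(R,X)+O(1)\cdot\OPT_k(X)$, which is the same inequality \Cref{eq:kay apx}, so \Cref{claim:apx:1} goes through verbatim, and hence so does \Cref{eq:myeqq}: $\cost(V_0) \le \sum_{X\in Q_\ell}\cost(S_X,X) + O(\ell)\cdot\OPT_k(V)$. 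At the bottom level, instead of $\cost(S_X,X) = \OPT_k(X)$ we have $\cost(S_X,X) \le O(1)\cdot\OPT_k(X) \le O(1)\cdot\OPT_k(X,V) \le O(1)\cdot\cost(S^\star,X)$ via \Cref{cor:improper}, so summing over $X\in Q_\ell$ gives $\sum_{X\in Q_\ell}\cost(S_X,X) \le O(1)\cdot\cost(S^\star) = O(1)\cdot\OPT_k(V)$, exactly as in \Cref{eq:myeqq2} up to the constant. Combining, $\cost(V_0) \le O(\ell)\cdot\OPT_k(V) = O(\log(n/k))\cdot\OPT_k(V)$, and the final extraction to size $k$ keeps this $O(\log(n/k))$.

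For the \textbf{running time}, at level $i$ there are $|Q_i| = O(2^i)$ parts; for each $X\in Q_i$ the call to \Cref{thm:rest:intro} runs in $\tilde O(|X|\cdot |{\rm ground\ set}|)$ time, which is $\tilde O((n/2^i)\cdot k)$ since the restricted ground set has size $O(k)$ (and at level $\ell$, $|X| = O(k)$ so it is $\tilde O(k^2) = \tilde O(nk/2^\ell)$). Summing over the $O(2^i)$ parts at level $i$ gives $\tilde O(nk)$ per level, and there are $\ell+1 = O(\log(n/k))$ levels, for a total of $\tilde O(nk)$, absorbing the $\log$ into $\tilde O(\cdot)$. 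I would also need a brief remark that distances queried during the reverse-greedy calls can be computed in $O(1)$ each, so query count and running time agree up to the $\tilde O$. The main obstacle — and the only place that really needs care — is verifying that the restricted ground set $R$ in every recursive call genuinely has size $O(k)$ even after iterating the bicriteria blow-up: each $S_{X'}$ has size $\le 2k$, and $\calP(X)$ has size $2$, so $|R| \le 4k$ at every level, i.e.\ the bicriteria factor does \emph{not} compound across levels because each $S_X$ is reset to size $\le 2k$ by \Cref{thm:rest:intro}; making this non-accumulation explicit, together with handling the ``parts may not split into exactly equal halves'' bookkeeping (so that $|X| = O(n/2^i)$ and $|Q_i| = O(2^i)$ still hold up to constants), is the crux, and everything else is a direct transcription of the query-complexity argument with ``optimal'' replaced by ``\Cref{thm:rest:intro}.''
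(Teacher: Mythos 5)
Your proposal is correct and follows essentially the same route as the paper: binary hierarchical partitioning with $\ell = \lceil\log_2(n/k)\rceil$ levels, bottom-up calls to the restricted reverse-greedy algorithm with output size $2k$ (so the restricted ground set stays of size $O(k)$ and the bicriteria blow-up does not compound), the inductive cost bound $\cost(V_0) \leq \sum_{X \in Q_\ell}\cost(S_X,X) + O(\ell)\cdot\OPT_k(V)$, and a final extraction on the weighted $2k$-point instance (the paper runs Mettu--Plaxton there) to reduce to $k$ centers at $O(1)$ loss. The bookkeeping points you flag (uneven splits giving parts of size at most $n/|Q_i|+2$, and the $O(nk)$ time to build the weighted instance for extraction) are exactly the details the paper's proof handles, so there is no gap.
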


\subsection{Our Lower Bound for Deterministic $k$-Median}

We also prove the following lower bound for deterministic $k$-median. Due to space constraints, we defer the proof of this theorem to \Cref{sec:lower}.

\begin{theorem}
    For every $\delta \geq 1$, any deterministic algorithm for the $k$-median problem that has a running time of $O(kn\delta)$ on a metric space of size $n$ has an approximation ratio of
    $$ \Omega \! \left( \frac{\log n}{\log\log n + \log k + \log \delta} \right). $$
\end{theorem}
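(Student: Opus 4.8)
The plan is to adapt the query-complexity lower bound for dynamic $k$-center of Bateni et al.~\cite{BateniEFHJMW23} (and its $1$-median analogue of Chang \cite{Chang16}) to the static deterministic $k$-median setting, tracking the running-time parameter $\delta$ explicitly. Since the running time of an algorithm upper bounds its query complexity to $d(\cdot,\cdot)$, it suffices to show that any deterministic algorithm making at most $O(kn\delta)$ distance queries has approximation ratio $\Omega\!\left(\log n / (\log\log n + \log k + \log\delta)\right)$. The natural vehicle is an adversary argument: we design a family of hard metric instances together with an adaptive adversary that answers distance queries consistently with many instances in the family, so that after $O(kn\delta)$ queries the algorithm cannot distinguish a ``cheap'' configuration from an ``expensive'' one, and whatever set of $k$ centers it outputs is forced to be far from optimal on some surviving instance.

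The key steps, in order, are: (1) Construct a hard instance on $n$ points built around a balanced tree-like (ultrametric or HST-style) structure of depth roughly $t := \Theta\!\left(\log n / (\log\log n + \log k + \log\delta)\right)$, with branching factor $b$ chosen so that $b^{t} \approx n$ and so that the branching factor is large enough relative to $k$ and $\delta$ that the adversary can ``hide'' the good centers; distances are geometrically scaled by level so that placing a center inside the right depth-$j$ block costs a factor $\approx t$ less than placing it one level up. (2) Set up the adversary's state as the set of instances in the family still consistent with all answers given so far; a query to a pair $(x,y)$ is answered by the (common) tree distance unless doing so would over-constrain the family, in which case the adversary commits minimally. (3) Show a charging/budget argument: each query ``kills'' at most a bounded number of surviving branches at each level, so after $q = O(kn\delta)$ queries the adversary can still complete the instance so that the $k$ centers the algorithm has committed to lie in branches that are collectively ``expensive,'' i.e.~the algorithm pays $\Omega(t)$ times the optimum, which is achieved by an adversarial placement of $k$ good centers into unqueried branches. (4) Verify that the resulting metric is genuinely a metric (triangle inequality from the ultrametric construction) of size $n$ and aspect ratio $\poly(n)$, so that the hidden polylog factors in $\tilde O(\cdot)$ do not swallow the bound, and finally solve the equation relating $b$, $t$, $k$, $\delta$, $n$ to extract the stated $\Omega\!\left(\log n/(\log\log n+\log k+\log\delta)\right)$ bound; deriving \Cref{thm:main:lower} is then the special case $\delta = \polylog(n)$.

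The main obstacle I expect is the combinatorial heart of step (3): making the adversary's accounting tight enough that a budget of $\Theta(kn\delta)$ queries genuinely leaves enough unqueried structure at \emph{every} level of the depth-$t$ tree simultaneously. A single query reveals information about the ancestor-level structure of both its endpoints, so one must argue that across all levels $j = 1,\dots,t$ the number of branches ``compromised'' at level $j$ stays below the branching factor $b$ minus the $k$ centers the algorithm will place, and this has to hold for the adversary's final choice of instance. This is exactly where the $\log\delta$ and $\log k$ terms enter the denominator — a larger query budget or larger $k$ forces a larger branching factor $b$ to keep hiding centers, which shrinks the achievable depth $t$ and hence the approximation lower bound. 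Getting the dependence on $\delta$ right (rather than, say, losing an extra $\log$ factor) requires care in how queries are amortized against levels, and this is essentially the technical content imported from \cite{BateniEFHJMW23,Chang16}; the remaining steps are comparatively routine once the instance family and the budget bound are in place.
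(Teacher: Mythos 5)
There is a genuine gap at the heart of your plan, in steps (1) and (3): a geometrically scaled HST/branch-hiding instance does not produce a super-constant gap for $k$-median. In such an instance every feasible solution has the same qualitative behaviour: with branching factor $b \gg k$, \emph{both} the algorithm's $k$ centers and the optimal $k$ centers miss almost every top-level branch, and with geometric per-level scaling the cost of any solution is dominated by those top-scale misses, so all solutions are within an $O(1)$ factor of one another. Being "one level off" costs a constant factor, not a factor of $t$, and the factors do not accumulate additively across levels because the geometric sum telescopes. More fundamentally, your adversary is solving the wrong problem: for $k$-median the issue is not hiding where the good centers are (in the hard instance essentially \emph{every} typical point is a good center), but forcing the \emph{specific} $k$ points the algorithm outputs to be far from almost all of the space while keeping the rest of the space mutually close. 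Your budget argument about compromised branches per level does not address this, and in particular does not control what happens when the algorithm concentrates up to $\Theta(n\delta)$ of its queries on the points it intends to output.

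For contrast, the paper's adversary builds a near-uniform metric via a shortest-path completion of the query graph, with distances that grow \emph{linearly} in hop count rather than geometrically: queried pairs of "open" points get distance $1$, a point is "closed" once its degree reaches $M := 10k\delta\log n$, a hub $g^\star$ caps all distances at $2\log_M n$, and after the algorithm outputs $S$ the adversary completes the metric by (virtually) querying all distances from $S$. The query budget $O(nk\delta)$ guarantees at most $n/\log n$ closed points, so any solution containing a single open point costs $O(n)$ (\Cref{lem:upper-bound-on-optimum-k-median}); meanwhile each output center has at most $M(M-1)^{i-1}$ points within distance $i$ (\Cref{lem:number-of-neighbours-of-distance-one}), so all but half the points lie at distance at least $\lfloor\log_M n\rfloor$ from $S$ and the algorithm pays $\Omega(n\log_M n)$ (\Cref{lem:lower-bound-on-approx-of-Z}). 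The ratio $\Omega(\log_M n)=\Omega\!\left(\log n/(\log\log n+\log k+\log\delta)\right)$ thus comes from the linear growth of graph distance against the ball-growth rate $M$ — exactly the quantity your $\log k+\log\delta+\log\log n$ denominator should come from — and not from hiding cheap branches in a hierarchical instance. If you want to salvage your outline, you would need to replace the geometric HST with an additively growing (hop-count-style) metric and redirect the charging argument from "branches the adversary keeps hidden" to "degree/ball bounds around the algorithm's output centers"; at that point you have essentially reconstructed the paper's argument.
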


We prove this lower bound by adapting a lower bound construction by Bateni et al.~\cite{BateniEFHJMW23}.
In this paper, the authors provide a lower bound for dynamic $k$-center clustering against adaptive adversaries.
Although their primary focus is $k$-center, their lower bounds can be extended to various $k$-clustering problems.
The main idea is to design an adaptive adversary that controls the underlying metric space as well as the points being inserted and deleted from the metric space.
Whenever the algorithm queries the distance between any two points $x,y$, the adversary returns a value $d(x,y)$ that is consistent with the distances reported for all previously queried pairs of points.
Note that if we have the distances between some points (not necessarily all of them), we might be able to embed different metrics on the current space that are consistent with the queried distances.
More specifically, \cite{BateniEFHJMW23} introduces two different consistent metrics, and shows that the algorithm can not distinguish between these metrics, leading to a large approximation ratio.

We use the same technique as described above with slight modifications.
The first difference is that \cite{BateniEFHJMW23} considers the problem in the fully dynamic setting, whereas our focus is on the static setting.
The adversary has two advantages in the fully dynamic setting:
\begin{enumerate}
    \item The adversary has the option to delete a (problematic) point from the space.
    \vspace{-0.2cm}
    \item The approximation ratio of the algorithm is defined as the maximum approximation ratio throughout the entire stream of updates.
\end{enumerate}
Both of these advantages are exploited in the framework of \cite{BateniEFHJMW23}:
The adversary removes any `problematic' point $x$ and
the approximation ratio of the algorithm is proven to be large only in special steps (referred to as `clean operations') where the adversary has removed all of the problematic points.
In the static setting, the adversary does not have these advantages, and the approximation ratio of the algorithm is only considered at the very end.

Due to the differences between the objective functions of the $k$-median and $k$-center problems, we observed that we can adapt the above framework for deterministic static $k$-median.
One of the technical points here is to show that, if we have a problematic point $x$ that is contained in the output $S$ of the algorithm, we can construct the metric so that the cost of the cluster of $x$ in solution $S$ become large (see \Cref{lem:number-of-neighbours-of-distance-one}). 
The final metric space is similar to the `uniform' metric introduced in \cite{BateniEFHJMW23} with a small modification.
Since the algorithm is \textbf{deterministic}, the output is the same set $S$ if we run the algorithm on this final metric again.
Hence, we get our lower bounds for any deterministic algorithm for the static $k$-median problem.

\section{A Deterministic Algorithm for Restricted $k$-Median}\label{sec:restr}

In this section, we prove the following theorem.

\begin{theorem}\label{thm:restr}
    There is a deterministic algorithm that, given a metric space $(V,w,d)$ of size $n$, a subset $X \subseteq V$, and a parameter $k' \geq k$, returns a solution $S \subseteq X$ of size $k'$ in $\tilde O(n |X|)$ time such that
    $$ \cost(S) \leq \cost(X) + O \!\left( \log \! \left(\frac{|X|}{k' - k + 1} \right)\right) \cdot \OPT_k(V). $$
\end{theorem}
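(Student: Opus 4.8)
The plan is to implement the modified reverse greedy algorithm sketched in the technical overview and to adapt the analysis of Chrobak, Kenyon, and Young \cite{ChrobakKY06} to the restricted setting. Concretely, the algorithm initializes $S \leftarrow X$ and then repeatedly deletes from $S$ the point whose removal causes the smallest increase in $\cost(S)$ (where, crucially, the cost is always measured with respect to the entire space $V$), stopping once $|S| = k'$. Note that $|X| \geq k' \geq k$, so the process is well-defined. For the running time, each deletion step requires re-evaluating, for each candidate $y \in S$, the cost of $S \setminus \{y\}$; by maintaining for each point $x \in V$ its nearest and second-nearest center in $S$, each step can be carried out in $\tilde O(n|X|)$ total time over all $|X| - k'$ steps, giving the claimed $\tilde O(n|X|)$ bound. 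I would spell this data-structure bookkeeping out carefully but it is routine.

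The heart of the matter is the approximation bound. Here I would run the reverse-greedy potential argument of \cite{ChrobakKY06} but track the ``excess'' cost over $\cost(X)$ rather than over the optimum directly, and use an optimal (unrestricted, size-$k$) solution $S^\star \subseteq V$ as the comparison set. The key structural fact is that, at any point in the execution when $|S| = m$, there is a single point $y \in S$ whose removal increases the cost by at most roughly $\cost(S) / (m - k)$ worth of ``controlled'' blame — intuitively because one can charge the points of $S$ against the $k$ clusters of $S^\star$ via the projection lemma, so that among any $m - k + 1$ of them, the cheapest to delete costs little. Chaining this bound as $m$ runs from $|X|$ down to $k'$ produces a harmonic-sum factor $\sum_{m = k'}^{|X|} \tfrac{1}{m-k} = O(\log(|X|/(k'-k+1)))$ multiplying $\OPT_k(V)$, while the ``base'' cost $\cost(X)$ is never charged more than $O(1)$ times — this last point is where stopping at $k'$ rather than $k$ matters, since it is exactly what keeps the denominators $m-k$ bounded below by $k'-k+1$ and hence the harmonic sum from blowing up to $\log|X|$.

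The main obstacle is making the charging argument of the previous paragraph precise, in particular showing that at each step there genuinely exists a cheap point to remove. The subtlety is that reverse greedy removes centers one at a time, and the clusters of the current solution $S$ do not align with the clusters of $S^\star$; one needs the projection lemma (\Cref{lem:projection}) applied to $\proj(S^\star, S)$ together with an averaging argument over the (at least $m - k$) points of $S$ that are not ``protected'' as projections of $S^\star$, to conclude that the minimum-increase deletion costs at most $(\cost(S) + 2\OPT_k(V))/(m-k)$ or so. I would then set up the recursion $\cost(S_{m-1}) \leq (1 + \tfrac{1}{m-k})\cost(S_m) + O(\tfrac{1}{m-k})\OPT_k(V)$, unroll it from $m = |X|$ down to $m = k'$, bound the resulting product $\prod (1 + \tfrac{1}{m-k})$ and weighted sum, and arrive at $\cost(S) \leq \cost(X) + O(\log(|X|/(k'-k+1))) \cdot \OPT_k(V)$. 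Finally I would remark that $\cost(X) \geq \OPT_{|X|}(V)$ trivially and that the factor $\cost(X)$ is not inflated, so the stated bound holds as written.
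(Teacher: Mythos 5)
Your algorithm and overall plan (start from $S \leftarrow X$, peel off the cheapest center, stop at $k'$, compare against $\proj(S^\star, S)$ via \Cref{lem:projection} and average over the at least $|S|-k$ non-projected centers) is the same as the paper's, and the implementation sketch is fine — the paper maintains, for every $x \in V$, a sorted list of the current centers, which gives $O(n|X|\log n)$ total time, essentially your bookkeeping.

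However, there is a genuine gap in the approximation analysis. Your per-step bound, that the cheapest deletion increases the cost by at most roughly $(\cost(S) + 2\,\OPT_k(V))/(m-k)$, is too weak, and the multiplicative recursion you build from it, $\cost(S_{m-1}) \leq (1 + \tfrac{1}{m-k})\cost(S_m) + O(\tfrac{1}{m-k})\OPT_k(V)$, does not unroll to the claimed bound: the product $\prod_{m=k'+1}^{|X|}\bigl(1 + \tfrac{1}{m-k}\bigr)$ telescopes to $\tfrac{|X|-k+1}{k'-k+1}$, which multiplies $\cost(X)$ (and similarly inflates the $\OPT_k(V)$ coefficient to roughly $\tfrac{|X|-k'}{k'-k+1}$), i.e.~polynomial rather than $O(1)$ and $O(\log(\cdot))$ whenever $|X| \gg k'$. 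So the assertion that ``$\cost(X)$ is never charged more than $O(1)$ times'' does not follow from your recursion. The missing ingredient is a cancellation in the averaging step: one needs that for $A \subseteq B$ the \emph{sum} of deletion increments satisfies $\sum_{y \in B \setminus A}\bigl(\cost(B-y) - \cost(B)\bigr) \leq \cost(A) - \cost(B)$ (each point in the cluster of a deleted $y$ is reassigned to its nearest center of $A$, and these reassignments are charged to disjoint clusters; this is \Cref{claim:greedy:proof} in the paper). Applying this with $A = \proj(S^\star, S_i)$ and $B = S_i$, the $\cost(S_i)$ term cancels and \Cref{lem:projection} gives a per-step increase of at most $\tfrac{2}{i-k}\,\OPT_k(V)$ with \emph{no} $\cost(S)$ term (\Cref{lem:greedy:main}); a plain telescoping sum from $|X|$ down to $k'$ then yields $\cost(S_{k'}) \leq \cost(X) + O\!\left(\log\!\left(\tfrac{|X|}{k'-k+1}\right)\right)\OPT_k(V)$ with coefficient exactly $1$ on $\cost(X)$, as required.
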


This algorithm is based on a variant of the reverse greedy algorithm for the $k$-median problem \cite{ChrobakKY06}, which we modify for the restricted $k$-median problem. We refer to our modified version of this algorithm as $\Restr_{k'}$.

\subsection{The Restricted Reverse Greedy Algorithm}

Let $(V, w, d)$ be a metric space of size $n$, $X \subseteq V$ and $k' \leq |X|$ be an integer. The restricted reverse greedy algorithm $\Restr_{k'}$ begins by initializing a set $S \leftarrow X$ and does the following:
\begin{wrapper}
While $|S| > k'$, identify the center $y \in S$ whose removal from $S$ causes the smallest increase in the cost of the solution $S$ (i.e.~the point $y = \arg \min_{z \in S} \cost(S - z)$) and remove $y$ from $S$. Once $|S| \leq k'$, return the set $S$.
\end{wrapper}

\subsection{Analysis}

Let $m$ denote the size of $X$ and let $k \leq k'$.
Now, suppose that we run $\Restr_{k'}$ and consider the state of the set $S$ throughout the run of the algorithm. Since points are only removed from $S$,
this gives us a sequence of nested subsets $S_{k'} \subseteq \dots \subseteq S_{m}$, where $|S_i| = i$ for each $i \in [k', m]$. Note that $S_{k'}$ is the final output of our algorithm. The following lemma is the main technical lemma in the analysis of this algorithm.

\begin{lemma}\label{lem:greedy:main}
    For each $i \in [k' + 1, m]$, we have that $$\cost(S_{i-1}) - \cost(S_i) \leq \frac{2}{i - k} \cdot \OPT_k(V).$$
\end{lemma}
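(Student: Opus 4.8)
The plan is to analyze a single peeling step of $\Restr_{k'}$ and charge the cost increase against an optimal $k$-median solution $S^\star$ for the whole space $(V,w,d)$. Fix $i \in [k'+1, m]$ and consider the set $S_i$ of size $i$ just before we remove a point. The greedy rule removes the point $y \in S_i$ minimizing $\cost(S_i - y)$, so $\cost(S_{i-1}) - \cost(S_i) = \min_{y \in S_i}\bigl(\cost(S_i - y) - \cost(S_i)\bigr)$, and in particular this is at most the \emph{average} increase over any subset $T \subseteq S_i$ of candidate removals: $\cost(S_{i-1}) - \cost(S_i) \le \frac{1}{|T|}\sum_{y \in T}\bigl(\cost(S_i - y) - \cost(S_i)\bigr)$. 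The idea is to pick $T$ to be a well-chosen subset of $S_i$ of size at least $i-k$ — namely, the points of $S_i$ that are \emph{not} needed to ``cover'' the $k$ optimal centers — so that removing any one of them is cheap.

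First I would set up the charging. Let $S^\star = \{c_1,\dots,c_k\}$ be an optimal solution with $\cost(S^\star, V) = \OPT_k(V)$. For each optimal center $c_j$, let $\phi(c_j) \in S_i$ be the point of $S_i$ closest to $c_j$ (breaking ties arbitrarily); this uses at most $k$ points of $S_i$, so the set $T := S_i \setminus \phi(S^\star)$ has $|T| \ge i - k$. Now I claim that for every $y \in T$, $\cost(S_i - y) - \cost(S_i) \le 2\cdot\sum_{x : \sigma(x) = y} w(x)\, d(x, S^\star)$, where $\sigma(x)$ denotes the center of $S_i$ serving $x$. Indeed, when we delete $y$, only the points $x$ currently served by $y$ can have their service cost increase; such a point $x$ can be rerouted to $\phi(c_{j(x)})$, where $c_{j(x)}$ is the optimal center closest to $x$. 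By the triangle inequality, $d(x, \phi(c_{j(x)})) \le d(x, c_{j(x)}) + d(c_{j(x)}, \phi(c_{j(x)})) \le d(x, c_{j(x)}) + d(c_{j(x)}, y) \le 2 d(x, c_{j(x)}) + d(x,y) = 2 d(x, S^\star) + d(x, S_i)$, using that $\phi(c_{j(x)})$ is the closest point of $S_i$ to $c_{j(x)}$ and $y \in S_i$. Since $\phi(c_{j(x)}) \in \phi(S^\star)$ and $y \in T$ is disjoint from $\phi(S^\star)$, the rerouting target survives the deletion of $y$, so the increase in $x$'s contribution is at most $w(x)\bigl(d(x,\phi(c_{j(x)})) - d(x,S_i)\bigr) \le 2 w(x) d(x, S^\star)$.

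Summing this over all $y \in T$ and all their served points, and noting that each $x \in V$ is served by exactly one center of $S_i$, gives $\sum_{y \in T}\bigl(\cost(S_i - y) - \cost(S_i)\bigr) \le 2\sum_{x \in V} w(x) d(x, S^\star) = 2\,\OPT_k(V)$. Dividing by $|T| \ge i-k$ and using the averaging bound from the first paragraph yields $\cost(S_{i-1}) - \cost(S_i) \le \frac{2}{i-k}\,\OPT_k(V)$, which is exactly the claim. I expect the main obstacle to be the bookkeeping in the rerouting step: one must be careful that the target $\phi(c_{j(x)})$ is genuinely still present in $S_i - y$ (this is precisely why $T$ excludes $\phi(S^\star)$) and that no double-counting occurs when a point $x$ previously served by $y$ could in principle be rerouted to several surviving centers — here we only need \emph{one} valid reroute to upper bound the increase, so choosing the canonical target $\phi(c_{j(x)})$ suffices. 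A minor point to handle cleanly is ties in the definitions of $\sigma$, $\phi$, and the nearest optimal center, all of which can be broken arbitrarily without affecting the inequalities.
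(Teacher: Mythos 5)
Your proof is correct and follows essentially the same route as the paper: your set $T = S_i \setminus \phi(S^\star)$ is exactly $S_i \setminus \proj(S^\star, S_i)$, and the argument is the same min-versus-average step over this set followed by rerouting each displaced point to the projection of its nearest optimal center via the triangle inequality. The only difference is presentational---the paper factors the charging into \Cref{claim:greedy:proof} plus \Cref{lem:projection}, whereas you inline both into a single direct bound of $2\sum_{x:\sigma(x)=y} w(x)\,d(x,S^\star)$ per removed center.
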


Using \Cref{lem:greedy:main}, we can now prove the desired approximation guarantee of our algorithm.
By using a telescoping sum, we can express the cost of the solution $S_{k'}$ as
\begin{equation}\label{eq:greedy:1}
    \cost(S_{k'}) = \cost(S_{m}) + \sum_{i = k' + 1}^m \left(\cost(S_{i-1}) - \cost(S_i) \right).
\end{equation}
Applying \Cref{lem:greedy:main}, we can upper bound the sum on the RHS of \Cref{eq:greedy:1} by
\begin{equation}\label{eq:greedy:2}
\sum_{i = k' + 1}^m \left(\cost(S_{i-1}) - \cost(S_i) \right) \leq \sum_{i = k' + 1}^m \frac{2}{i - k} \cdot \OPT_k(V) \leq O \! \left(\log \! \left(\frac{m}{k' - k + 1} \right) \right) \cdot \OPT_k(V).
\end{equation}
Combining \Cref{eq:greedy:1,eq:greedy:2}, we get that
$$ \cost(S_{k'}) \leq \cost(S_m) + O \! \left(\log \! \left(\frac{m}{k' - k + 1} \right) \right) \cdot \OPT_k(V), $$
giving us the desired bound in \Cref{thm:restr}, since $S_m = X$ and $m = |X|$.

\subsection{Proof of \Cref{lem:greedy:main}}

The following analysis is the same as the analysis given in \cite{ChrobakKY06}, with some minor changes to accommodate the fact that we are using the algorithm for the restricted $k$-median problem and to allow us to obtain an improved approximation for $k' \geq (1 + \Omega(1))k$, at the cost of outputting bicriteria approximations.

We begin with the following claim, which we use later on in the analysis.

\begin{claim}\label{claim:greedy:proof}
    For all subsets $A \subseteq B \subseteq V$, we have that
    $$ \sum_{y \in B \setminus A} \left( \cost(B-y) - \cost(B) \right) \leq \cost(A) - \cost(B). $$
\end{claim}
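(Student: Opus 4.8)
The plan is to prove the claim by a charging argument based on the nearest-center assignment induced by $B$. First I would fix, once and for all, a function $\sigma \colon V \to B$ with $d(x, \sigma(x)) = d(x,B)$ for every $x$ (breaking ties arbitrarily). For a fixed center $y \in B$, deleting $y$ changes the service cost only for the points currently assigned to $y$, so
$$ \cost(B - y) - \cost(B) = \sum_{x \,:\, \sigma(x) = y} w(x)\bigl(d(x, B - y) - d(x, B)\bigr). $$
Since each $x$ has a single value $\sigma(x)$, summing this identity over all $y \in B \setminus A$ charges each point of $V$ to at most one inner sum, giving
$$ \sum_{y \in B \setminus A} \bigl(\cost(B-y) - \cost(B)\bigr) = \sum_{x \,:\, \sigma(x) \in B \setminus A} w(x)\bigl(d(x, B - \sigma(x)) - d(x, B)\bigr). $$

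Next I would bound each surviving term. If $\sigma(x) = y \in B \setminus A$, then $y \notin A$, hence $A \subseteq B - y$, and therefore $d(x, B - y) \le d(x, A)$. Substituting this bound, the right-hand side above is at most $\sum_{x \,:\, \sigma(x) \in B\setminus A} w(x)\bigl(d(x,A) - d(x,B)\bigr)$. Finally I would enlarge the index set of this sum to all of $V$: since $A \subseteq B$ we have $d(x,A) \ge d(x,B)$ for every $x$, so each term $w(x)\bigl(d(x,A) - d(x,B)\bigr)$ is nonnegative and adding the missing ones only increases the sum. Thus
$$ \sum_{x \,:\, \sigma(x) \in B \setminus A} w(x)\bigl(d(x,A) - d(x,B)\bigr) \le \sum_{x \in V} w(x)\bigl(d(x,A) - d(x,B)\bigr) = \cost(A) - \cost(B), $$
which is exactly the claimed inequality.

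The argument is short, so there is no single hard step; the only thing to be careful about is the bookkeeping. One must commit to a single assignment $\sigma$ before summing, so that the per-center increase identities combine without double-counting any point, and one must explicitly use $y \notin A$ (equivalently $A \subseteq B - y$) to justify $d(x, B-y) \le d(x,A)$ — this is the one place where $y$ ranging over $B \setminus A$ rather than all of $B$ is essential. Everything else is routine rewriting of the cost function, and I expect this claim to be used later to telescope the per-step increases of the reverse-greedy process against $\OPT_k(V)$ in the proof of \Cref{lem:greedy:main}.
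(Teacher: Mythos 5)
Your proof is correct and follows essentially the same route as the paper's: both decompose the per-center cost increase over the cluster of points assigned to $y$ under $B$, use $A \subseteq B - y$ for $y \in B \setminus A$ to bound $d(x, B-y) \le d(x,A)$, and then extend the sum to all of $V$ (your explicit appeal to $d(x,A) \ge d(x,B)$ for the enlargement is a small point the paper leaves implicit). No gaps.
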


\begin{proof}
    For each $y \in B$, let $C_{B}(y)$ denote the cluster of the points in $V$ that are assigned to $y$ within the solution $B$. In other words, $C_B(y)$ is the subset of points in $V$ whose closest point in $B$ is $y$ (breaking ties arbitrarily). We can now observe that
    \begin{align*}
        \sum_{y \in B \setminus A} \left( \cost(B-y) - \cost(B) \right) &= \sum_{y \in B \setminus A} \sum_{x \in V} w(x)\left( d(x, B-y) - d(x,B) \right)\\
        &= \sum_{y \in B \setminus A} \sum_{x \in C_B(y)} w(x) \left( d(x, B-y) - d(x,B) \right)\\
        &\leq \sum_{y \in B \setminus A} \sum_{x \in C_B(y)} w(x) \left( d(x, A) - d(x,B) \right)\\
        &\leq \sum_{x \in V} w(x) \left( d(x, A) - d(x,B) \right)\\
        &= \cost(A) - \cost(B).
    \end{align*}
    Here, the first and fifth lines follow from the definition of the cost function, the second line follows since $d(x, B - y) - d(x, B)$ can only be non-zero when $x \in C_B(y)$, the third line follows from the fact that $A \subseteq B - y$ for any $y \in B \setminus A$, and the fourth line from the fact that $\{C_B(y)\}_{y \in B}$ partition the set $V$.
\end{proof}

Let $S^\star$ denote an optimal solution to the $k$-median problem in the metric space $(V,w,d)$ and let $i \in [k' + 1, m]$. We denote by $S'_i$ the projection $\proj(S^\star, S_i)$ of the optimal solution $S^\star$ onto the set $S_i$. It follows that
\begin{align*}
    \cost(S_{i-1}) - \cost(S_i) &\leq \min_{y \in S_i \setminus S'_i} \left(\cost(S_i - y) - \cost(S_i) \right)\\
    &\leq \frac{1}{|S_i \setminus S'_i|} \cdot \sum_{y \in S_i \setminus S'_i} \left(\cost(S_i - y) - \cost(S_i) \right)\\
    &\leq \frac{1}{i - k} \cdot \sum_{y \in S_i \setminus S'_i} \left(\cost(S_i - y) - \cost(S_i) \right)\\
    &\leq \frac{1}{i - k} \cdot \left(\cost(S'_i) - \cost(S_i) \right)\\
    &\leq \frac{2}{i - k} \cdot \cost(S^\star) = \frac{2}{i - k} \cdot \OPT_k(V).
\end{align*}
The first line follows directly from how the algorithm chooses which point to remove from $S_i$.\footnote{Note that, for analytic purposes, we only take the minimum over $y \in S_i \setminus S_i'$ instead of all of $S_i$ in this inequality.} The second line follows from the fact that the minimum value within a set of real numbers is upper bounded by its average. The third line follows from the fact that $|S_i \setminus S'_i| \geq |S_i| - |S'_i| \geq i - k$. 
The fourth line follows from \Cref{claim:greedy:proof}.
Finally, the fifth line follows from \Cref{lem:projection}, which implies that $\cost(S'_i) \leq \cost(S_i) + 2\cdot \cost(S^\star)$.

\subsection{Implementation}

We now show how to implement $\Restr$ to run in time $O(n |X| \log n)$. 
Our implementation uses similar data structures to the randomized local search in \cite{focs/BCLP24}. 
For each $x \in V$, we initialize a list $L_x$ which contains all of the points $y \in X$, sorted in increasing order according to the distances $d(x,y)$. We denote the $i^{th}$ point in the list $L_x$ by $L_x(i)$.
We maintain the invariant that, at each point in time, each of the lists in $\mathcal L = \{L_x\}_{x \in V}$ contains exactly the points in $S$.
Thus, at each point in time, we have that
$\cost(S) = \sum_{x \in V} w(x)d(x, L_x(1))$.
By implementing each of these lists using a balanced binary tree, we can initialize them in $O(n |X| \log n)$ time and update them in $O(n \log n)$ time after each removal of a point from $S$.
Since the $S$ initially has size $|X|$, the total time spent updating the lists is $O(n|X|\log n)$.
We also explicitly maintain the clustering $\mathcal C = \{C_S(y)\}_{y \in S}$ induced by the lists $\mathcal L$, where $C_S(y) := \{x \in V \mid L_x(1) = y\}$. We can initialize these clusters in $O(n)$ time given the collection of lists $\mathcal L$ and update them each time a list in $\mathcal L$ is updated while only incurring a $O(1)$ factor overhead in the running time. We now show that, using these data structures, we can implement each iteration of the greedy algorithm in $O(n)$ time. Since the algorithm runs for at most $|X|$ iterations, this gives us the desired running time.

\medskip
\noindent \textbf{Implementing an Iteration of Greedy.}
Using the lists $\mathcal L$ and clustering $\mathcal C$, we can compute
$$ \texttt{change}(y) \leftarrow \sum_{x \in C_S(y)} w(x)(d(x,L_x(2)) - d(x,L_x(1)) ) $$
for each $y \in S$.
Since any point $x \in V$ appears in exactly one cluster in $\mathcal C$, this takes $O(n)$ time in total.
By observing that removing $y$ from $S$ causes each point $x \in C_S(y)$ to be reassigned to the center $L_x(2)$,
we can see that $\texttt{change}(y)$ is precisely the value of $\cost(S - y) - \cost(S)$. Since minimizing $\cost(S - y) - \cost(S)$ is equivalent to minimizing $\cost(S - y)$, it follows that
$$ \min_{z \in S}\texttt{change}(z) = \min_{z \in S} (\cost(S - z) - \cost(S)) = \min_{z \in S} \cost(S - z). $$
Thus, we let $y \leftarrow \arg \min_{z \in S}\texttt{change}(z)$, remove $y$ from $S$, and proceed to update the data structures. Excluding the time taken to update the data structures, the iteration takes $O(n)$ time.

\section{Our Deterministic $k$-Median Algorithm}\label{sec:our alg}

In this section, we prove \Cref{thm:main:fast}, which we restate below.

\begin{theorem}\label{thm:main:fast:restate}
    There is a deterministic algorithm for $k$-median that, given a metric space of size $n$, computes a $O(\log(n/k))$-approximate solution in $\tilde O(nk)$ time.
\end{theorem}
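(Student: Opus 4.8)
The plan is to combine the query-efficient hierarchical algorithm sketched in \Cref{sec:low query} with the deterministic restricted $k$-median algorithm of \Cref{thm:restr}, replacing the (exponential-time) exact restricted solver used in that sketch by $\Restr_{2k}$. Concretely, I would run Phase~I exactly as before: starting from $Q_0 = \{V\}$, repeatedly split each part of $Q_{i-1}$ into two (roughly) equal halves to obtain $Q_i$, for $i = 1,\dots,\ell$ where $\ell := \lceil \log_2(n/k) \rceil$, so that each part of $Q_\ell$ has size $O(k)$. In Phase~II, process the levels $i = \ell, \ell-1, \dots, 0$ bottom-up. Set $V_{\ell+1} := V$. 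For each $X \in Q_i$, apply the algorithm of \Cref{thm:restr} to the metric subspace $(X, w, d)$ with ground set $X$, restriction set $X \cap V_{i+1}$, target $k' = 2k$, and parameter $k$, to obtain a solution $S_X \subseteq X \cap V_{i+1}$ of size at most $2k$; then set $V_i := \bigcup_{X \in Q_i} S_X$. Finally, take the size-$2k$ set $V_0$ and run the Guha et al.\ extraction step (\Cref{lem:intro:sparse}, applied once) to reduce it to a size-$k$ set $S$ at the cost of an $O(1)$ factor in the approximation; output $S$.

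For the approximation analysis, the key point is that the restriction sets are always small: for $X \in Q_i$ with $i < \ell$ we have $\calP(X) = \{X_1, X_2\}$ and $X \cap V_{i+1} = S_{X_1} \cup S_{X_2}$, which has size at most $4k$. Hence when we invoke \Cref{thm:restr} on $(X,w,d)$ with ground set $X \cap V_{i+1}$ of size $O(k)$ and $k' = 2k$, the additive error term $O(\log(|X \cap V_{i+1}|/(k'-k+1)))\cdot \OPT_k(X) = O(\log(O(k)/k))\cdot \OPT_k(X) = O(1)\cdot \OPT_k(X)$. For the base level $i = \ell$, $X$ itself has size $O(k)$ and we restrict to $X$, so again the error is $O(1)\cdot\OPT_k(X)$. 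This gives exactly the inequality $\cost(S_X, X) \leq \cost(\bigcup_{X' \in \calP(X)} S_{X'}, X) + O(1)\cdot\OPT_k(X)$ (using $\cost(X \cap V_{i+1}, X) \le \cost(X \cap V_{i+1}) = \cost(S_{X_1}\cup S_{X_2})$ restricted to $X$, which is at most $\sum_{X'}\cost(S_{X'}, X')$), which is precisely \Cref{claim:apx:1}. From there the telescoping argument of \Cref{sec:low query} --- summing \Cref{claim:apx:1} over the $\ell$ levels and bounding the base-level term by \Cref{cor:improper} as in \Cref{eq:myeqq,eq:myeqq2} --- yields $\cost(V_0) \leq O(\ell)\cdot\OPT_k(V) = O(\log(n/k))\cdot\OPT_k(V)$. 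The final extraction step preserves this up to a constant.

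For the running time, Phase~I is purely combinatorial and takes $\tilde O(n)$ time (we never query $d$). In Phase~II at level $i$, for each $X \in Q_i$ we run $\Restr_{2k}$ on the subspace $(X,w,d)$ with ground set $X \cap V_{i+1}$ of size $O(k)$; by \Cref{thm:restr} this costs $\tilde O(|X|\cdot |X\cap V_{i+1}|) = \tilde O(|X|\cdot k)$ time. Summing over $X \in Q_i$ gives $\tilde O(k \sum_{X \in Q_i}|X|) = \tilde O(nk)$ per level (the parts of $Q_i$ partition $V$), and summing over the $\ell + 1 = O(\log(n/k))$ levels gives $\tilde O(nk)$ total. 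The base level $i = \ell$ runs $\Restr_{2k}$ on each $X \in Q_\ell$ with ground set $X$ of size $O(k)$, costing $\tilde O(|X|^2) = \tilde O(k|X|)$, which again sums to $\tilde O(nk)$. The extraction step costs $\tilde O(k^2) = \tilde O(nk)$.

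The main obstacle I anticipate is bookkeeping the off-by-constant-factor issues in the sizes: the restriction set has size up to $4k$ rather than exactly $2k$, parts of the partitions need not split evenly so $\ell$ should be $\lceil \log_2(n/k)\rceil$ and part sizes are only $O(k)$ up to constants, and the solutions $S_X$ have size up to $2k$ so their unions have size up to $4k$ --- none of this is deep, but the statement and analysis of \Cref{thm:restr} must be instantiated with parameters chosen so that $\log(|X\cap V_{i+1}|/(k'-k+1))$ is genuinely $O(1)$ at every level, including verifying that using $k' = 2k$ (so $k' - k + 1 = k+1$) against a ground set of size $O(k)$ indeed gives a constant. A secondary point requiring a little care is confirming that the one-shot application of \Cref{lem:intro:sparse} to extract $k$ centers from the $2k$-point set $V_0$ is valid here --- i.e.\ that $V_0$ can be viewed as arising from $\beta$-approximate solutions on subspaces in the sense that lemma requires, or alternatively just invoking a standard deterministic $O(1)$-approximation on the $2k$ weighted points induced by the clustering of $V$ around $V_0$, which runs in $\tilde O(k^2)$ time; I would use whichever framing is cleanest given the surrounding text.
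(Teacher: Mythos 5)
Your proposal is correct and follows essentially the same route as the paper: the same two-way partition hierarchy with $\ell = \lceil \log_2(n/k)\rceil$ levels, bottom-up calls to $\Restr_{2k}$ restricted to the $O(k)$-size union of child solutions (so the additive error of \Cref{thm:restr} is $O(1)\cdot\OPT_k(X)$ per level), the same telescoping via \Cref{claim:apx:1} and \Cref{cor:improper} to get $O(\log(n/k))\cdot\OPT_k(V)$, and the same final extraction step (projecting $V$ onto the $2k$-point set $V_0$ and running a deterministic $O(1)$-approximation, i.e.\ Mettu--Plaxton, on the induced weighted instance in $\tilde O(k^2)$ time), with the same $\tilde O(nk)$ running-time accounting. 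The bookkeeping concerns you flag (sizes $4k$, $k'-k+1 = k+1$, uneven splits) are handled in the paper exactly as you anticipate and cause no difficulty.
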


\subsection{Our Algorithm}

Let $(V, w, d)$ be a metric space of size $n$ and $k \leq n$ be an integer.
We also define the value $\ell := \lceil \log_2(n/k) \rceil$, which we use to describe our algorithm.
Our algorithm works in 3 \emph{phases}, which we describe below.

\medskip
\noindent
\textbf{Phase I:} In the first phase of our algorithm, we construct a sequence of partitions $Q_0,\dots, Q_\ell$ of the metric space $V$, such that the partition $Q_i$ is a \emph{refinement} of the partition $Q_{i-1}$.\footnote{i.e.~for each element $X \in Q_{i-1}$, there are elements $X_1,\dots, X_q \in Q_i$ such that $X=X_1\cup \dots \cup X_q$.} We start off by setting $Q_0 := \{V\}$. Subsequently, for each $i = 1,\dots,\ell$, we construct the partition $Q_i$ as follows:

\begin{wrapper}
    Initialize $Q_i \leftarrow \varnothing$. Then, for each $X \in Q_{i-1}$, arbitrarily partition $X$ into subsets $X_1$ and $X_{2}$ such that $\left||X_1| - |X_2|\right| \leq 1$, and add these subsets to $Q_i$.
\end{wrapper}
For $X \in Q_{i-1}$, we define $\mathcal P(X) := \{X' \in Q_{i} \mid X' \subseteq X\}$.

\medskip
\noindent
\textbf{Phase II:}
The second phase of our algorithm proceeds in \emph{iterations}, where we use the partitions $\{Q_i\}_i$ to compute the solution in a bottom-up manner.
Let $V_{\ell + 1}$ denote the set of points $V$.
For each $i = \ell ,\dots, 0$, our algorithm constructs $V_i$ as follows:

\begin{wrapper}
    For each $X \in Q_i$, let $S_X$ be the solution obtained by running $\Restr_{2k}$ on the subspace $(X,w,d)$, restricting the output to be a subset of $X \cap V_{i+1}$. Finally, we define $V_i := \bigcup_{X \in Q_i} S_X$.
\end{wrapper}

\medskip
\noindent
\textbf{Phase III:} Consider the set $V_0$
which contains $2k$ points and let $\sigma : V \longrightarrow V_0$ be the projection from $V$ to $V_0$. Define a weight function $w_0$ on each $y \in V_0$ by $w_0(y) := \sum_{x \in \sigma^{-1}(y)} w(x)$ (i.e.~$w_0(y)$ is the total weight of all points in $V$ that are projected onto $y$). Let $S$ be the solution obtained by running the algorithm of Mettu-Plaxton \cite{MettuP00} on the metric space $(V_0, w_0, d)$.

\medskip
\noindent
\textbf{Output:}
The solution $S$ is the final output of our algorithm.

\subsection{Analysis}\label{sec:anal:ouralg}

We now analyze our algorithm by bounding its approximation ratio and running time.

\subsubsection*{Approximation Ratio}

We begin by proving the following claim, which, for any set $X$ within a partition $Q_i$, allows us to express the cost of the solution $S_X$ w.r.t.~the metric subspace $(X,w,d)$ in terms of the costs of the solutions $\{S_{X'}\}_{X' \in \mathcal P(X)}$. 

\begin{claim}\label{lem:apx:1}
For any set $X \in \bigcup_{i=0}^{\ell - 1} Q_i$, we have that
$$ \cost(S_X, X) \leq \sum_{X' \in \mathcal P(X)} \cost(S_{X'}, X') + O(1) \cdot \OPT_k(X). $$
\end{claim}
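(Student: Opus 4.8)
The plan is to mimic the proof of \Cref{claim:apx:1} from the technical overview, but replacing the exact optimal restricted solution with the output of $\Restr_{2k}$ and invoking \Cref{thm:restr} (with $k' = 2k$) in place of the projection lemma argument. Recall that in the overview version, $S_X$ was defined as the \emph{optimal} solution with $S_X \subseteq R$ where $R = \bigcup_{X' \in \mathcal P(X)} S_{X'}$, and the key step was $\cost(S_X, X) \leq \cost(S', X)$ for the projection $S' = \proj(S^\star, R)$. Here, $S_X$ is instead the output of $\Restr_{2k}$ run on $(X, w, d)$ with output restricted to $X \cap V_{i+1}$; I first need to identify this restriction set with $R$, i.e.~observe that $X \cap V_{i+1} = \bigcup_{X' \in \mathcal P(X)} S_{X'} = R$, which follows since $V_{i+1} = \bigcup_{Y \in Q_{i+1}} S_Y$ and the only sets $Y \in Q_{i+1}$ meeting $X$ are exactly the two children $X' \in \mathcal P(X)$ (using that $Q_{i+1}$ refines $Q_i$).

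First I would establish $|R| \leq 2k$: each $S_{X'}$ has size $2k$ by the guarantee of \Cref{thm:restr}, but wait — that gives $|R| \leq 4k$, not $2k$. Let me reconsider: actually the relevant point is just that we run $\Restr_{2k}$ on the metric subspace $(X, w, d)$ with ground set for centers being $R = X \cap V_{i+1}$, so $|X|$ in the statement of \Cref{thm:restr} is really $|R|$, and the approximation overhead is $O(\log(|R|/(2k - k + 1))) = O(\log(|R|/(k+1)))$. Since $|R| = O(k)$, this overhead is $O(1)$. Concretely, I would apply \Cref{thm:restr} to the metric space $(X, w, d)$ with the subset $R \subseteq X$ playing the role of ``$X$'' in that theorem, parameter $k' = 2k$, and target size $k$. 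This yields a solution $S_X \subseteq R$ of size $2k$ with
$$ \cost(S_X, X) \leq \cost(R, X) + O\!\left(\log\!\left(\frac{|R|}{k+1}\right)\right) \cdot \OPT_k(X) = \cost(R, X) + O(1) \cdot \OPT_k(X), $$
where I use $|R| = O(k)$.

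The remaining step is exactly as in \Cref{claim:apx:1}: bound $\cost(R, X)$ by $\sum_{X' \in \mathcal P(X)} \cost(S_{X'}, X')$. Since $\{X'\}_{X' \in \mathcal P(X)}$ partitions $X$, we have $\cost(R, X) = \sum_{X' \in \mathcal P(X)} \cost(R, X')$, and for each $X'$, since $S_{X'} \subseteq R$, we have $\cost(R, X') \leq \cost(S_{X'}, X')$ (adding more centers only decreases cost). Chaining these gives the claim.

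The main obstacle I anticipate is the bookkeeping around the size of $R$ and making sure the $O(1)$ overhead is genuinely justified: I need $|X \cap V_{i+1}| = |S_{X_1} \cup S_{X_2}| \leq 4k$ (or $\leq 2k + 2k$), and then $\log(|R|/(k+1)) = \log(O(k)/(k+1)) = O(1)$. A secondary subtlety is the boundary/degenerate cases — when $|X| < 2k$ the algorithm $\Restr_{2k}$ returns $X$ itself (or $R$ itself if $|R| < 2k$), in which case $\cost(S_X, X) = \cost(R, X)$ directly and the inequality holds trivially; I would mention this briefly. Everything else is a routine transcription of the overview proof, now citing \Cref{thm:restr} instead of the projection lemma directly (though \Cref{thm:restr}'s proof itself rests on the projection lemma).
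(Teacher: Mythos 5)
Your proposal is correct and follows essentially the same argument as the paper: identify the restriction set $X \cap V_{i+1}$ with $R = \bigcup_{X' \in \mathcal P(X)} S_{X'}$, apply \Cref{thm:restr} with $k' = 2k$ and $|R| \leq 4k$ to get the $O(1) \cdot \OPT_k(X)$ overhead, and then bound $\cost(R,X) = \sum_{X' \in \mathcal P(X)} \cost(R, X') \leq \sum_{X' \in \mathcal P(X)} \cost(S_{X'}, X')$. Your resolution of the $|R| \leq 4k$ bookkeeping matches the paper's own computation $|R|/(k+1) \leq 4k/(k+1) \leq 4$, so there is no gap.
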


\begin{proof}
    Let $R$ denote the set $\bigcup_{X' \in \mathcal P(X)} S_{X'}$. We obtain the solution $S_X$ by calling $\Restr_{2k}$ on the metric space $(X,w,d)$ while restricting the output to be a subset of  $R$. Thus, it follows from \Cref{thm:restr} that
    \begin{equation*}\label{eq:apx:1}
        \cost(S_X, X) \leq \cost(R, X) + O \!\left( \log \! \left(\frac{|R|}{2k - k + 1} \right)\right) \cdot \OPT_k(X).
    \end{equation*}
    By observing that $|R|/(k+1) \leq 4k / (k + 1) \leq 4$,
    it follows that $\cost(S_X, X) \leq \cost(R, X) + O(1) \cdot \OPT_k(X)$.
    Finally, the claim follows since
    $$ \cost(R,X) = \sum_{X' \in \mathcal P(X)} \cost(R, X') \leq \sum_{X' \in \mathcal P(X)} \cost(S_{X'}, X').\qedhere $$
\end{proof}

Using \Cref{lem:apx:1}, we now prove the following claim.

\begin{claim}\label{lem:apx:2}
For any $i \in [0, \ell]$, we have that
$$ \cost(V_0, V) \leq \sum_{X \in Q_i} \cost(S_{X}, X) + O(i) \cdot \OPT_k(V). $$
\end{claim}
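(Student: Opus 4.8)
The plan is to prove Claim~\ref{lem:apx:2} by induction on $i$, using Claim~\ref{lem:apx:1} as the inductive step. The base case $i = 0$ is immediate: since $Q_0 = \{V\}$ and $V_0 = S_V$ (the single set in $Q_0$ has $\calP(V)$ equal to the elements of $Q_1$, but at the base level we just need $\cost(V_0, V) = \cost(S_V, V) \le \cost(S_V, V) + O(0) \cdot \OPT_k(V)$), so the inequality holds with the $O(i)$ term vanishing. Actually, more carefully: I need to track how $V_0$ relates to the $S_X$'s at level $i$, so I should first establish that $\cost(V_0, V) \le \sum_{X \in Q_1} \cost(S_X, X) + O(1) \cdot \OPT_k(V)$ directly from Claim~\ref{lem:apx:1} applied to $X = V$, noting that $\cost(S_V, V) = \cost(V_0, V)$ by construction since $V_0 = S_V$ when $Q_0 = \{V\}$... wait, $V_0 = \bigcup_{X \in Q_0} S_X = S_V$, so indeed $\cost(V_0, V) = \cost(S_V, V)$.

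The inductive step is the heart of the argument. Suppose the claim holds for some $i \in [0, \ell - 1]$, i.e. $\cost(V_0, V) \le \sum_{X \in Q_i} \cost(S_X, X) + O(i) \cdot \OPT_k(V)$. I want to push the sum from level $i$ to level $i+1$. For each $X \in Q_i$ (with $i \le \ell - 1$, so $X$ is not a leaf and $\calP(X)$ is well-defined), Claim~\ref{lem:apx:1} gives $\cost(S_X, X) \le \sum_{X' \in \calP(X)} \cost(S_{X'}, X') + O(1) \cdot \OPT_k(X)$. Summing over all $X \in Q_i$ and using that $\{\calP(X)\}_{X \in Q_i}$ partitions $Q_{i+1}$ (since $Q_{i+1}$ refines $Q_i$), the first terms telescope into $\sum_{X' \in Q_{i+1}} \cost(S_{X'}, X')$. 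For the error terms, I need $\sum_{X \in Q_i} \OPT_k(X) \le \OPT_k(V)$: this holds because the sets in $Q_i$ partition $V$, so taking the restriction of a global optimal solution $S^\star$ to each $X$ gives $\OPT_k(X) \le \OPT_k(X, V) \le \cost(S^\star, X)$, and $\sum_{X \in Q_i} \cost(S^\star, X) = \cost(S^\star, V) = \OPT_k(V)$. (Here I'm using $\OPT_k(X) \le \OPT_k(X,V)$, which is trivially true, not needing the factor-2 loss from Corollary~\ref{cor:improper}.) Combining, $\sum_{X \in Q_i} \cost(S_X, X) \le \sum_{X' \in Q_{i+1}} \cost(S_{X'}, X') + O(1) \cdot \OPT_k(V)$, and plugging into the inductive hypothesis yields the claim for $i + 1$ with the error term growing from $O(i)$ to $O(i) + O(1) = O(i+1)$.

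The main subtlety — and the step I'd be most careful about — is making sure the $O(1)$ constants from each level's application of Claim~\ref{lem:apx:1} are genuinely uniform (independent of $i$ and of $X$), so that summing $\ell$ of them gives $O(\ell)$ rather than something that blows up. This is fine here because the constant in Claim~\ref{lem:apx:1} comes from the projection lemma (a fixed factor of $2$) together with the bound $|R|/(k+1) \le 4$, both of which are level-independent; I'd state explicitly that there is an absolute constant $c$ such that $\cost(S_X, X) \le \sum_{X' \in \calP(X)} \cost(S_{X'}, X') + c \cdot \OPT_k(X)$ for every $X$, and then the induction gives $\cost(V_0, V) \le \sum_{X \in Q_i} \cost(S_X, X) + c i \cdot \OPT_k(V)$. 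A second minor point is the edge case $i = \ell$: Claim~\ref{lem:apx:1} is only stated for $X \in \bigcup_{j=0}^{\ell-1} Q_j$, so the induction naturally runs for $i$ from $0$ to $\ell - 1$ producing the statement up to $i = \ell$, which is exactly the range claimed. No recursion into the leaves is needed for this claim — the leaf-level bound (via Corollary~\ref{cor:improper} and optimality of $S_X$ on leaves) is invoked separately afterward, as in equation~\eqref{eq:myeqq2} of the overview, to finish the approximation analysis.
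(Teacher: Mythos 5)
Your proposal follows essentially the same route as the paper's proof: induction on $i$ with \Cref{lem:apx:1} as the inductive step, the refinement property collapsing $\sum_{X \in Q_i}\sum_{X' \in \calP(X)}$ into $\sum_{X' \in Q_{i+1}}$, and the per-level error controlled by comparing $\sum_{X \in Q_i} \OPT_k(X)$ against $\OPT_k(V)$, with the (correct and worthwhile) explicit remark that the per-level constant from \Cref{lem:apx:1} is uniform.

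The one place you deviate from the paper is also the one step that is wrong as written. You claim that $\OPT_k(X) \le \OPT_k(X,V)$ is ``trivially true'' and that you thereby avoid the factor-$2$ loss of \Cref{cor:improper}. The trivial inequality goes the other way: since $X \subseteq V$, every candidate solution for $\OPT_k(X)$ (centers restricted to $X$) is also a candidate for $\OPT_k(X,V)$ (centers anywhere in $V$), so $\OPT_k(X,V) \le \OPT_k(X)$, and the inequality you want can be strict in the wrong direction (take $k=1$, three points of $X$ at pairwise distance $2$, and a point of $V \setminus X$ at distance $1$ from each: $\OPT_1(X) = 4$ but $\OPT_1(X,V) = 3$). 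Hence the chain $\OPT_k(X) \le \OPT_k(X,V) \le \cost(S^\star, X)$ fails at its first link. The repair is exactly the paper's step: invoke \Cref{cor:improper} to get $\OPT_k(X) \le 2 \cdot \OPT_k(X,V) \le 2 \cdot \cost(S^\star, X)$, which gives $\sum_{X \in Q_i} \OPT_k(X) \le 2 \cdot \OPT_k(V)$; the extra factor of $2$ is absorbed into the uniform per-level constant, so your induction and the final $O(i)$ bound go through unchanged.
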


\begin{proof}
    We prove this claim by induction. Note that the base case where $i = 0$ holds trivially. Now, suppose that the claim holds for some $i - 1 \in [0, \ell - 1]$. 
    Then we have that
    \begin{align*}
        \cost(V_0, V) &\leq \sum_{X \in Q_{i-1}} \cost(S_{X}, X) + O(i-1) \cdot \OPT_k(V)\\
        &\leq \sum_{X \in Q_{i-1}} \left( \sum_{X' \in \mathcal P(X)} \cost(S_{X'}, X') + O(1) \cdot \OPT_k(X) \right) + O(i-1) \cdot \OPT_k(V)\\
        &= \sum_{X \in Q_{i-1}}  \sum_{X' \in \mathcal P(X)} \cost(S_{X'}, X') + O(1) \cdot \sum_{X \in Q_{i-1}} \OPT_k(X) + O(i-1) \cdot \OPT_k(V)\\
        &\leq \sum_{X \in Q_{i}} \cost(S_{X}, X) + O(1) \cdot \OPT_k(V) + O(i-1) \cdot \OPT_k(V)\\
        &= \sum_{X \in Q_{i}} \cost(S_{X}, X) + O(i) \cdot \OPT_k(V).
    \end{align*}
    The second line follows from \Cref{lem:apx:1}.
    In the fourth line, we are using the fact that, for an optimal solution $S^\star$ of $V$ and any partition $Q$ of $V$, we have
    $$\sum_{X \in Q} \OPT_k(X) \leq 2 \cdot \sum_{X \in Q} \OPT_k(X,V) \leq 2\cdot \sum_{X \in Q} \cost(S^\star, X) = 2 \cdot \OPT_k(V),$$
    where the first inequality follows from \Cref{cor:improper}.
\end{proof}

We get the following immediate corollary from \Cref{lem:apx:2} by setting $i = \ell$.

\begin{corollary}\label{eq:aa}
We have that
$$\cost(V_0, V) \leq \sum_{X \in Q_\ell} \cost(S_{X}, X) + O \!\left( \log \! \left(\frac{n}{k} \right)\right) \cdot \OPT_k(V).$$
\end{corollary}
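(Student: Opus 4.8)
The plan is to obtain the corollary as a direct specialization of \Cref{lem:apx:2} to the top level of the hierarchy. Recall that \Cref{lem:apx:2} states, for every $i \in [0,\ell]$, that $\cost(V_0, V) \leq \sum_{X \in Q_i} \cost(S_X, X) + O(i) \cdot \OPT_k(V)$. So the first (and essentially only) step is to plug in $i = \ell$, which immediately yields
$$ \cost(V_0, V) \leq \sum_{X \in Q_\ell} \cost(S_X, X) + O(\ell) \cdot \OPT_k(V). $$
It then remains to rewrite the additive error term in the claimed form. Since we have defined $\ell := \lceil \log_2(n/k) \rceil$, we have $\ell \leq \log_2(n/k) + 1$, so in the regime of interest (say $n \geq 2k$, so that $\log_2(n/k) \geq 1$) we get $O(\ell) = O(\log(n/k))$; the remaining degenerate cases with $k \leq n < 2k$ have $\ell \leq 1$ and are easily checked directly. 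Substituting this bound on $O(\ell)$ into the display above gives exactly the inequality in the statement of the corollary.

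The only point requiring any attention is bookkeeping of constants, and this is already handled upstream: the induction in \Cref{lem:apx:2} loses one additive $O(1) \cdot \OPT_k(V)$ term per level of the hierarchy — this comes from combining \Cref{lem:apx:1} with the bound $\sum_{X \in Q} \OPT_k(X) \leq 2 \cdot \OPT_k(V)$ for any partition $Q$ of $V$, which is \Cref{cor:improper} — and unrolling over the $\ell$ levels accumulates precisely $O(\ell) \cdot \OPT_k(V)$, which is what appears above. Consequently there is no genuine obstacle to overcome here: the corollary is literally \Cref{lem:apx:2} evaluated at $i = \ell$ together with the definition of $\ell$, so the proof is a one-line substitution.
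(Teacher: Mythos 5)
Your proof is correct and matches the paper's own argument exactly: the paper obtains the corollary as an immediate consequence of \Cref{lem:apx:2} with $i = \ell$, using $\ell = \lceil \log_2(n/k) \rceil = O(\log(n/k))$. Your extra remark about the degenerate regime $k \leq n < 2k$ is harmless bookkeeping and does not change anything.
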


Using \Cref{eq:aa}, we prove the following lemma.

\begin{lemma}\label{lem:bound on V0}
    We have that $\cost(V_0) = O(\log(n/k)) \cdot \OPT_k(V)$.
\end{lemma}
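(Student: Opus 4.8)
The plan is to bound $\cost(V_0)$ by combining Corollary~\ref{eq:aa} with an upper bound on $\sum_{X \in Q_\ell} \cost(S_X, X)$. By Corollary~\ref{eq:aa}, it suffices to show that $\sum_{X \in Q_\ell} \cost(S_X, X) = O(\log(n/k)) \cdot \OPT_k(V)$; in fact I expect to show the stronger bound $\sum_{X \in Q_\ell} \cost(S_X, X) = O(1) \cdot \OPT_k(V)$, which suffices.

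First I would observe that for each $X \in Q_\ell$ the solution $S_X$ is the output of $\Restr_{2k}$ on the subspace $(X,w,d)$ with the output restricted to lie in $X \cap V_{\ell+1} = X \cap V = X$ (since $V_{\ell+1} := V$). Thus the restriction is vacuous, and by Theorem~\ref{thm:restr} (applied with the ground set $X$ itself, parameters $k' = 2k$ and $k$) we get
\begin{equation*}
    \cost(S_X, X) \leq \cost(X, X) + O\!\left(\log\!\left(\frac{|X|}{k+1}\right)\right) \cdot \OPT_k(X).
\end{equation*}
Now $\cost(X,X) = 0$ since every point of $X$ is its own nearest center, and $|X| \leq \lceil n/2^\ell \rceil = O(k)$ by the construction of the partition $Q_\ell$ in Phase~I (each level halves the set sizes and $\ell = \lceil \log_2(n/k)\rceil$), so $\log(|X|/(k+1)) = O(1)$. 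Hence $\cost(S_X, X) \leq O(1) \cdot \OPT_k(X)$ for each $X \in Q_\ell$.

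Summing over $X \in Q_\ell$ and using the same averaging-over-a-partition bound employed in the proof of Claim~\ref{lem:apx:2} — namely $\sum_{X \in Q_\ell} \OPT_k(X) \leq 2\sum_{X \in Q_\ell} \OPT_k(X,V) \leq 2\sum_{X \in Q_\ell}\cost(S^\star, X) = 2 \cdot \OPT_k(V)$, where $S^\star$ is an optimal solution for $(V,w,d)$ and the first inequality is Corollary~\ref{cor:improper} — we obtain $\sum_{X \in Q_\ell}\cost(S_X,X) = O(1) \cdot \OPT_k(V)$. Plugging this into Corollary~\ref{eq:aa} gives $\cost(V_0) = \cost(V_0, V) \leq O(1)\cdot\OPT_k(V) + O(\log(n/k))\cdot\OPT_k(V) = O(\log(n/k))\cdot\OPT_k(V)$, as required. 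The only point requiring minor care is verifying that the set sizes at level $\ell$ are indeed $O(k)$ given the $||X_1|-|X_2||\leq 1$ splitting rule and the ceiling in the definition of $\ell$; this is a routine induction showing every $X \in Q_i$ has $|X| \leq \lceil n/2^i \rceil$, so no real obstacle arises.
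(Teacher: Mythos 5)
Your proposal is correct and follows essentially the same route as the paper: apply Theorem~\ref{thm:restr} to each $X \in Q_\ell$ (where the restriction is vacuous and $\cost(X,X)=0$), sum using Corollary~\ref{cor:improper} and the partition argument, and plug into Corollary~\ref{eq:aa}. The only difference is cosmetic: you use $|X| = O(k)$ at level $\ell$ to bound the log factor by $O(1)$, whereas the paper simply bounds it by $O(\log(n/k))$, which suffices for the same conclusion.
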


\begin{proof}
    By \Cref{thm:restr}, it follows that, for any $X \in Q_\ell$,
\begin{equation}\label{eq:aaa}
    \cost(S_X, X) \leq \cost(X,X) + O \!\left( \log \! \left(\frac{|X|}{k} \right)\right) \cdot \OPT_k(X) \leq O \!\left( \log \! \left(\frac{n}{k} \right)\right) \cdot \OPT_k(X,V).
\end{equation}
Combining \Cref{eq:aa} and \Cref{eq:aaa}, we get that
$$ \cost(V_0, V) \leq \sum_{X \in Q_\ell} \cost(S_{X}, X) + O \!\left( \log \! \left(\frac{n}{k} \right)\right) \cdot \OPT_k(V) $$
$$ \leq O \!\left( \log \! \left(\frac{n}{k} \right)\right) \cdot \sum_{X \in Q_\ell} \OPT_k(X,V) + O \!\left( \log \! \left(\frac{n}{k} \right)\right) \cdot \OPT_k(V) \leq  O \!\left( \log \! \left(\frac{n}{k} \right)\right) \cdot \OPT_k(V). \qedhere $$
\end{proof}

By \Cref{lem:bound on V0}, we get that $V_0$ is a $O(\log(n/k))$-bicriteria approximation of size $2k$. Using the extraction technique of \cite{focs/GuhaMMO00} (see \Cref{lem:intro:sparse} or \Cref{lem:bicri=sparsifier}), which allows us to compute an exact solution to the $k$-median problem from a bicriteria approximation while only incurring constant loss in the approximation ratio, it follows that the solution $S$ constructed in Phase~III is a $O(\log(n/k))$-approximation and has size at most $k$.

\subsubsection*{Running Time}

We begin by proving the following lemma, which summarizes the relevant properties of the partitions constructed in Phase I of the algorithm.

\begin{lemma}\label{lem:partitions:main}
    For each $i \in [0, \ell]$, the set $Q_i$ is a partition of $V$ into $2^i$ many subsets of size at most $n/|Q_i| + 2$.
\end{lemma}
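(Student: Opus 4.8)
The plan is to prove both assertions simultaneously by induction on $i$, maintaining the invariant that $Q_i$ is a partition of $V$ with $|Q_i| = 2^i$ and every $X \in Q_i$ satisfies $|X| \le n/2^i + 2$ (note that $n/|Q_i| = n/2^i$, so this is exactly the stated bound). The base case $i = 0$ is immediate: $Q_0 = \{V\}$ is a partition into $2^0 = 1$ part, of size $n = n/2^0 \le n/2^0 + 2$.

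For the inductive step, assume the invariant holds for $Q_{i-1}$. By construction, Phase~I builds $Q_i$ by replacing each $X \in Q_{i-1}$ with two sets $X_1, X_2$ that are disjoint and have union $X$, i.e.\ that partition $X$. Since the sets in $Q_{i-1}$ partition $V$, so do all the resulting pieces, and hence $Q_i$ is a partition of $V$; moreover, since each of the $2^{i-1}$ sets of $Q_{i-1}$ contributes exactly two sets to $Q_i$, we get $|Q_i| = 2 \cdot 2^{i-1} = 2^i$. It remains to bound the part sizes. Fix $X \in Q_{i-1}$; by the inductive hypothesis $|X| \le n/2^{i-1} + 2$. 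The constraint $\bigl||X_1| - |X_2|\bigr| \le 1$ forces $\max(|X_1|, |X_2|) = \lceil |X|/2 \rceil$, so
$$ \max(|X_1|, |X_2|) \;=\; \left\lceil \frac{|X|}{2} \right\rceil \;\le\; \left\lceil \frac{n/2^{i-1} + 2}{2} \right\rceil \;=\; \left\lceil \frac{n}{2^i} + 1 \right\rceil \;=\; \left\lceil \frac{n}{2^i} \right\rceil + 1 \;\le\; \frac{n}{2^i} + 2, $$
where the last inequality uses $\lceil x \rceil \le x + 1$. Since every set of $Q_i$ arises as such an $X_1$ or $X_2$, this completes the induction.

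Honestly there is no substantive obstacle here; the statement is routine bookkeeping. The only points needing a little care are the ceiling arithmetic displayed above and the degenerate case in which some $X \in Q_{i-1}$ is a singleton or empty, forcing one of $X_1, X_2$ to be empty. This is harmless: an empty set trivially meets the size bound, and the count $|Q_i| = 2^i$ is preserved if empty parts are permitted. If one insists on genuine partitions with nonempty parts, one may instead split only the parts of size at least $2$ and leave singletons untouched; the bound $|X| \le n/2^i + 2$ continues to hold for all parts (a singleton has size $1 \le n/2^i + 2$), and the rest of the argument goes through verbatim.
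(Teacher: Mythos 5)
Your proof is correct and follows essentially the same route as the paper: induction on $i$, with the size bound coming from the ceiling estimate $\lceil \frac{1}{2}(n/2^{i-1}+2)\rceil \le n/2^i + 2$; your remark about possibly empty parts matches the paper's footnote allowing empty sets in the partitions.
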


\begin{proof}
    We define $Q_0$ as $\{V\}$, which is a trivial partition of $V$. Now, suppose that this statement holds for the partition $Q_i$, where $0 \leq i < \ell$. The algorithm constructs $Q_{i+1}$ by taking each $X \in Q_i$ and further partitioning $X$ into subsets $X_1$ and $X_{2}$, such that difference in the sizes of these subsets is at most $1$.
    We can also observe that the number of subsets in the partition $Q_{i+1}$ is $2 \cdot |Q_{i}| = 2 \cdot 2^{i} = 2^{i+1}$.\footnote{Note that we do not necessarily guarantee that all of the sets in these partitions are non-empty.} Since each subset $X \in Q_i$ has size at most $n/|Q_i| + 2$, it follows that each subset in $Q_{i+1}$ has size at most
    $$ \left\lceil \frac{1}{2} \cdot \left( \frac{n}{|Q_i|} + 2 \right) \right\rceil \leq \frac{n}{2 \cdot |Q_i|} + \frac{2}{2} + 1 \leq \frac{n}{|Q_{i+1}|} + 2.\qedhere $$
\end{proof}

\medskip
\noindent \textbf{Bounding the Running Time.}
We now bound the running time of our algorithm.
The running time of Phase I of our algorithm is $O(n \ell) = \tilde O(n)$, since it takes $O(n)$ time to construct each partition $Q_i$ given the partition $Q_{i-1}$. The running time of Phase III of our algorithm is $\tilde O(nk)$, since constructing the mapping $w_0$ takes $O(nk)$ time and running the Mettu-Plaxton algorithm on an input of size $2k$ takes $\tilde O(k^2)$ time.
Thus, we now focus on bounding the running time of Phase~II.

We can first observe that the running time of the $i^{th}$ iteration in Phase II is dominated by the total time taken to handle the calls to the algorithm $\Restr$.
In the first iteration (when $i = \ell$), we make $|Q_\ell|$ many calls to $\Restr$, each one on a subspace of size at most $n/|Q_\ell| + 2$ (by \Cref{lem:partitions:main}). Thus, by \Cref{thm:restr}, the time taken to handle these calls is at most
\begin{equation*}
\tilde O(1) \cdot \left(\frac{n}{|Q_\ell|} + 2\right)^2 \cdot |Q_\ell| 
\leq \tilde O(1) \cdot  \left(\frac{n}{|Q_\ell|}\right)^2 \cdot |Q_\ell| 
\leq \tilde O(1) \cdot  \frac{n^2}{|Q_\ell|} 
\leq \tilde O(1) \cdot \frac{n^2}{2^\ell} 
\leq \tilde O(nk),
\end{equation*}
where the first inequality follows from the fact that $n/|Q_\ell| \geq 1$, the third from \Cref{lem:partitions:main}, and the fourth since $2^\ell \geq n/k$.
It follows that the time taken to handle these calls to $\Restr$ is $\tilde O(nk)$.
For each subsequent iteration (when $0 \leq i < \ell$), we make $|Q_i|$ many calls to $\Restr$,
each one on a subspace $(X,w,d)$ of size at most $n/|Q_i| + 2$ (by \Cref{lem:partitions:main}), where the solution is restricted to the set $X \cap V_{i+1}$, which has size at most
$ |X \cap V_{i+1}| = |S_{X_1} \cup S_{X_2}| \leq 4k, $
where $\calP(X) = \{X_1, X_2\}$ and $S_{X_1}$ and $S_{X_2}$ are computed in the previous iteration.
It follows from \Cref{thm:restr} that the time taken to handle these calls is at most
$ \tilde O(1) \cdot (n/|Q_i| + 2) \cdot 4k \cdot |Q_i| \leq \tilde O(nk)$.
It follows that the total time taken to handle these calls to $\Restr$ during the $i^{th}$ iteration of Phase~II is $\tilde O(nk)$.
Hence, the total time spent handling calls to $\Restr$ is $\ell \cdot \tilde O(nk) = \tilde O(nk)$. The running time of our algorithm follows.

\section{Our Lower Bound for Deterministic $k$-Median}\label{sec:lower}

In this section, we prove the following theorem.

\begin{theorem}\label{thm:lower bound}
    For every $\delta \geq 1$, any deterministic algorithm for the $k$-median problem that has a running time of $O(kn\delta)$ on a metric space of size $n$ has an approximation ratio of
    $$ \Omega \! \left( \frac{\log n}{\log\log n + \log k + \log \delta} \right). $$
\end{theorem}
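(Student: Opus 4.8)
The plan is to adapt the adaptive-adversary argument of Bateni et al.~\cite{BateniEFHJMW23} to the static deterministic setting. Fix a deterministic algorithm $\A$ with running time $O(kn\delta)$, so in particular $\A$ makes at most $T = O(kn\delta)$ queries to the distance oracle. I would build the hard instance incrementally: the adversary maintains a partial metric (a set of pairwise distances already revealed) and, whenever $\A$ queries a pair $(x,y)$ that has not been pinned down, returns a value that is simultaneously consistent with \emph{two} target completions of the metric — a ``uniform-type'' metric $d_0$ in which $\OPT_k$ is small, and a ``stretched'' metric $d_1$ in which any $k$-subset that $\A$ could commit to is forced to pay a large cost. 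Because $\A$ is deterministic and makes at most $T$ queries, after the run the adversary picks whichever of $d_0,d_1$ makes $\A$'s output $S$ bad, then \emph{re-runs} $\A$ on that fixed metric: determinism guarantees the same query sequence and the same output $S$, so the lower bound transfers to a genuine (non-adaptive) input.

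The concrete construction I would use: take $n$ points, and think of them as (roughly) $\binom{n}{?}$-style blocks with a hierarchy of $L$ levels, where $L \approx \log n / (\log\log n + \log k + \log\delta)$. At the bottom, most pairwise distances are $1$; the adversary keeps a ``budget'' of how many distance-$1$ neighbours each point may have, chosen so that with only $T = O(kn\delta)$ queries the algorithm cannot distinguish the level at which a point truly sits. The key combinatorial lemma (the analogue of \Cref{lem:number-of-neighbours-of-distance-one} mentioned in the overview) is: if a point $x$ that the algorithm places in its solution $S$ has ``too few'' revealed distance-$1$ neighbours, the adversary can complete the metric so that the cluster served by $x$ contains $\Omega(n/k)$ points all at distance $\Omega(L)$ from $x$ (by routing them through the hierarchy), forcing $\cost(S) = \Omega((n/k)\cdot L \cdot k) = \Omega(nL)\cdot(\text{unit})$, while the ``uniform'' completion $d_0$ admits a $k$-center solution of cost $O(n)$. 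Quantitatively, each revealed distance-$1$ edge costs the algorithm one query, there are $\le k$ centers, and to ``protect'' all centers at all $L$ levels the algorithm would need $\gtrsim k \cdot n^{1/L}$ queries per level, i.e.~roughly $n^{1/L} \gtrsim \delta\log\log n$ — solving this for $L$ gives the claimed bound. The approximation ratio is then the ratio of the stretched cost to the uniform cost, namely $\Omega(L)$.

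The steps, in order: (1) define the $L$-level hierarchical point set and the two candidate metrics $d_0$ (cost $O(n)$) and $d_1$ (cost $\Omega(nL)$) on it, checking both satisfy the triangle inequality and share the ``public'' distances; (2) describe the adversary's online answering rule and prove the \emph{consistency invariant} — after any set of $\le T$ answered queries, both completions remain globally consistent with all answers; (3) prove the combinatorial lemma bounding how much query budget is needed to force the algorithm's output to be cheap under $d_1$, and conclude that with only $O(kn\delta)$ queries some center of $S$ is ``unprotected''; (4) fix the bad metric, invoke determinism to re-run and get a fixed instance, and compute the ratio $\cost(S)/\OPT_k \ge \Omega(L) = \Omega\!\big(\log n/(\log\log n + \log k + \log\delta)\big)$; finally (5) note $\Delta = \poly(n)$ so the $\tilde O$ hiding in the algorithm's running time is harmless and the reduction from ``running time'' to ``query complexity'' is immediate since each query costs unit time.

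The main obstacle I expect is step~(3): making the counting tight enough that the \emph{same} quantity $n^{1/L}$ controls both the query lower bound (so that $T = O(kn\delta)$ is genuinely insufficient) and the cost gap (so the ratio is exactly $\Theta(L)$), while correctly handling the $\log\log n$ term — this arises because the hierarchy has $L$ levels and the adversary must spread a ``decoy'' budget across all of them, so the per-level threshold is $\delta \cdot \text{(something like } L) $ rather than just $\delta$, and getting the $L = \log n/(\log\log n+\cdots)$ fixed point to close requires care with the $\lceil\cdot\rceil$'s and with how cluster costs telescope up the hierarchy (this is exactly where the $k$-median objective, as opposed to $k$-center, makes the argument go through: we get to sum a large cluster's worth of $\Omega(L)$ distances rather than just exhibit one far point).
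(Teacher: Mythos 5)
Your high-level plan (adaptive adversary answering distance queries, then exploiting determinism to re-run the algorithm on the finalized metric) is the same as the paper's, but as written the proposal has two genuine gaps. First, the two-completion framing does not yield an approximation-ratio bound: you attribute the large cost of the output $S$ to the ``stretched'' completion $d_1$ and the small optimum to the ``uniform'' completion $d_0$, and then in step (4) divide one by the other. A ratio lower bound needs \emph{both} facts in a single metric. The paper avoids this entirely: there is only one adaptively built metric (shortest paths in a graph whose weight-$1$ edges are the open--open queried pairs, with a hub $g^\star$ at distance $\log_M n$ from everything), and in that one metric it shows $\cost(S) \geq (n/2)\lfloor\log_M n\rfloor$ while $\OPT_k \leq 3n$. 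The hub is not cosmetic: it caps the diameter at $2\log_M n$ so that the few ``closed'' points (at most $n/\log n$ of them, by a degree/edge-counting argument) cannot inflate the cost of the cheap solution built around any surviving open point. Your sketch has no analogous device guaranteeing that the completion which punishes $S$ still admits a cheap $k$-subset.

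Second, your key combinatorial step is not forcible as stated. The adversary cannot arrange that ``the cluster served by an unprotected center contains $\Omega(n/k)$ points at distance $\Omega(L)$''---cluster sizes are determined by nearest-center assignment, and a single bad center proves nothing about $\cost(S)$ since other centers may absorb those points. What is actually needed (and what the paper proves) is a ball-growth bound around \emph{every} output center: with degree threshold $M = 10k\delta\log n$, each $z \in S$ has at most $M(M-1)^{i-1}$ points at distance exactly $i$, so at most $2kM^{r-1} \leq n/2$ points lie within distance $r-1$ of all of $S$, and the remaining $n/2$ points each pay at least $r = \lfloor \log_M n\rfloor$. That lemma in turn rests on two technical facts your plan does not address: consistency (every answered query equals the shortest-path distance in the final graph, which is what licenses the re-run argument) and the fact that every short reported distance is witnessed by a path of unit-weight edges, which is what makes the $M(M-1)^{i-1}$ count go through. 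Finally, a small quantitative slip: the per-center threshold should be $n^{1/L} \approx k\delta\log n$ (not $\delta\log\log n$); you recover the right final expression, but the counting in step (3) needs the $k$ factor to survive the edge-budget argument ($O(nk\delta)$ queries plus the $O(nk)$ simulated queries from $S$ give at most $O(nk\delta)$ edges, hence at most $n/\log n$ closed nodes).
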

\Cref{thm:main:lower} follows from \Cref{thm:lower bound} by setting $\delta = \tilde O(1)$.

\subsection{The Proof Strategy}

Our proof of \Cref{thm:lower bound} is a modification and slight simplification of a lower bound given in the work of \cite{BateniEFHJMW23}, which provides lower bounds for various $k$-clustering problems in different computational models. 

Our proof uses the following approach:
Consider any deterministic algorithm $\A$ for the $k$-median problem. Given a metric space $(V,d)$ as input,
this algorithm can only access information about the metric space by querying the distance $d(x,y)$ between two points $x$ and $y$ in $V$. We design an adversary $\adv$ which takes as input a deterministic algorithm $\A$ and constructs a metric space $(V, \mathfrak d)$ on which the algorithm $\A$ has a bad approximation ratio. The adversary does this by running the algorithm $\A$ on a set of points $V$ and \emph{adaptively} answering the distance queries made by the algorithm in a specific way, where the queries made by the algorithm and the responses given by the adversary are a function of the previous queries and responses.
Throughout this process, the adversary constructs a metric $\mathfrak d$ on the point set $V$ which is consistent with the responses that it has given to the distance queries and also guarantees that the solution $S$ output by $\A$ at the end of this process has a bad approximation ratio compared to the optimal solution in $(V, \mathfrak d)$. Since the algorithm $\A$ is deterministic, its output when run on the metric space $(V, \mathfrak d)$ is the same as the solution $S$ that it outputs during this process.

\subsection{The Adversary $\adv$}

The adversary $\adv$ begins by creating a set of $n$ points $V$, which it feeds to an instance of $\A$ as its input.\footnote{We remark that the algorithm $\A$ is \emph{not} being given a metric space as input, since there is no metric associated with the points in $V$ at this point.} Whenever the algorithm $\A$ attempts to query the distance between two points, the adversary determines the response to the query using the strategy that we describe below.
We begin by describing the notation that we use throughout the rest of this section.

\medskip
\noindent
\textbf{Notation.}
Throughout this section, we use parameters $\delta > 1$ and $M := 10k\delta \log n$.
The parameter $\delta$ is chosen such that the query complexity of the deterministic algorithm is at most $nk\delta$.
Given a weighted graph $H$ and two nodes $u$ and $v$ of $H$, we denote the weight of the edge $(u,v)$ by $w(u,v)$ and denote the weight of the shortest path between $u$ and $v$ in $H$ by $\dist_H(u,v)$.

\medskip
\noindent
\textbf{The Graph $G$.}
The adversary $\adv$ maintains a simple, undirected graph $G$ which it uses to keep track of the queries that have already been made. The graph $G$ has $n + 1$ nodes: one special node $g^\star$ and $n$ nodes $v_x$ corresponding to each $x \in V$.
At any point in time, each node in $G$ has a \emph{status} which is either \textit{open} or \textit{closed}.
All of the nodes are initially open except $g^\star$.
Initially, the graph $G$ consists of $n$ edges of weight $\log_M n$ between $g^\star$ and each of the other nodes $v_x$ in $G$.
We note that the point $g^\star$ ensures that the distance between any two nodes in $G$ is always at most $2\log_M n$.

\medskip
\noindent
\textbf{The Auxiliary Graph $\widehat G$.}
At any point in time, we denote by $\widehat G$ the graph derived from $G$ by adding edges of weight $1$ between each pair of open nodes in $G$.
For instance, the graph $\widehat G$ initially consists of a clique of size $n$ made out of the nodes $\{v_x\}_{x \in V}$, all of whose edges have weight $1$, together with the node $g^\star$ and edges of weight $\log_M n$ between $g^\star$ and the nodes $\{v_x\}_{x \in V}$ in the clique.

\subsubsection*{Handling a Query}
We now describe how the adversary $\adv$ handles a query $\langle x,y \rangle$ and updates the graph $G$.
Depending on the status of nodes $v_x$ and $v_y$, $\adv$ does one of the following.

\medskip
\noindent
\textbf{Case 1.}
If there already exists an edge between the nodes $v_x$ and $v_y$ in $G$ (which means the distance between $x$ and $y$ is already fixed), the adversary returns the weight $w(v_x, v_y)$ as the distance between $x$ and $y$.

\medskip
\noindent
\textbf{Case 2.}
If both of $v_x$ and $v_y$ are open, $\adv$ reports the distance between $x$ and $y$ as $1$.
It then adds an edge in $G$ between $v_x$ and $v_y$ with weight $w(v_x,v_y) = 1$.
Finally, if there are any open nodes of degree at least $M$, $\adv$ sets the status of these nodes to closed.

\medskip
\noindent
\textbf{Case 3.}
If at least one of $v_x$ or $v_y$ is closed, the adversary considers the auxiliary graph $\widehat G$ (corresponding to the current graph $G$).
$\adv$ then reports the distance of $x$ and $y$ as the weighted shortest path between $v_x$ and $v_y$ in $\widehat G$, i.e.~as $\dist_{\widehat G}(v_x,v_y)$.
This shortest path contains at most one edge between two open nodes (otherwise, there would be a shortcut since the subgraph of $\widehat G$ on open nodes is a clique with all edges having weight $1$).
If such an edge $(u,v)$ between two open nodes within this shortest path exists, $\adv$ adds an edge between $u$ and $v$ in $G$ of weight $w(u,v) = 1$.
Then, $\adv$ adds an edge between $v_x$ and $v_y$ in $G$ of weight $\dist_{\widehat G}(v_x,v_y)$ (the reported distance between $x$ and $y$).
Finally, if there are any open nodes of degree at least $M$, $\adv$ sets the status of these nodes to closed.

\medskip
\noindent
\textbf{Constructing the Final Graph and Metric.}
After at most $nk\delta$ many queries, the deterministic algorithm returns a subset $S \subseteq V$ of size $k$ as its output.\footnote{We can assume w.l.o.g.~that the set $S$ contains exactly $k$ points, since we can add extra arbitrarily if $|S| \leq k$.}
Once this happens, the adversary proceeds to make some final modifications to the graph $G$.
Namely, $\adv$ pretends that the distance of each point in $S$ to every other point in $V$ is queried.
In other words, $\adv$ makes the same changes to $G$ that would occur if $\A$ had queried $\langle x, y \rangle$ for each $y \in S$ and each $x \in V$.
The order of these artificial queries is arbitrary.
We denote this final graph by $G_{f}$.

Finally, we define the metric $\mathfrak d$ on $V$ to be the weighted shortest path metric in $\widehat G_f$, i.e.~we define $\mathfrak d(x,y) := \dist_{\widehat G_f}(v_x,v_y)$ for each $x,y \in V$. The adversary then returns the metric space $(V, \mathfrak d)$, which is an instance on which $\A$ returns a solution with a bad approximation ratio.

\subsection{Analysis}

We show that the final metric $(V, \mathfrak d)$ is consistent with the answers given by the adversary to the queries made by the deterministic algorithm.
In other words, if we run $\A$ on this metric, it will return the same solution $S$. 
We defer the proof of the following lemma to \Cref{sec:consistency-proof}.

\begin{lemma}[Consistency of Metric]\label{lem:queried-weights-equal-shortest-paths}
    For each $x$ and $y$ where $\langle x,y \rangle$ is \textbf{queried}, the distance of points $x$ and $y$ in the final metric (i.e.~$\dist_{\widehat G_f}(v_x,v_y)$) equals the value returned by $\adv$ in response to the query $\langle x,y \rangle$ (i.e.~$w(v_x, v_y)$ in $\widehat G_f$).
\end{lemma}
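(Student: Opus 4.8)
The plan is to reduce \Cref{lem:queried-weights-equal-shortest-paths} to a single invariant maintained by $\adv$, and then to prove that invariant by induction over the atomic operations that $\adv$ performs (answering a query via Case~1/2/3, inserting a Case~3 shortcut edge, inserting the Case~3 edge $(v_x,v_y)$, and closing a node). The invariant I would use is: \emph{at every point during the run of $\adv$, every edge $(u,v)$ of the current graph $G$ satisfies $w(u,v) = \dist_{\widehat G}(u,v)$ in the corresponding auxiliary graph $\widehat G$.} This suffices: edges are never deleted from $G$, and an edge weight is never changed once set (a repeated query on an already-connected pair is intercepted by Case~1), so the edge $(v_x,v_y)$ created while answering a query $\langle x,y\rangle$ survives in $G_f$ with exactly the weight that $\adv$ reported. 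Applying the invariant to $G_f$ then gives $\dist_{\widehat G_f}(v_x,v_y) = w(v_x,v_y)$, the value $\adv$ reported, which is the statement of the lemma; the same applies to the ``artificial'' queries that $\adv$ makes after $\A$ halts, since those are processed by the same rules.

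The inductive step rests on two elementary observations about how $\widehat G$ evolves. First, inserting into $G$ an edge $(a,b)$ whose weight equals the current value of $\dist_{\widehat G}(a,b)$ leaves \emph{all} pairwise distances in $\widehat G$ unchanged: inserting an edge can only decrease distances, and any walk that uses the new edge can be rerouted along an old shortest path between $a$ and $b$ of equal length, so no distance decreases either. This covers (i) the weight-$1$ edges inserted between two open nodes in Case~2 and the Case~3 shortcut edge — two open nodes are already at distance exactly $1$ in $\widehat G$, since every edge has weight at least $1$ — and (ii) the Case~3 edge $(v_x,v_y)$, which is inserted with weight $\dist_{\widehat G}(v_x,v_y)$ by construction (and one checks that inserting the shortcut edge first does not change this value, again by observation (i)). Second, closing a node only \emph{removes} edges from $\widehat G$ (the weight-$1$ edges from that node to the still-open nodes that were not already recorded in $G$), so every pairwise distance in $\widehat G$ is non-decreasing across a closing operation. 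Together these give that $\dist_{\widehat G}(u,v)$ is non-decreasing in time for every pair $(u,v)$.

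With these in hand the induction is short. The base case is the initial graph: each clique edge has weight $1$, which equals its distance in $\widehat G$ because all weights are at least $1$; and each edge $(g^\star,v_x)$ has weight $\log_M n$, which is its distance in $\widehat G$ because any alternative route from $g^\star$ to $v_x$ uses two edges incident to $g^\star$ and so has length at least $2\log_M n$. For the inductive step: a Case~1 query changes nothing; a step that inserts an edge of weight equal to the current $\widehat G$-distance between its endpoints preserves all distances (by the first observation), so old edges stay consistent and the new edge is consistent by construction; and a node-closing step only increases distances (by the second observation), so for every already-present edge $(u,v)$ we still have $\dist_{\widehat G}(u,v) \le w(u,v)$ because the edge is present, while $\dist_{\widehat G}(u,v) \ge w(u,v)$ holds because the distance equaled $w(u,v)$ before the step and distances are non-decreasing. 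Hence the invariant is maintained throughout, including in $G_f$.

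The main obstacle is the first elementary observation applied inside a Case~3 step: one must verify that inserting the weight-$1$ shortcut edge and then the edge $(v_x,v_y)$ of weight $\dist_{\widehat G}(v_x,v_y)$ never creates a genuine shortcut that undercuts some previously reported distance, and in particular that the value $\dist_{\widehat G}(v_x,v_y)$ used as the new edge weight is not itself changed by first inserting the shortcut edge. Both points follow from the rerouting argument, but they are the places where one has to be careful about the order of the two insertions within a single step and about the fact that, although $\widehat G$ loses weight-$1$ clique edges as nodes close, any such edge that mattered for an earlier reported distance was recorded into $G$ at that time and therefore persists.
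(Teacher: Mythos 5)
Your proposal is correct and follows essentially the same route as the paper's proof: the same invariant (every edge of $G$ has weight equal to its $\widehat G$-distance), an induction over the adversary's operations, the same rerouting argument showing that inserting an edge whose weight equals the current $\widehat G$-distance preserves all distances (the paper's Case~3), and monotonicity of distances under node closings (the paper's Case~2), merely repackaged as two monotonicity observations. The only nitpick is the base case: the initial $G$ contains only the $g^\star$-edges (the weight-$1$ clique edges live in $\widehat G$, not $G$), and a path from $g^\star$ to $v_x$ avoiding the direct edge costs at least $\log_M n + 1$ because its first edge alone costs $\log_M n$ (not because it uses two $g^\star$-edges) --- a harmless slip that does not affect the argument.
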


We proceed with the analysis of the approximation ratio.
We show that the cost of $S$ as a $k$-median solution in this space is comparably higher than the cost of the optimum $k$-median solution in this space. In particular, we show that the cost of any arbitrary set of $k$ centers containing at least one point corresponding to an \textbf{open} node is small.

\begin{claim}\label{lem:number-of-neighbours-of-distance-one}
    For each $z \in S$ and $1 \leq i \leq \log_M(n)$, there are at most $M\cdot (M-1)^{i-1}$ points whose distance to $z$ is equal to $i$.
\end{claim}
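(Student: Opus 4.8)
# Proof Proposal for Claim \ref{lem:number-of-neighbours-of-distance-one}

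\textbf{The plan is to} bound the number of points at distance exactly $i$ from $z$ by counting walks of length $i$ in the final auxiliary graph $\widehat G_f$, using the degree constraints that the adversary maintains. The key structural fact to exploit is that after the artificial queries involving $z$ are processed, the node $v_z$ becomes \emph{closed} — indeed, the adversary pretends to query $\langle z, y\rangle$ for \emph{every} $y \in V$, so $v_z$ acquires degree $n \geq M$ and is closed. More generally, I claim the right invariant to track is: in the graph $\widehat G_f$, every node has bounded degree when we restrict attention to edges of weight $1$. Specifically, a closed node has degree at most $M$ in $G_f$ (it stops receiving new edges once closed, but could have received up to... ) — here I need to be careful, so let me restate the intended invariant precisely below.

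\textbf{The key invariant} I would establish first (as a preliminary lemma or inline) is that throughout the adversary's execution, \emph{every open node has degree less than $M$ in $G$}, and \emph{once a node becomes closed its degree in $G$ never increases beyond what it was when it closed}, which is at most $M$ (the degree check triggers closure as soon as degree reaches $M$, and each query adds at most one edge per endpoint, so at the moment of closure the degree is at most $M$). Consequently, in $G_f$ every node has degree at most $M$, \emph{except} possibly closed nodes that were closed by the artificial queries — but the artificial queries only add edges incident to points of $S$, and a point $z \in S$ gets degree up to $n$. So the correct statement is: every node of $\widehat G_f$ that is \emph{not} in $S$ has degree at most $M$ among weight-$1$ edges, and moreover weight-$1$ edges in $\widehat G_f$ only connect... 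Here I need to double check: $\widehat G_f$ adds weight-$1$ edges between all \emph{open} nodes of $G_f$, and $G_f$ itself contains weight-$1$ edges added in Cases 2 and 3. A shortest path realizing $\mathfrak d(z,y) = i$ with $i \geq 1$ consists of $i$ edges each of weight $1$ (since all edge weights are positive integers and $g^\star$-edges have weight $\log_M n$, a path of weighted length $i \leq \log_M n$ using a $g^\star$-edge is impossible unless $i = \log_M n$ exactly and the path is a single edge, which would give distance to $g^\star$, not to another $v_y$; I will handle this boundary case by noting $g^\star \notin \{v_y : y \in V\}$).

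\textbf{The main counting argument} then goes: the set of points at distance exactly $i$ from $z$ injects into the set of nodes reachable from $v_z$ by a path of exactly $i$ weight-$1$ edges in $\widehat G_f$ but not fewer. I bound the number of such nodes by $\mathrm{deg}_1(v_z) \cdot \prod (\text{max intermediate degree} - 1)$ along the walk. The first step from $v_z$ has at most $M$ choices — wait, $v_z$ may have large weight-$1$ degree in $\widehat G_f$ if many $v_y$ are open and connected to it; but a weight-$1$ edge from $v_z$ in $\widehat G_f$ to an open node $v_y$ means $\mathfrak d(z, y) = 1$, and there can be up to... This is exactly the $i = 1$ case, which asserts at most $M$ such points. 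So I must argue $\mathrm{deg}_1(v_z) \leq M$ in $\widehat G_f$: since $v_z$ is \emph{closed} in $G_f$, it has no weight-$1$ edges added by the $\widehat{\cdot}$ operation (those only go between \emph{open} nodes), so its weight-$1$ neighbors are exactly the weight-$1$ edges in $G_f$ incident to $v_z$ — but these could be up to $n$ from the artificial queries! \textbf{This is the main obstacle}, and resolving it is the crux: I expect the resolution is that the artificial queries involving $z$, processed via Case 3, add edges of weight $\dist_{\widehat G}(v_z, v_y)$ which is \emph{not} $1$ for most $y$ — only the genuine weight-$1$ edges created in Cases 2/3 during the real execution count, and those are bounded because $v_z$ closed after at most $M$ of them. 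So I would carefully track: weight-$1$ edges incident to any fixed node are only ever created while that node is open (Case 2 creates a weight-$1$ edge between two open nodes; Case 3 creates at most one weight-$1$ edge and only between two open nodes), hence at most $M$ weight-$1$ edges are incident to $v_z$ in total, giving $\mathrm{deg}_1(v_z) \leq M$ in $G_f$, and since $v_z$ is closed, also in $\widehat G_f$. For intermediate nodes on the path, each open node $u$ has at most $M$ weight-$1$ neighbors in $\widehat G_f$ (at most $n-1$ to other open nodes? no — again at most $M-1$ since open means degree $< M$ in $G$, but $\widehat G$ adds a clique among open nodes...). I will need to argue that at the \emph{end}, in $G_f$, open nodes still have $G$-degree $< M$, and the weight-$1$ neighbors of an open node $u$ in $\widehat G_f$ are: its weight-$1$ $G_f$-neighbors plus all other open nodes. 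Hmm — this could again be large. The genuine resolution must be that by the end of the process, the number of open nodes is small, OR that traversing through open nodes is restricted. I would resolve this by proving: \emph{any shortest path in $\widehat G_f$ contains at most one edge between two open nodes} (stated already in Case 3's description for $\widehat G$, and it persists to $\widehat G_f$). Hence a shortest path of length $i$ from $v_z$ uses at most one "open-open" edge and $i-1$ edges that are genuine $G_f$-edges of weight $1$. Each endpoint of a genuine weight-$1$ $G_f$-edge was open when the edge was created and has at most $M-1$ \emph{other} such genuine edges. This gives the branching factor $M-1$ after the first step, and $M$ at the first step (accounting for the one permitted open-open edge), yielding the claimed bound $M \cdot (M-1)^{i-1}$ by induction on $i$. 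I would write this induction out carefully, with the base case $i=1$ handled by the degree-$M$ bound on $v_z$, and the inductive step tracking whether the single allowed open-open edge has been "used up" yet.
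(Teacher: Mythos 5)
Your overall skeleton (bound the genuine weight-$1$ degree of every node by $M$, then count walks of weight-$1$ edges to get $M\cdot(M-1)^{i-1}$) is the same as the paper's, and your observation that weight-$1$ edges incident to a node are only created while that node is open is correct and is essentially the paper's case $i=1$. But there are two genuine gaps. First, your assertion that a shortest path realizing $\mathfrak d(z,y)=i$ consists of $i$ edges of weight $1$ does not follow from integrality of the weights plus excluding $g^\star$-edges: every edge created in Case~3 has integer weight at least $2$ and can perfectly well lie on a shortest path of $\widehat G_f$ — indeed the artificial edge $(v_z,v_y)$ itself, of weight $i$, \emph{is} a shortest path by \Cref{lem:queried-weights-equal-shortest-paths}. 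The fact you actually need is \Cref{lem:path-of-length-1-edges}: every $G_f$-edge of weight at most $\log_M n$ is witnessed by a path of the \emph{same} total weight made only of genuine weight-$1$ edges of $G_f$. This is a property of the construction (the auxiliary open--open edge on the shortest path chosen in Case~3 is materialized as a genuine edge at query time) and is proved by induction over the adversary's execution; it does not come for free, and your proposal neither proves it nor identifies it. Note also that the paper's route to invoking it goes through the artificial queries: since $z\in S$, the edge $(v_z,v_y)$ exists in $G_f$ for every $y$, its weight equals the final distance by \Cref{lem:queried-weights-equal-shortest-paths}, and the decomposition is applied to that single edge.

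Second, your plan to count paths of $\widehat G_f$ that may contain one auxiliary open--open edge does not yield the claimed bound. Since $v_z$ is closed, the auxiliary edge cannot be the first step, so it cannot be ``accounted for'' in the initial factor $M$; and at whatever step it does occur, the branching factor is the number of open nodes, which can be $\Theta(n)$ rather than $M-1$ (all open nodes are mutually at auxiliary distance $1$). Your argument gives no control over this, so the count $M\cdot(M-1)^{i-1}$ does not follow as written. The paper avoids auxiliary edges entirely: after applying \Cref{lem:path-of-length-1-edges}, the witnessing path consists solely of genuine weight-$1$ edges of $G_f$, each node is incident to at most $M$ of these (your degree argument, or the paper's case $i=1$), and the walk count with the $v_{p_j}\neq v_{p_{j-2}}$ exclusion gives exactly $M\cdot(M-1)^{i-1}$. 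To repair your proof you would need to prove the decomposition claim and then drop the auxiliary-edge allowance, at which point it coincides with the paper's argument.
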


We defer the proof of this claim to \Cref{sec:bound-on-neighbors}.

\begin{lemma}\label{lem:lower-bound-on-approx-of-Z}
    The cost of $S$ as a $k$-median solution is at least $(n/2) \cdot  \lfloor \log_M n \rfloor $.
\end{lemma}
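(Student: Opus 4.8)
The plan is to show that every point in $V$ is far from the returned set $S$, so that the total cost is large. The key structural fact I would exploit is \Cref{lem:number-of-neighbours-of-distance-one}: for each $z \in S$ and each integer $1 \le i \le \log_M n$, the number of points at distance exactly $i$ from $z$ is at most $M (M-1)^{i-1} \le M^i$. Summing this over $i$ from $1$ to some threshold $t := \lfloor \log_M n \rfloor$ (or perhaps $t - 1$, to be safe with floors), the number of points whose distance to $z$ is at most $t$ is at most $\sum_{i=1}^{t} M^i \le 2 M^{t} \le 2 M^{\log_M n} = 2n/M^{\,\epsilon}$ for a suitable slack — more precisely, I want to choose the threshold slightly below $\lfloor \log_M n\rfloor$ so that the geometric sum is at most, say, $n/(2k)$. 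Since $M = 10 k \delta \log n$ is polynomially large, a constant additive loss in the exponent costs only a factor of $M = \tilde O(1)$, which is enough: $\sum_{i=1}^{\lfloor\log_M n\rfloor - 1} M^i \le \frac{M}{M-1} M^{\lfloor \log_M n\rfloor - 1} \le 2 M^{\log_M n - 1} = 2n/M \le n/(2k)$, using $M \ge 4k$.

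With that in hand, I would union-bound over the $k$ centers in $S$: the number of points in $V$ whose distance to \emph{some} center in $S$ is at most $\lfloor \log_M n\rfloor - 1$ is at most $k \cdot n/(2k) = n/2$. Therefore at least $n/2$ points of $V$ have distance to $S$ strictly greater than $\lfloor \log_M n \rfloor - 1$, i.e.\ at least $\lfloor \log_M n\rfloor$ (distances in the shortest-path metric $\mathfrak d$ on $\widehat G_f$ are... — here I need that all queried/relevant distances are integers or at least that the relevant distances to $S$ take values in $\{1, 2, \ldots\} \cup \{\text{large}\}$; this follows from the construction, where edges among open nodes have weight $1$ and the only other short distances are built up from such unit edges, while $g^\star$-edges have weight $\log_M n$). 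Hence $\cost(S) = \sum_{x \in V} w(x)\, \mathfrak d(x, S) \ge (n/2)\cdot \lfloor \log_M n\rfloor$, where I am using that the instance is unweighted, i.e.\ $w(x) = 1$ for all $x$ (or $w \equiv 1$ in this lower-bound construction), giving the claimed bound.

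The main obstacle I anticipate is handling the boundary/rounding carefully: making sure the geometric sum $\sum_i M(M-1)^{i-1}$ truly stays below $n/(2k)$ when summed up to the right threshold, and confirming that "distance $> \lfloor\log_M n\rfloor - 1$" can be upgraded to "$\ge \lfloor \log_M n\rfloor$", which requires knowing the set of possible distance values in $\widehat G_f$. A secondary subtlety is that \Cref{lem:number-of-neighbours-of-distance-one} is stated for $z \in S$ specifically (not arbitrary points), which is exactly what the union bound over $S$ needs, so that lines up; but I should double-check that the claim's range $1 \le i \le \log_M n$ covers the threshold I pick. If the constants don't quite work out with threshold $\lfloor \log_M n\rfloor - 1$, I would instead pick the largest $t$ with $\sum_{i=1}^t M^i \le n/(2k)$ and argue that $t \ge \lfloor \log_M n \rfloor - O(1)$, then absorb the $O(1)$ into the statement (or note it only changes the final bound by an additive constant and the lemma as stated has enough slack). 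Everything else — the union bound and the final summation over $x \in V$ — is routine.
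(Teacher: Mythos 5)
Your proposal is correct and follows essentially the same route as the paper's proof: apply \Cref{lem:number-of-neighbours-of-distance-one}, sum the geometric series to bound the number of points within distance $\lfloor \log_M n\rfloor - 1$ of each center, union-bound over the $k$ centers using $M = 10k\delta\log n$ to conclude at least $n/2$ points lie at distance at least $\lfloor\log_M n\rfloor$ from $S$, and sum. The rounding/integrality issues you flag are benign (small distances in $\widehat G_f$ are integers by the unit-edge structure, and the slack $M \ge 10k$ absorbs the centers themselves), and the paper handles the constants the same way, just organized as $(1-2k/M)n \ge n/2$.
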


\begin{proof}
    Assume $r$ is the biggest integer such that $M^r \leq n$, i.e.~$r = \lfloor \log_M n \rfloor$.
    According to \Cref{lem:number-of-neighbours-of-distance-one}, for each $1 \leq i \leq r-1$, there are at most $M(M-1)^{i-1}$ many points with distance $i$ to $z$ for any arbitrary $z \in S$.
    As a result, the total number of points whose distance to $z$ is at most $r-1$ is bounded by
    $$ 1 + M + M(M-1) + M(M-1)^2 + \cdots + M(M-1)^{r-2} \leq 2M^{r-1}. $$
    Since $|S| = k$, there are at most $2kM^{r-1}$ points whose distance to $S$ is less than or equal to $r$.
    We conclude there are at least $n - 2kM^{r-1} \geq n - 2kn/M = (1 - 2k/M) n $  points whose distance to $S$ is greater than or equal to $r$.
    Hence, the cost of $S$ is at least
    $ (1 - 2k/M)n \cdot r $.
    Note that
    $(1 - 2k/M) = (1 - 1/(5\delta\log n)) \geq 1/2$, which implies
    $ (1 - 2k/M)n \cdot r \geq  (n/2)\cdot \lfloor \log_M n \rfloor $.
\end{proof}

\begin{claim}\label{lem:number-of-closed-nodes}
    The number of closed nodes in $G_f$ is at most $(10k\delta/M)n$.
\end{claim}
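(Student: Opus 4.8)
The number of closed nodes in $G_f$ is at most $(10k\delta/M)n$.

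The plan is to bound the number of closed nodes by a counting argument on the edges of $G$ (equivalently $G_f$) that are added by the adversary. A node becomes closed only when its degree reaches $M$, so each closed node ``consumes'' at least $M$ incident edges. If I can upper bound the total number of edges ever added to $G_f$, dividing by $M$ (with a factor $2$ for double-counting endpoints) will give the bound on the number of closed nodes.

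First I would count the edges added to $G$ over the whole process. There are two sources: (i) genuine queries made by the deterministic algorithm $\A$, of which there are at most $nk\delta$, and (ii) the artificial ``final'' queries, where $\adv$ pretends every point of $S$ is queried against every point of $V$, contributing at most $k \cdot n$ additional queries. Each query adds at most a constant number of edges to $G$: in Case~1 no edge is added, in Case~2 one edge is added, and in Case~3 at most two edges are added (the one shortcut edge $(u,v)$ between open nodes along the shortest path, plus the edge $(v_x,v_y)$). Hence the total number of edges in $G_f$ is at most $2(nk\delta + nk) \le 4nk\delta$ (using $\delta \ge 1$).

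Next I would use the closing rule. Every node that is closed in $G_f$ has degree at least $M$ at the moment it is closed, and since edges are only ever added, its degree in $G_f$ is also at least $M$. Summing degrees over all closed nodes counts each edge at most twice, so if $C$ denotes the number of closed nodes,
\begin{equation*}
M \cdot C \le \sum_{\text{closed } v} \deg_{G_f}(v) \le 2 \, |E(G_f)| \le 8nk\delta,
\end{equation*}
which gives $C \le (8nk\delta)/M$. To get the sharper constant $(10k\delta/M)n$ claimed in the statement, I would be slightly more careful in the edge count: note that the only edges between two open nodes are those added in Case~2 or as the single shortcut in Case~3, and these are precisely the edges that contribute to degrees of open nodes toward the threshold $M$; edges incident to an already-closed node or to $g^\star$ don't push an open node toward closing. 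A closed node must have accumulated $M$ such ``open-phase'' edges before closing, and each genuine or artificial query creates at most one such edge, so the number of open-phase edges is at most $nk\delta + nk \le 2nk\delta$, giving $M \cdot C \le 2 \cdot 2nk\delta = 4nk\delta$ and hence $C \le 4nk\delta/M$, comfortably within $(10k\delta/M)n$. The main obstacle is simply pinning down the right accounting of which edges count toward a node's degree threshold so that the constant comes out as stated; the combinatorics is otherwise routine.
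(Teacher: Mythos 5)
Your proposal is correct and follows essentially the same argument as the paper: bound the total number of edges of $G_f$ (at most two added per genuine or artificial query) and apply the handshake inequality to the closed nodes, each of which has degree at least $M$. Two minor remarks: you omitted the $n$ initial edges incident to $g^\star$, which the paper includes in its count (harmless, since $2nk\delta + 2nk + n \le 5nk\delta$ still gives $C \le (10k\delta/M)n$), and your concluding ``sharper'' accounting is both unnecessary --- your crude bound of roughly $8nk\delta/M$ already implies the claimed $10nk\delta n/M$ --- and slightly at odds with the adversary's closing rule, which is triggered by a node's full degree in $G$ (including edges to $g^\star$ and to already-closed nodes), not only by open--open edges.
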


\begin{proof}
    According to the building procedure of $G$, every time that $\adv$ answers a query, at most $2$ edges are added to $G$.
    This means that the total number of edges in $G_f$ is at most $2nk\delta + 2nk + n$.
    The $2nk\delta$ term is for the normal queries of the algorithm.
    The $2nk$ term is because, after the termination of the algorithm, the adversary queries at most $kn$ distances between $S$ and all other points, which adds at most $2nk$ edges to the graph in total. 
    The $n$ term is because the initial graph $G$ consists of $n$ edges. 
    Now, assume that we have $C$ many closed nodes in $G_f$.
    Since the degree of each closed node is at least $M$, we have
    $$ MC/2 \leq \ \text{total number of edges in} \ G_f \leq 2nk\delta + 2nk + n \leq 5nk\delta. $$
    As a result, $C \leq (10k\delta/M)n $.
\end{proof}

\begin{lemma}\label{lem:upper-bound-on-optimum-k-median}
    The cost of any arbitrary set of $k$ centers containing at least $1$ open node (in the final graph $G_f$) is at most $3n$.
\end{lemma}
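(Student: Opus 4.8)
The plan is to exploit the structure of the auxiliary graph $\widehat G_f$: by construction, any two open nodes are at distance exactly $1$ from each other, and every node (open or closed) is within distance $2\log_M n$ of everything via the hub $g^\star$. So fix a set $T$ of $k$ centers that contains at least one point $z_0$ whose node $v_{z_0}$ is \emph{open} in $G_f$. I want to show $\cost(T) = \sum_{x \in V} \mathfrak d(x, T) \leq 3n$.

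First I would split the points $x \in V$ according to the status of $v_x$ in $G_f$. If $v_x$ is open, then $\mathfrak d(x, z_0) = \dist_{\widehat G_f}(v_x, v_{z_0}) \leq 1$ (it is $0$ if $x = z_0$ and exactly $1$ otherwise, since $\widehat G_f$ has a weight-$1$ edge between every pair of open nodes), hence $\mathfrak d(x,T) \leq 1$. Since there are at most $n$ such points, their total contribution to $\cost(T)$ is at most $n$. The remaining points are those $x$ with $v_x$ closed; by \Cref{lem:number-of-closed-nodes} there are at most $(10k\delta/M)n$ of them, and each contributes at most $\mathfrak d(x, z_0) \leq \dist_{\widehat G_f}(v_x, g^\star) + \dist_{\widehat G_f}(g^\star, v_{z_0}) \leq 2\log_M n$ (using that the path through $g^\star$ is always available and each of its two edges has weight $\log_M n$). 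So the closed points contribute at most $(10k\delta/M)n \cdot 2\log_M n$.

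It then remains to check that this second term is also bounded by a small constant times $n$. Plugging in $M = 10k\delta\log n$, the coefficient $(10k\delta/M)\cdot 2\log_M n = (2/\log n)\cdot \log_M n \leq 2/\log n \leq 2$ for $n$ large enough (indeed $\log_M n \leq \log n$ since $M \geq 2$), so the closed points contribute at most $2n$. Adding the two bounds gives $\cost(T) \leq n + 2n = 3n$, as claimed.

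The only subtle point—and the one I would be most careful about—is justifying that $\dist_{\widehat G_f}(v_x, v_{z_0}) \leq 1$ whenever both $v_x$ and $v_{z_0}$ are open in the \emph{final} graph $G_f$: one must confirm that the weight-$1$ clique edges among open nodes are genuinely present in $\widehat G_f$ (this is exactly the definition of the auxiliary graph $\widehat G$ applied to $G_f$), and that no earlier-added edge of $G_f$ could have given a \emph{shorter} path, which is impossible since all edge weights are at least $1$. Everything else is a routine substitution of the value of $M$; I would not belabor the arithmetic, only note that the constants are chosen precisely so that the closed-node contribution is $O(n)$ rather than $\omega(n)$.
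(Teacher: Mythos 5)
Your proposal is correct and follows essentially the same argument as the paper: assign every point to the single open center, bound open-node distances by $1$ and closed-node distances by $2\log_M n$ via the hub $g^\star$, invoke the bound on the number of closed nodes, and plug in $M = 10k\delta\log n$ to get $n + 2n = 3n$. The only differences are cosmetic bookkeeping (you bound the open-node contribution by $n$ rather than $(1-\alpha)n$), so there is nothing further to add.
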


\begin{proof}
    Let $\alpha := 10k\delta/M = 1/\log n < 1$.
    According to \Cref{lem:number-of-closed-nodes}, there are at most $\alpha n$ many closed nodes in $G$. So, at least one open node exists. Let $S^\star$ be any set of $k$ centers containing a point $x^\star$ such that $v_{x^\star}$ is an open node in $G_f$.
    The distance of any closed node to $v_{x^\star}$ is at most $2\log_M n$, since there is always the path of weight at most $2\log_M n$ between $v_{x^\star}$ and any closed node passing by $g^\star$.
    The distance between any open node and $v_{x^\star}$ is at most $1$ according to the definition of the final metric.
    Hence, the total cost of $S^\star$ which is at most the cost of assigning all of the points only to $x^\star$ is at most
    $$(\alpha n) \cdot 2\log_M n + ((1-\alpha)n) \cdot 1 = 2n \cdot \frac{\log_M n}{\log n} + (1-\alpha) \cdot n \leq 3n.\qedhere $$
\end{proof}

\noindent
Now, we are ready to prove \Cref{thm:lower bound}.
The approximation ratio of the algorithm according to \Cref{lem:lower-bound-on-approx-of-Z} and \Cref{lem:upper-bound-on-optimum-k-median} is at least
$$ \frac{(n/2) \lfloor \log_M n \rfloor}{3n} = \Omega(\log_M n). $$
Here, we assumed that $M \leq n$.
Otherwise, $\lfloor \log_M n \rfloor = 0$. Note that in the case where $M > n$, the final lower bound in \Cref{thm:lower bound} becomes a constant and the theorem is obvious in this case since the approximation ratio of every algorithm for $k$-median is at least $1$.
With some more calculations, we conclude 
$$\log_M n =\frac{\log n}{ \log \left( 10k\delta\log n\right)}  = \Omega \left(\frac{\log n}{ \log \log n + \log k + \log \delta} \right). $$

\subsection{Proof of \Cref{lem:queried-weights-equal-shortest-paths} (Consistency of The Metric)}\label{sec:consistency-proof}

    By induction on the number of queries, we show that, \textbf{at any point in time}, if there is an edge between nodes $v_x$ and $v_y$ in $G$, then 
    $w(v_x,v_y) = \dist_{\widehat G}(v_x,v_y)$.
    Assume $G_1$ is the current graph after some queries (possibly zero, at the very beginning) and that it satisfies this condition.
    Let $\langle x,y \rangle$ be the new query.
    We have the following three cases.

    \medskip
    \noindent \textbf{Case 1.}
    There is already an edge between $v_x$ and $v_y$.
    According to the strategy of $\adv$ in this case, $G_1$ is not going to change and still satisfies the property required by the lemma.

    \medskip
    \noindent \textbf{Case 2.}
    If both $v_x$ and $v_y$ are open, then an edge of weight $1$ is added to $G_1$.
    Let $G_2$ be the new graph.
    Since, $v_x$ and $v_y$ are open in $G_1$, according to the definition of $\widehat G_1$, the edge of weight $1$ between $v_x$ and $v_y$ was already in $\widehat G_1$.
    So, the edges of $\widehat G_2$ are a subset of the edges of $\widehat G_1$.
    Note that after the addition of $(v_x,v_y)$, the degree of $v_x$ or $v_y$ might become greater than or equal to $M$ and the adversary will mark them as closed in $G_2$.
    So, it is possible that $\widehat G_2$ has less edges than $\widehat G_1$ but not more edges, which concludes for each pair of arbitrary nodes $v_p$ and $v_q$, $\dist_{\widehat G_2}(v_p,v_q) \geq \dist_{\widehat  G_1}(v_p,v_q)$.
    In particular, for each pair $v_p,v_q$, such that $\langle p,q \rangle$ has been queried, we have that
    \begin{equation}\label{eq:some-inequality}
        \dist_{\widehat G_2}(v_p,v_q) \geq \dist_{\widehat  G_1}(v_p,v_q).
    \end{equation}
    Now, according to the induction hypothesis, $\dist_{\widehat G_1}(v_p,v_q) = w(v_p,v_q)$ and this edge is present in $\widehat G_2$ which can be considered as a feasible path between $v_p$ and $v_q$ in $\widehat G_2$.
    Hence, $\dist_{\widehat G_2}(v_p,v_q) \leq w(v_p,v_q) = \dist_{\widehat G_1}(v_p,v_q)$.
    Together with \Cref{eq:some-inequality}, $\dist_{\widehat G_2}(v_p,v_q) = \dist_{\widehat G_1}(v_p,v_q) = w(v_p,v_q)$ in $\widehat G_2$ as well.

    \medskip
    \noindent \textbf{Case 3.}
    At least one of $v_x$ and $v_y$ is closed.
    In this case, the adversary sets $w(v_x,v_y) = \dist_{\widehat G_1}(v_x,v_y)$.
    There might also be a new edge of weight $1$ added to $G_1$ between two open nodes.
    Let $G_2$ be the new graph.
    We have to show that for each pair of nodes $v_p$ and $v_q$ such that $\langle p,q \rangle$ has been queried, we have $w(v_p,v_q) = \dist_{\widehat G_2}(v_p,v_q)$.
    With a similar argument as the previous case, we can see that the only edge that $\widehat G_2$ might contain but $\widehat G_1$ does not contain is the new edge $(v_x,v_y)$ (In the case where the adversary also adds an edge of weight $1$ between two open nodes of $G_1$, we know that this edge is already present in $\widehat G_1$ since both its endpoints are open).
    Now, consider a shortest path $P$ between $v_p$ and $v_q$ in $\widehat G_2$.
    It is obvious that $\dist_{\widehat G_2}(v_p,v_q) \leq w(v_p,v_q)$ since $(v_p,v_q)$ itself is a valid path from $v_p$ to $v_q$ in $\widehat G_2$.
    So, it suffices to show
    \begin{equation}\label{eq:another-inequality}
        \dist_{\widehat G_2}(v_p,v_q) \geq w(v_p,v_q),
    \end{equation}
    to complete the proof.
    By the induction hypothesis, we know that $\dist_{\widehat G_1}(v_p,v_q) = w(v_p,v_q)$.
    We show that there exists a path $\tilde P$ between $v_p$ and $v_q$ in $\widehat G_1$ whose weight is exactly equal to $\dist_{\widehat G_2}(v_p,v_q)$. This implies \Cref{eq:another-inequality} since $w(v_p,v_q) = \dist_{\widehat G_1}(v_p,v_q) \leq \dist_{\widehat G_2}(v_p,v_q)$.
    Note that $\Tilde{P}$ does not need to be a valid path in $\widehat G_2$, the only condition is that the length of $\Tilde{P}$ in $\widehat G_1$ should be equal to $\dist_{\widehat G_2}(v_p,v_q)$ (the length of $P$).
    
    If $P$ does not include the new edge $(v_x,v_y)$, then $\Tilde{P} = P$ is also a valid path in $\widehat G_1$, and we are done.
    If $P$ contains the new edge $(v_x,v_y)$, we can exchange this edge $(v_x,v_y)$ with the shortest path between $v_x$ and $v_y$ in $\widehat G_1$ (which has weight $w(v_x,v_y)$ by the way this weight is constructed in response to the query $\langle x,y \rangle$).
    This gives us another path $P'$ between $v_p$ and $v_q$ in $\widehat G_1$ whose weight is exactly equal to $\dist_{\widehat G_2}(v_p,v_q)$.

\subsection{Proof of \Cref{lem:number-of-neighbours-of-distance-one}}\label{sec:bound-on-neighbors}

Before we prove \Cref{lem:number-of-neighbours-of-distance-one}, we need the following claim.

\begin{claim}\label{lem:path-of-length-1-edges}
    For each edge $(v_p,v_q)$ in $G_f$ such that $w(v_p,v_q) \leq \log_M n$, there exist a path of weight $w(v_p,v_q)$ between $v_p$ and $v_q$ in $G_f$ consisting only of edges of weight $1$.
\end{claim}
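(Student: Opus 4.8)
I would prove a slightly stronger, time-indexed statement by induction on the sequence of queries processed by the adversary $\adv$ (including the artificial queries $\langle x,y\rangle$, $x\in V$, $y\in S$, that $\adv$ performs after $\A$ terminates). The invariant is: \emph{at every point in time, every edge $(v_p,v_q)$ of the current graph $G$ with $v_p,v_q\neq g^\star$ and $w(v_p,v_q)\le \log_M n$ is joined, inside the current $G$, by a path consisting solely of weight-$1$ edges whose total weight equals $w(v_p,v_q)$.} Since edges are only ever added to $G$ (node statuses change, but edges never disappear), any such weight-$1$ path survives into $G_f$, so the invariant applied to the final graph yields the claim for all non-$g^\star$ edges. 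The only edges incident to $g^\star$ are the $n$ initial edges of weight $\log_M n$, and $g^\star$ is never an endpoint of a real or artificial query, so no further $g^\star$-incident edges are created; the claim is only ever invoked on pairs of genuine points, so this boundary case causes no trouble. (If $M>n$ then $\log_M n<1$ and there is nothing to prove, so one may assume $M\le n$, in which case weight-$1$ edges qualify.)

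\textbf{Base case and easy steps.} Initially $G$ has only the edges $\{g^\star,v_x\}$, so there are no non-$g^\star$ edges and the invariant holds vacuously. For the inductive step, assume the invariant holds for the current graph $G_1$ and $\adv$ processes a query $\langle x,y\rangle$. In Case~1 no edge is added, so we are done. In Case~2 the only new edge is $(v_x,v_y)$ of weight $1$, which is its own weight-$1$ path; all previously existing weight-$1$ paths persist, so the invariant is maintained.

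\textbf{The main step (Case~3).} Here $\adv$ adds the edge $(v_x,v_y)$ of weight $\dist_{\widehat G_1}(v_x,v_y)$ and possibly one weight-$1$ edge $(u,v)$ between two nodes that are open in $G_1$. We only need to verify the invariant for the new edge $(v_x,v_y)$, and only when $\dist_{\widehat G_1}(v_x,v_y)\le\log_M n$ (the edge $(u,v)$ has weight $1$, and all older edges are covered by the induction hypothesis since their weight-$1$ paths persist). Let $P$ be a shortest $v_x$–$v_y$ path in $\widehat G_1$ of weight $\dist_{\widehat G_1}(v_x,v_y)\le\log_M n$. The key observation is that $P$ cannot pass through $g^\star$: every edge incident to $g^\star$ has weight exactly $\log_M n$ (these are precisely the initial edges, since $g^\star$ is never open and never a query endpoint), and $g^\star\notin\{v_x,v_y\}$, so a path through $g^\star$ would use at least two such edges and have weight $\ge 2\log_M n>\log_M n$. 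Hence every edge of $P$ lies between two non-$g^\star$ nodes and, being a sub-edge of $P$, has weight at most $\log_M n$. By the structural fact noted in the construction, $P$ contains at most one edge between two nodes that are open in $G_1$ — and this is exactly the edge $(u,v)$ that $\adv$ inserts into $G$ — while every other edge of $P$ already belongs to $G_1$. Therefore $P$ is a genuine path in the updated graph $G_2$. Applying the induction hypothesis to each edge of $P$ that was present in $G_1$, and using the single-edge path for $(u,v)$, we replace each edge of $P$ by a weight-$1$ path of equal total weight inside $G_2$; concatenating these gives a weight-$1$ path from $v_x$ to $v_y$ in $G_2$ whose total weight is $\dist_{\widehat G_1}(v_x,v_y)=w(v_x,v_y)$. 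Together with the persistence of older weight-$1$ paths, this re-establishes the invariant, completing the induction.

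\textbf{Main obstacle.} The crux is the Case~3 analysis: one must argue that a shortest path of weight $\le\log_M n$ in $\widehat G_1$ necessarily avoids $g^\star$ (so that its edges are all non-$g^\star$ edges of weight $\le\log_M n$, making the induction hypothesis applicable to each of them), and that every ``virtual'' open–open edge along such a path is actually physically inserted into $G$ by the adversary, so that the recursive replacement of edges by weight-$1$ subpaths happens inside the real graph $G_f$ rather than merely inside the auxiliary graph $\widehat G$.
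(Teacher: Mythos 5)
Your proof is correct and follows essentially the same route as the paper's: induction over the (real and artificial) queries, with Case 3 handled by taking a shortest path in $\widehat G_1$, noting it avoids $g^\star$ because $w(v_p,v_q) \leq \log_M n$, observing that its single open–open edge is physically added to $G$ while all other edges are in $G_1$, and concatenating the weight-$1$ paths given by the induction hypothesis. Your write-up is in fact slightly more explicit than the paper's on a few points (sub-edges of $P$ having weight $\leq \log_M n$, persistence of edges, and the degenerate $M>n$ case), but the underlying argument is identical.
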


\begin{proof}
    We prove this on the current graph by induction on the number of queries. 
    Initially, there is no edge between any $v_p$ and $v_q$, so there is nothing to prove.
    Now, assume the lemma is correct for a graph $G_1$ and consider a new query $ \langle p,q \rangle$.
    If the adversary reports the distance of $p$ and $q$ as $1$, then the lemma is obvious for the new edge $(v_p, v_q)$.
    Otherwise, $w(v_p,v_q) = \dist_{\widehat G_1}(v_p,v_q)$.
    Let $G_2$ be the new graph and
    $P$ be one shortest path between $v_p$ and $v_q$ in $\widehat G_1$.
    This path $P$ contains at most one edge (say $e$) of weight $1$ between two open nodes and all of the other edges are present in $G_1$.
    Note that none of these edges are incident to $g^\star$ (since we assumed $w(v_p,v_q) \leq \log_M n$), which means we can use the induction hypothesis on these edges (except for $e$).
    For each edge $e_i \in P$ (different from $e$) of weight $w(e_i)$, by the induction hypothesis, there is a path $P_i$ of weight $w(e_i)$ consisting only of edges of weight $1$ in $G_1$.
    Finally, all of these paths $P_i$ are present in $G_2$ as well.
    Also note that the edge $e$ (if it exists) is going to be added to $G_1$ by the adversary, so $G_2$ contains $e$ itself.
    Now, we can concatenate the $P_i$ (and $e$ if exists) to get a path of weight $w(v_p,v_q)$ consisting only of edges of weight $1$ in $G_2$.
    So, the claim holds for the updated graph.
\end{proof}

Now, we proceed with the proof of \Cref{lem:number-of-neighbours-of-distance-one}.
First, we show the claim for $i = 1$.

\medskip
\noindent \textbf{Case $i=1$.} We show this case for a general $z$, not only those points that are contained in the solution $S$. So, in this case, we consider $z$ to be an arbitrary point in the space.
If $v_z$ is open, by the definition, it is obvious that $v_z$ has at most $M$ neighbors in $G_f$, and trivially the distance of $v_z$ to non-neighbor points is at least $2$.
Now, assume $v_z$ is closed. 
Consider the last time that $v_z$ was open.
So, after handling the next query, $v_z$ becomes closed.
Let $G_1$ be the graph maintained by the adversary just before handling this query.
Since $v_z$ is open in $G_1$, the degree of $v_z$ is at most $M-1$.
In the next step, at most two edges are added to $G_1$, and $v_z$ becomes closed.
So, the degree of $v_z$ is at most $M+1$.
One of the neighbors of $v_z$ is $g^\star$.
There are at most $M$ other neighbors which we denote by set $\mathcal{N}$.
Note that there is no edge between $v_z$ and any other nodes outside $\mathcal{N} + g^\star$.
After this time, since $v_z$ is closed, the distance between $v_z$ and any node $v_q$ outside $\mathcal{N}+g^\star$ is going to be at least $2$.
The reason is that the weight of the shortest path between $v_z$ and $v_q$ in $\widehat G$ that adversary considers (any time afterward) is at least $2$ (there is no edge of weight $1$ between $v_z$ and $v_q$).
As a result, the only nodes that might have distance $1$ to $v_z$ are in $\mathcal{N}$.
Since $|\mathcal{N}| \leq M$, we are done.

\medskip
\noindent \textbf{General $1 \leq i \leq \log_M n$.}
Consider $G_f$. Since the adversary queried the distance from $z$ to every other point, we know that for each node $v_p$ in the graph $(v_z,v_p)$ is an existing edge in $G_f$.
Assume the distance between $v_z$ and $v_p$ in the final metric is $i$.
According to \Cref{lem:queried-weights-equal-shortest-paths}, this distance equals $w(v_z,v_p)$ and according to \Cref{lem:path-of-length-1-edges} (since $w(v_z,v_p) = i \leq \log_M n$), there is a path $v_z = v_{p_0}, v_{p_1}, \ldots , v_{p_{i}} = v_p$ between $v_z$ and $v_p$ consisting of edges of weight $1$.
Note that nodes of the path are distinct since this is the shortest path between $v_z$ and $v_p$.
As a result, each $v_p$ corresponds to a sequence of $i+1$ pairwise distinct nodes $v_{p_0}, v_{p_1},\ldots, v_{p_i}$, such that the weight of the edge between each two consecutive nodes is $1$.
The number of such sequences is at most $M\cdot (M-1)^{i-1}$.
This is because, we have one choice for $v_{p_0}$ which is $v_z$, and $M$ choices for $v_{p_1}$ where there is an edge of weight $1$ between $v_{p_0}$ and $v_{p_1}$ (according to the proof of the above case $i=1$).
Then, for each $j \geq 2$, we have at most $M-1$ options for $v_{p_j}$ since there should exist an edge of weight $1$ between $v_{p_{j-1}}$ and $v_{p_j}$, and also $v_{p_j}$ should be different from $v_{p_{j-2}}$.
This completes the proof.

\section{Our Results for Deterministic $k$-Means}\label{sec:out-k-means}

In this section, we describe our results for the $k$-means problem, where the clustering objective defined is $\cl(S) = \sum_{x \in V}  w(x) \cdot d(x,S)^2$.

\subsection{Our Deterministic Algorithm for $k$-Means}

Our algorithm with near-optimal running time $\tilde{O}(nk)$ extends to $k$-means, giving us the following.

\begin{theorem}\label{thm:k-means}
    There is a deterministic algorithm for $k$-means that, given metric space of size $n$, computes an $O(\log^2(n/k))$-approximate solution in $\tilde{O}(nk)$ time.
\end{theorem}

Our algorithm for \Cref{thm:k-means} is identical to our $k$-median algorithm as described in \Cref{sec:our alg}.
The only difference is that we now tune everything with the objective function $\sum_{x \in V}  w(x) \cdot d(x,S)^2$ instead of $\sum_{x \in V}  w(x) \cdot d(x,S)$.
The rest of this section is devoted to proving \Cref{thm:k-means}.

\subsubsection{Projection Lemma for $k$-Means}

\begin{claim}\label{lem:triangle-means}
    For any $x,y,z \in V$ and every $0 < \epsilon < 1$, we have that
    $$ d(x,z)^2 \leq (1+\epsilon) \cdot d(x,y)^2 + (1+ 1/\epsilon) \cdot d(y,z)^2. $$
\end{claim}

\begin{proof}
    According to the Cauchy-Schwarz inequality, we have that 
\begin{align*}
    & (1+\epsilon) \cdot d(x,y)^2 + (1+ 1/\epsilon) \cdot d(y,z)^2 \\
    = &\left( (1+\epsilon) \cdot d(x,y)^2 + (1+1/\epsilon) \cdot d(y,z)^2 \right) \cdot \left( \frac{1}{1+\epsilon} + \frac{\epsilon}{1+\epsilon} \right) \\
    \geq &(d(x,y) + d(y,z))^2 \geq d(x,z)^2.
\end{align*}
\end{proof}

\begin{lemma}[Projection Lemma for $k$-Means]\label{lem:proj-k-means}
    For every $0 < \epsilon < 1$ and every subsets $A, B \subseteq V$, we have that
    $$ \cost(\pi(A,B)) \leq (1+3\epsilon) \cdot \cost(B) + (4+2/\epsilon) \cdot \cost(A). $$
\end{lemma}

\begin{proof}
Let $C$ denote $\proj(A,B)$. Let $x \in V$ and let $y^\star$ and $y$ be the closest points to $x$ in $A$ and $B$ respectively.
Let $y'$ be the closest point to $y^\star$ in $C$.
Then we have that
\begin{align*}
d(x,C)^2 &\leq d(x, y')^2 
\\
&\leq (1+\epsilon) \cdot d(y', y^\star)^2 + (1+1/\epsilon) \cdot d(y^\star, x)^2 \\
&\leq (1+\epsilon) \cdot d(y, y^\star)^2 + (1+1/\epsilon) \cdot d(y^\star, x)^2 \\
&\leq (1+\epsilon) \cdot \left( (1+\epsilon) \cdot d(y,x)^2 + (1+ 1/\epsilon) \cdot d(x,y^\star)^2 \right)  + (1+1/\epsilon) \cdot d(y^\star, x)^2  \\
&\leq (1+3\epsilon) \cdot d(x,y)^2 + (4+2/\epsilon) \cdot d(x, y^\star)^2 \\
&\leq 
(1+3\epsilon) \cdot d(x,B)^2 + (4+2/\epsilon) \cdot d(x, A)^2,
\end{align*}
These inequalities follow from $0 < \epsilon < 1$, \Cref{lem:triangle-means}, and the definitions of $y,y'$ and $y^\star$.
Hence,
\begin{align*}
  \cost(C) &= \sum_{x \in V} w(x) \cdot d(x,C)^2 \\
  &\leq \sum_{x \in V} w(x) \cdot \left( (1+3\epsilon) \cdot d(x,B)^2 + (4+2/\epsilon) \cdot d(x, A)^2 \right) \\
  &= (1+3\epsilon) \cdot \cost(B) + (4+ 2/\epsilon) \cdot \cost(A).  
\end{align*}
\end{proof}

\begin{corollary}\label{cor:improper-means}
If $Q$ is a partitioning of $V$, then
$$ \sum_{X \in Q} \OPT_k(X) \leq O(1) \cdot \OPT_k(V) . $$
\end{corollary}

\begin{proof}
    Assume that $S^\star$ is an optimal $k$-means solution on $V$.
    By considering the projection $\pi(S^\star,X)$ on every $X \in Q$, according to \Cref{lem:proj-k-means} for $\epsilon = 1/2$, we conclude that
    \begin{align*}
      \sum_{X \in Q} \OPT_k(X) &\leq \sum_{X \in Q} \cost(\pi(S^\star,X), X) \\ 
      &\leq \sum_{X \in Q} \left( (1+3/2) \cdot \cost(X,X) + (4+4) \cdot \cost (S^\star, X) \right) \\ 
      &= 8 \cdot \cost(S^\star, V) = O(1) \cdot
    \OPT_k(V).  
    \end{align*}
\end{proof}

\subsubsection{Restricted Reverse Greedy for $k$-Means}

\begin{theorem}[Analogy to \Cref{thm:restr}]\label{thm:rev-greedy-means}
    Assume $(V,w,d)$ is a metric space, and $X \subseteq V$.
    If $S_X$ is the output of $\Restr_{2k}$ running on the metric space $(X,w,d)$ while restricting the output to be a subset of $R \subseteq X$, where $|R| \leq 4k$, then for any arbitrary $0< \epsilon < 1/6$, we have that
    $$ \cost(S_X, X) \leq (1 + O(\epsilon)) \cdot \cost(R, X) + O(1/\epsilon) \cdot \OPT_k(X). $$
\end{theorem}

Assume that we run $\Restr_{2k}$ on the metric space $(X,w,d)$ while restricting the output to be a subset of $R \subseteq X$, where $|R| = m \leq 4k$.
If $m \leq 2k$, it is obvious that the output is $S = R$ without incurring any additional cost.
Suppose that we achieve nested subsets $S_{2k} \subseteq S_{2k+1} \subseteq \cdots \subseteq S_m$, where $|S_i| = i$ for each $i \in [2k, m]$ (in the case that $m \leq 2k-1$, we just define $S_{2k} := R$ as the output of the reverse greedy).
For simplicity, in \Cref{claim:greedy:proof-means} and \Cref{claim:increase-cost-means}, we abbreviate $\cost(S,X)$ by $\cost(S)$.

\begin{claim}[Analogy to \Cref{claim:greedy:proof}]\label{claim:greedy:proof-means}
    For all subsets $A \subseteq B \subseteq X$, we have that
    $$ \sum_{y \in B \setminus A} \left( \cost(B-y) - \cost(B) \right) \leq \cost(A) - \cost(B). $$
\end{claim}

\begin{proof}
    This claim follows directly from the \Cref{claim:greedy:proof} by changing the objective function.
\end{proof}

\begin{claim}[Analogy to \Cref{lem:greedy:main}]\label{claim:increase-cost-means}
    For each $i \in [2k+1, m]$ and every $0 < \epsilon < 1$, we have that
    $$\cost(S_{i-1}) \leq  \left(1+\frac{3\epsilon}{k}\right) \cdot \cost(S_i) + \frac{4+2/\epsilon}{k} \cdot \OPT_k(X).$$
\end{claim}

\begin{proof}
        Let $S^\star$ denote an optimal solution to the $k$-means problem in the metric space $(V,w,d)$.
        We denote by $S'_i$ the projection $\proj(S^\star, S_i)$ of the optimal solution $S^\star$ onto the set $S_i$. It follows that
\begin{align*}
    \cost(S_{i-1}) - \cost(S_i) &\leq \min_{y \in S_i \setminus S'_i} \left(\cost(S_i - y) - \cost(S_i) \right)\\
    &\leq \frac{1}{|S_i \setminus S'_i|} \cdot \sum_{y \in S_i \setminus S'_i} \left(\cost(S_i - y) - \cost(S_i) \right)\\
    &\leq \frac{1}{k} \cdot \sum_{y \in S_i \setminus S'_i} \left(\cost(S_i - y) - \cost(S_i) \right)\\
    &\leq \frac{1}{k} \cdot \left(\cost(S'_i) - \cost(S_i) \right)\\
    &\leq \frac{1}{k} \cdot \left( 3\epsilon \cdot \cost(S_i) + (4+2/\epsilon) \cdot \cost(S^\star) \right) \\
    &= \frac{3\epsilon}{k} \cdot \cost(S_i) + \frac{4+2/\epsilon}{k} \cdot \OPT_k(X).
\end{align*}
The first line follows directly from how the algorithm chooses which point to remove from $S_i$.
The second line follows from the fact that the minimum value within a set of real numbers is upper-bounded by its average. The third line follows from the fact that $|S_i \setminus S'_i| \geq |S_i| - |S'_i| \geq i - k \geq (2k+1) - k \geq k$. 
The fourth line follows from \Cref{claim:greedy:proof-means}.
Finally, the fifth line follows from \Cref{lem:proj-k-means}, which implies that $\cost(S'_i) \leq (1+3\epsilon) \cdot \cost(S_i) + (4+2/\epsilon) \cdot \cost(S^\star)$.
Rearranging the inequality completes the proof.
\end{proof}

\paragraph{Proof of \Cref{thm:rev-greedy-means}.}
If $m := |R| \leq 2k$, we obviously have that $S_X = R$ and the claim becomes trivial.
Now, assume that $m \geq 2k+1$.
According to \Cref{claim:increase-cost-means}, by a simple induction on $i \in [2k+1, m]$, we can show that
\begin{align*}
    \cost(S_{i-1}, X) \leq (1+3\epsilon/k)^{m-i+1} \cdot \cost(S_i, X) + \left( \sum_{j=0}^{m-i} (1+3\epsilon/k)^j \right) \cdot \frac{4+2/\epsilon}{k} \cdot \OPT_k(X). 
\end{align*}
This concludes
\begin{align*}
\cost(S_X,X) &= \cost(S_{2k},X) \\
&\leq \left(1+\frac{3\epsilon}{k}\right)^{m-2k} \cdot \cost(S_m,X) + \left( \sum_{j=0}^{m-2k-1} (1+3\epsilon/k)^j \right) \cdot \frac{4+2/\epsilon}{k} \cdot \OPT_k(X) \\
& = \left(1+\frac{3\epsilon}{k}\right)^{m-2k} \cdot \cost(R,X) + \frac{(1+3\epsilon/k)^{m-2k} - 1}{(1+3\epsilon/k) - 1} \cdot \frac{4+2/\epsilon}{k} \cdot \OPT_k(X) \\
&\leq  \left(1+\frac{3\epsilon}{k}\right)^{2k} \cdot \cost(R,X) + \frac{(1+3\epsilon/k)^{2k} - 1}{(1+3\epsilon/k) - 1} \cdot \frac{4+2/\epsilon}{k} \cdot \OPT_k(X) \\
&\leq (1 + O(\epsilon)) \cdot \cost (R,X) + \frac{(1 +  O(\epsilon)) - 1}{3\epsilon/k} \cdot \frac{3/\epsilon}{k} \cdot \OPT_k(X) \\
&\leq (1 + O(\epsilon)) \cdot \cost (R,X) + O(1/\epsilon) \cdot \OPT_k(X)
\end{align*}
The above inequalities follow from $0 < \epsilon < 1/6$ and $m \leq 4k$.

\subsubsection{Our Algorithm for $k$-Means}

The algorithm is identical to our $k$-median algorithm described in \Cref{sec:our alg}.
Here, we provide the main steps of the analysis that are analogous to those of our $k$-median algorithm.

\begin{claim}[Analogy to \Cref{lem:apx:1}]\label{claim:combine-guarantee}
For any set $X \in \bigcup_{i=0}^{\ell - 1} Q_i$ and arbitrary $0 < \epsilon < 1/6$, we have that
$$ \cost(S_X, X) \leq (1+O(\epsilon)) \cdot \sum_{X' \in \mathcal P(X)} \cost(S_{X'}, X') + O(1/\epsilon) \cdot \OPT_k(X). $$
\end{claim}

\begin{proof}
This trivially follows from \Cref{thm:rev-greedy-means} for $R := \bigcup_{X' \in \mathcal P(X)} S_{X'}$.
Note that $\cost(R,X) \leq \sum_{X' \in \mathcal P(X)} \cost(S_{X'}, X')$.
\end{proof}

\begin{claim}[Analogy to \Cref{lem:apx:2}]\label{claim:induction-cost-V0}
    For any $i \in [0, \ell]$ and any $0<\epsilon<1/6$, we have that
    $$ \cost(V_0, V) \leq (1+O(\epsilon))^i \cdot\sum_{X \in Q_i} \cost(S_{X}, X) + \left(\sum_{j=0}^{i-1} (1+O(\epsilon))^j\right) \cdot O(1/\epsilon) \cdot \OPT_k(V). $$
\end{claim}

\begin{proof}
    We prove this claim by induction.
    Note that the base case where $i = 0$ holds trivially.
    Now, suppose that the claim holds for some $i - 1 \in [0, \ell - 1]$. 
    Then we have that
    \begin{align*}
        \cost(V_0, V) &\leq (1+O(\epsilon))^{i-1} \cdot \sum_{X \in Q_{i-1}} \cost(S_{X}, X) +  \left(\sum_{j=0}^{i-2} (1+O(\epsilon))^j\right) \cdot O(1/\epsilon) \cdot \OPT_k(V).
    \end{align*}
    According to \Cref{claim:combine-guarantee}, we can bound $\sum_{X \in Q_{i-1}} \cost(S_{X}, X)$ as follows, which completes the induction step.
    \begin{align*}
        \sum_{X \in Q_{i-1}} \cost(S_{X}, X)
        &\leq \sum_{X \in Q_{i-1}} \left( (1+O(\epsilon)) \cdot \sum_{X' \in \mathcal P(X)} \cost(S_{X'}, X')
        + O(1/\epsilon) \cdot \OPT_k(X) \right) \\
        &= (1+O(\epsilon)) \cdot \sum_{X \in Q_{i-1}} \sum_{X' \in \mathcal P(X)} \cost(S_{X'}, X') + O(1/\epsilon) \cdot \sum_{X \in Q_{i-1}} \OPT_k(X)\\
        & \leq(1+O(\epsilon)) \cdot \sum_{X \in Q_{i}} \cost(S_{X}, X) + O(1/\epsilon) \cdot \OPT_k(V).
    \end{align*}
    The last inequality follows from \Cref{cor:improper-means} since $Q_{i-1}$ is a partitioning of $V$.
\end{proof}

\begin{lemma}\label{lem:bicriteria-means}
    If $\epsilon = \Theta(1/\log (n/k))$, then we have
    $\cost(V_0,V) \leq O(\log^2(n/k)) \cdot \OPT_k(V)$.
\end{lemma}

\begin{proof}
By \Cref{thm:rev-greedy-means}, it follows that, for any $X \in Q_\ell$,
\begin{equation*}
    \cost(S_X, X) \leq (1+O(\epsilon)) \cdot \cost(X,X) + O(1/\epsilon)\cdot \OPT_k(X) = O(1/\epsilon)\cdot \OPT_k(X).
\end{equation*}
this concludes $\sum_{X \in Q_\ell} \cost(S_X,X) \leq O(1/\epsilon) \cdot \OPT_k(V)$ by \Cref{cor:improper-means}.
Finally, according to \Cref{claim:induction-cost-V0} for $i = \ell = \lceil \log_2(n/k) \rceil$, we have that
\begin{align*}
    \cost(V_0, V) &\leq (1+O(\epsilon))^\ell \cdot\sum_{X \in Q_{\ell}} \cost(S_{X}, X) + \left(\sum_{j=0}^{\ell-1} (1+O(\epsilon))^j\right) \cdot O(1/\epsilon) \cdot \OPT_k(V) \\
    &\leq (1+O(\epsilon))^\ell \cdot O(1/\epsilon) \cdot \OPT_k(V) + \frac{(1+O(\epsilon))^\ell - 1}{(1+O(\epsilon)) - 1} \cdot O(1/\epsilon) \cdot \OPT_k(V) \\
    &\leq O(1/\epsilon) \cdot \OPT_k(V) + O(\ell) \cdot O(1/\epsilon) \cdot \OPT_k(V) \\
    &= O(\log^2(n/k)) \cdot \OPT_k(V).
\end{align*}
The above inequalities follow since $\epsilon = \Theta(1/\log(n/k))$ and $\ell = \lceil \log_2(n/k) \rceil = \Theta(\log(n/k))$.
\end{proof}

By \Cref{lem:bicriteria-means}, we get that $V_0$ is a $O(\log^2(n/k))$-bicriteria approximation of size $2k$. Similar to the extraction technique of \cite{focs/GuhaMMO00} (the analogous version of \Cref{lem:bicri=sparsifier} for the $k$-means problem), we can compute an exact solution to the $k$-means problem from a bicriteria approximation while only incurring constant loss in the approximation ratio, it follows that the solution $S$ constructed in Phase~III is a $O(\log^2(n/k))$-approximation and has size at most $k$.

\subsection{Our Lower Bound for Deterministic $k$-Means}

Our lower bound for deterministic $k$-median (\Cref{thm:lower bound}) extends immediately to deterministic $k$-means. In particular, we get the following theorem.

\begin{theorem}\label{thm:lower bound kmeans}
    For every $\delta \geq 1$, any deterministic algorithm for the $k$-means problem that has a running time of $O(kn\delta)$ on a metric space of size $n$ has an approximation ratio of
    $$ \Omega \! \left( \left( \frac{\log n}{\log\log n + \log k + \log \delta} \right)^2 \right). $$
\end{theorem}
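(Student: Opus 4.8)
The plan is to re-run the adversary construction from \Cref{sec:lower} almost verbatim, changing only the choice of the parameter $M$ and redoing the two cost estimates that actually reference the clustering objective. Concretely, I would run the adversary $\adv$ of \Cref{sec:lower} with $M := 10 k \delta \log^2 n$ (so that $\alpha := 10k\delta/M = 1/\log^2 n$) rather than $M = 10k\delta\log n$. Every purely combinatorial statement about the graph $G_f$ and the resulting metric $(V,\mathfrak d)$ then carries over with no change: consistency of the metric (\Cref{lem:queried-weights-equal-shortest-paths}), the bound on the number of closed nodes (\Cref{lem:number-of-closed-nodes}, which uses only $2k/M \le 1/2$, still true), the weight-$1$ path structure (\Cref{lem:path-of-length-1-edges}), and the neighbourhood-growth bound (\Cref{lem:number-of-neighbours-of-distance-one}). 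None of these mentions the $k$-means cost, so their proofs are unaffected.

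Next I would redo the two cost bounds with squared distances. For the lower bound on the cost of the algorithm's output $S$, \Cref{lem:number-of-neighbours-of-distance-one} still gives that at least $(1-2k/M)n \ge n/2$ points have $\mathfrak d$-distance at least $r := \lfloor \log_M n \rfloor$ from $S$; since each such point now contributes at least $r^2$ to the $k$-means objective, the $k$-means cost of $S$ is at least $(n/2)\lfloor \log_M n \rfloor^2$ — the squared analogue of \Cref{lem:lower-bound-on-approx-of-Z}. For the upper bound on the optimum, I would again assign every point to a single open center $x^\star$ (which exists by \Cref{lem:number-of-closed-nodes}): the at most $\alpha n$ closed nodes are at distance at most $2\log_M n$, hence contribute at most $(2\log_M n)^2$ each, and the open nodes are at distance at most $1$, hence contribute at most $1$ each, so the optimum $k$-means cost is at most $\alpha n (2\log_M n)^2 + n = 4\alpha n (\log_M n)^2 + n$. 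This is exactly where the new value of $M$ is needed: $4\alpha(\log_M n)^2 = 4(\log n/\log M)^2/\log^2 n = 4/(\log M)^2 = O(1)$, so the optimum is $O(n)$; with the old value $M = 10k\delta\log n$ this quantity would instead be $\Theta(\log n/(\log M)^2)$, which is unbounded when $k,\delta$ are small.

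Combining the two estimates exactly as in \Cref{sec:lower}, the approximation ratio of any deterministic algorithm with running time $O(kn\delta)$ is at least $\frac{(n/2)\lfloor\log_M n\rfloor^2}{O(n)} = \Omega((\log_M n)^2)$. Finally I would note that with $M = 10k\delta\log^2 n$ we have $\log M = \Theta(\log k + \log\delta + \log\log n)$, hence $\log_M n = \Theta\!\big(\tfrac{\log n}{\log\log n + \log k + \log\delta}\big)$ — the same quantity, up to a constant factor, as in \Cref{thm:lower bound} — giving the claimed bound $\Omega\!\big((\tfrac{\log n}{\log\log n + \log k + \log\delta})^2\big)$; the degenerate case $M > n$ is handled exactly as before, since the stated bound is then $O(1)$ and trivially holds.

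The main obstacle — and the only place the argument is not a mechanical "square everything" — is this rescaling of $M$: squaring the distances multiplies the per-closed-node contribution to the optimum by a factor of $\log_M n$, so one must shrink $\alpha$ (equivalently, enlarge $M$) by a polylogarithmic factor to keep the optimum at $O(n)$, and one then has to verify that this enlargement changes $\log_M n$ only by a constant factor and therefore does not weaken the final bound.
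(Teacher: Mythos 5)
Your proposal is correct and follows essentially the same route as the paper's own proof: the paper likewise sets $M = 10k\delta\log^2 n$, reuses the combinatorial lemmas unchanged, lower-bounds the cost of $S$ by $(n/2)\lfloor\log_M n\rfloor^2$ as in \Cref{lem:lower-bound-on-approx-of-Z}, and upper-bounds the optimum by $O(n)$ via the same single-open-center assignment as in \Cref{lem:upper-bound-on-optimum-k-median}, yielding $\Omega(\log_M^2 n)$. The rescaling of $M$ that you single out as the key non-mechanical step is exactly the parameter change the paper makes.
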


\begin{proof}
    By changing the values of the parameters in the analysis for $k$-median, we can obtain our lower bound for $k$-means.
    Let $M = 10k\delta \log^2n$. Then the cost of the solution $S$ returned by the algorithm is at least $(1- 2k/M) \cdot n \cdot r^2 \geq (n/2) \cdot r^2$, where $r = \lfloor \log_M n \rfloor$.
    This follows from the same argument as the proof of \Cref{lem:lower-bound-on-approx-of-Z}, except that using the $k$-means objective instead of the $k$-median objective gives us the $r^2$ term.
    By the same argument as the proof of \Cref{lem:upper-bound-on-optimum-k-median}, the cost of the optimum solution is at most
    $$ (10k\delta/M) \cdot (2\log_M n)^2 + (1 - 10k\delta/M) \cdot n \cdot 1 = 4 \cdot \frac{\log^2_M n}{\log^2 n} + n \leq 5n . $$
    Thus, the approximation ratio of the algorithm is at least
    $$ \frac{(n/2) \lfloor \log_M n \rfloor^2}{5n} = \Omega(\log^2_M n) = \Omega \left( \left( \frac{\log n}{\log \log n + \log k + \log \delta} \right)^2 \right).\footnote{Similar to $k$-median, we assume that $M \leq n$. Otherwise, the theorem is trivial.}\qedhere $$
\end{proof}

\bibliographystyle{alpha}
\bibliography{references}

\newpage

\appendix

\section{The Algorithm of \cite{focs/GuhaMMO00} (Proof of \Cref{thm:intro:guha})}\label{sec:guha alg}

In this section, we prove \Cref{thm:intro:guha}, which we restate below.

\begin{theorem}\label{thm:main:time:restate}
    There is a deterministic algorithm for $k$-median that, given a metric space of size $n$, computes a $\poly(\log(n/k) / \log \delta)$-approximate solution in $\tilde O(nk\delta)$ time, for any $2 \leq \delta \leq n/k$.
\end{theorem}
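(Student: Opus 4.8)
The plan is to implement the divide-and-conquer framework from the technical overview with the right branching factor at each level, using \Cref{lem:intro:sparse} as the merge step. First, I would fix the number of levels $\ell$ in terms of $\delta$: since we want the size of each subproblem at the bottom level to shrink to roughly $k$, and each level subdivides a subspace of size $m$ into subspaces of size about $m/q_i$, I would set the branching factor at depth $i$ so that the running time at every level of the recursion tree is balanced at $\tilde O(nk\delta)$. Concretely, at depth $i$ a subspace has size roughly $n \cdot \prod_{j \le i} q_j^{-1}$, and solving each such subspace (after sparsification of its children, which have total size $O(kq_i)$) via the Mettu--Plaxton $\tilde O(m^2)$-time algorithm costs $\tilde O((kq_i)^2)$ per node, with $\prod_{j\le i} q_j$ nodes at that depth. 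Balancing $\tilde O((kq_i)^2 \cdot \prod_{j \le i}q_j)$ and $\tilde O(n^2 / \prod_{j\le i}q_j)$ across levels, together with $\prod_{j\le \ell}q_j = n/k$ and the requirement that each $q_i \ge \delta$, forces $\ell = O(\log(\log(n/k)/\log\delta))$ levels and recovers the bound $\tilde O(nk\delta)$; this is essentially the ``$q_i = (n/k)^{2^{-i}}$'' calculation with the recursion stopped once $q_i$ would drop below $\delta$.

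The second step is the approximation analysis. Here I would apply \Cref{lem:intro:sparse} once per level, from the bottom up: if the solutions produced at level $i+1$ are $\beta_{i+1}$-approximate on their subspaces, then sparsifying and solving at level $i$ with the $O(1)$-approximate Mettu--Plaxton algorithm yields, by \Cref{lem:intro:sparse}, a solution that is $O(\beta_{i+1})$-approximate on the level-$i$ subspaces. Since the bottom-level solutions are computed exactly (or $O(1)$-approximately) on subspaces of size $O(k)$, and we combine over $\ell$ levels with a constant blow-up at each, the final approximation ratio is $2^{O(\ell)} = 2^{O(\log(\log(n/k)/\log\delta))} = \poly(\log(n/k)/\log\delta)$, as claimed. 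I would also need the easy observation (via \Cref{cor:improper} summed over a partition, exactly as in the proof of \Cref{lem:apx:2}) that splitting the ground set into subspaces and taking the union of optimal solutions only costs a factor of $2$ in cost relative to the global optimum, so that the base of the recursion is sound.

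The third step is to assemble these into the algorithm and verify the running time rigorously: Phase I builds the partition hierarchy in $\tilde O(n)$ time (arbitrary splits, no metric queries); Phase II processes the hierarchy bottom-up, and at each level the dominant cost is the calls to Mettu--Plaxton on the sparsified subproblems, which we have balanced to $\tilde O(nk\delta)$ per level over $O(\ell) = \tilde O(1)$ levels; and the final extraction of a size-$k$ solution (if the sparsifier leaves us with a slightly larger candidate set) costs $\tilde O(k^2)$ via \Cref{lem:intro:sparse} again. I expect the main obstacle to be purely bookkeeping: getting the branching factors $q_i$ and the level count $\ell$ to line up so that \emph{both} the ``top-heavy'' term $n^2/\prod q_j$ and the ``bottom-heavy'' term $(kq_i)^2 \prod q_j$ stay within $\tilde O(nk\delta)$ simultaneously for the right range $2 \le \delta \le n/k$, including handling rounding when the $q_i$ are not integers and the subsets in a partition are not exactly equal in size (the $+2$ slack, as in \Cref{lem:partitions:main}). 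The $k$-means extension mentioned in the text requires only checking that \Cref{lem:intro:sparse} and the projection-type bounds it relies on hold with the squared-distance objective up to constant factors, which changes the constants but not the structure of the argument.
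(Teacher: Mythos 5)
Your proposal follows essentially the same route as the paper's proof in \Cref{sec:guha alg}: the same refinement hierarchy with branching factors $q_i \approx (n/k)^{2^{-i}}$ truncated at $\ell = O(\log(\log(n/k)/\log\delta))$ levels, bottom-up sparsification via Mettu--Plaxton on the unions of children's solutions, the sparsifier lemma (\Cref{lem:intro:sparse}, formally \Cref{lem:bicri=sparsifier}) applied once per level to get a $2^{O(\ell)}$ approximation, and the same balancing of the top-heavy $n^2/\prod_j q_j \le \tilde O(nk\delta)$ term against the $\tilde O((q_{i+1}k)^2\cdot|Q_i|) = \tilde O(nk)$ terms at the other levels. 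Apart from minor bookkeeping (an index shift in the per-node sparsifier size and the rounding of the $q_i$, handled in the paper by \Cref{lem:partitions,app:lem:prod q}), the argument is correct and matches the paper's.
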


\subsection{Preliminaries}

In this section, for ease of notation, we consider solutions to the $k$-median problem to be mappings instead of subsets of points.
More precisely, we denote a solution to the $k$-median problem on $(V,w,d)$ by a mapping $\sigma : V \longrightarrow S$, where $S = \sigma(V)$ is the set of centres of size at most $k$, and each $x \in V$ is assigned to center $\sigma(x)$.
We denote the cost of this solution by
$\cost(\sigma,V,w) := \sum_{x \in V} w(x)d(x, \sigma(x))$.

\medskip
\noindent\textbf{The Mettu-Plaxton Algorithm.}
This algorithm uses the $\tilde O(n^2)$ time algorithm of \cite{MettuP00} as a black box, which we refer to as $\MPAlg$ for convenience.\footnote{We can also use any other $O(1)$-approximate algorithm that runs in time $\tilde O(n^2)$.} The following theorem summarizes the properties of $\MPAlg$.

\begin{theorem}[\cite{MettuP00}]\label{thm:MP alg} There exists a deterministic algorithm $\MPAlg$ that, given a metric space of size $n$, returns a $O(1)$-approximation to the $k$-median problem in at most $\tilde O(n^2)$ time.
\end{theorem}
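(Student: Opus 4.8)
The plan is to realize $\MPAlg$ by first solving the uniform metric facility location problem and then reducing $k$-median to it via a parametric search over the facility-opening cost, following Mettu and Plaxton. Fix a facility cost $z>0$. To each point $x\in V$ associate a \emph{radius} $r_x(z)$, the unique $r\ge 0$ with $\sum_{y\in V} w(y)\cdot\max\{0,\,r-d(x,y)\} = z$; intuitively $r_x(z)$ is the radius at which the ball around $x$ has accumulated enough ``surplus'' to pay for one facility. All radii can be computed in $\tilde O(n^2)$ time: for each $x$ we sort the distances $\{d(x,y)\}_y$ and locate $r_x(z)$ by a scan over the resulting convex piecewise-linear function of $r$, costing $O(n\log n)$ per point. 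The facility location routine then processes the points in increasing order of radius, maintaining a center set $S$, and adds the current point $x$ to $S$ precisely when $d(x,S) > 2\, r_x(z)$ (any fixed constant works); this greedy phase is another $O(n^2)$ work, so the whole facility location computation runs in $\tilde O(n^2)$.

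The key property to extract from the analysis is that this greedy routine is \emph{Lagrangian-multiplier-preserving}: there is an absolute constant $\alpha\ge 1$ such that the returned set $S$ satisfies
\[
\cost(S) + \alpha\, z\,|S| \;\le\; \alpha\cdot\mathrm{OPT}_{\mathrm{FL}}(z),\qquad \mathrm{OPT}_{\mathrm{FL}}(z) := \min_{F\subseteq V}\bigl(\cost(F)+z|F|\bigr).
\]
Both sides are controlled by ball-packing arguments on the radii. The greedy threshold plus the triangle inequality force every $x$ to be served in $S$ at cost $O(r_x(z))$, so $\cost(S)=O\!\bigl(\sum_x w(x) r_x(z)\bigr)$; because any two chosen centers $x_i,x_j$ satisfy $d(x_i,x_j)>2\max\{r_{x_i}(z),r_{x_j}(z)\}$ their surplus balls are pairwise disjoint, which charges $z|S|$ against $\sum_x w(x) r_x(z)$ as well. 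Conversely, any feasible $F$ must, for each $x$, either pay $\Omega(r_x(z))$ to serve $x$ or place a center deep inside $x$'s surplus ball, and bounding how much surplus a single center can absorb gives $\mathrm{OPT}_{\mathrm{FL}}(z)=\Omega\!\bigl(\sum_x w(x) r_x(z)\bigr)$. Tracking constants across these inequalities yields the displayed bound; in particular the routine is an $O(1)$-approximation for uniform metric facility location in $\tilde O(n^2)$ time.

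To pass to $k$-median, note that for an optimal $k$-median solution $F^\star$ we have $\mathrm{OPT}_{\mathrm{FL}}(z)\le \cost(F^\star)+zk = \OPT_k(V)+zk$; hence if $z$ is chosen so that the greedy solution $S(z)$ has $|S(z)|=k$, the Lagrangian-multiplier-preserving inequality gives $\cost(S(z)) \le \cost(S(z))+\alpha z k \le \alpha(\OPT_k(V)+zk)$, and the $zk$ terms cancel to leave $\cost(S(z))\le \alpha\cdot\OPT_k(V)$ --- this cancellation is the whole point of the LMP property. Since each $r_x(z)$ is continuous and nondecreasing in $z$, the number of opened facilities $g(z):=|S(z)|$ is a nonincreasing step function with $g(z)=n$ as $z\to 0^+$ and $g(z)=1$ for $z$ large, so a binary search for $z$ over a range $[z_{\min},z_{\max}]$ of width polynomial in $n$, $\Delta$, and the (polynomially bounded, after rounding) weight ratio uses $O(\log(n\Delta))$ iterations, each of cost $\tilde O(n^2)$, for a total of $\tilde O(n^2)$. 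If the search lands on a $z$ with $g(z)=k$ we are done; otherwise it produces $z_1<z_2$ arbitrarily close with $g(z_1)>k>g(z_2)$, and the two solutions $S(z_1),S(z_2)$ are merged into a single $k$-median solution of cost $O(\alpha)\cdot\OPT_k(V)$ by the standard combination argument of Jain--Vazirani / Charikar--Guha (take the convex combination of sizes equal to $k$, then either extract a sub-solution or merge clusters), losing one more constant factor. If desired, one then invokes the extraction procedure of \Cref{lem:intro:sparse} to make the output size exactly $k$ at $\tilde O(k^2)$ extra cost.

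The main obstacle is the analysis of the greedy facility location subroutine: making the ``disjoint surplus balls'' intuition quantitatively tight so that \emph{both} $z|S|$ and $\mathrm{OPT}_{\mathrm{FL}}(z)$ are pinned to $\Theta\!\bigl(\sum_x w(x) r_x(z)\bigr)$ with constants good enough to yield the LMP inequality --- this is the technical heart of Mettu--Plaxton and is sensitive to the exact constants in the greedy threshold and in the ball definitions. A secondary subtlety is the monotonicity of $g(z)$ and the off-by-a-breakpoint case of the binary search, dealt with above by the classical Lagrangian-relaxation combination. We remark that Mettu and Plaxton in fact obtain the result through the closely related \emph{online median} construction, which uses a single scale-free radius parameter to produce a nested family of solutions good for all $k$ simultaneously; and that, as the footnote notes, any $O(1)$-approximation for $k$-median running in $\tilde O(n^2)$ time --- for instance the Jain--Vazirani primal--dual algorithm on the metric completion together with the same parametric search --- may be substituted for $\MPAlg$ without changing anything downstream.
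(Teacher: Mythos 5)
The paper does not prove this statement at all: it is imported verbatim as a black box from \cite{MettuP00}, and the footnote attached to it explicitly says that any deterministic $O(1)$-approximation running in $\tilde O(n^2)$ time may be substituted. So there is no internal proof to compare against, and strictly speaking nothing for you to prove here.

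Taken on its own terms, your sketch follows the right architecture --- an LMP (Lagrangian-multiplier-preserving) $O(1)$-approximation for uniform facility location, a binary search over the facility cost $z$, and a combination of the two bracketing solutions --- and this is indeed how Jain--Vazirani obtain a deterministic $6$-approximation in $\tilde O(n^2)$ time, which the paper accepts as a valid instantiation of $\MPAlg$. But as written it is a roadmap rather than a proof, and the gaps are exactly where the difficulty lives. First, the LMP inequality for the specific radius-ordered greedy you describe is asserted, not established; the ``disjoint surplus balls'' charging argument pins down $\cost(S)$ and $z|S|$ only up to constants, and whether those constants combine into a genuine LMP guarantee (as opposed to a plain $O(1)$-approximation for facility location, which does \emph{not} suffice for the cancellation step) depends delicately on the threshold in the greedy rule --- you acknowledge this yourself. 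Second, the ``standard combination argument'' for the case $g(z_1)>k>g(z_2)$ is, in its textbook Jain--Vazirani form, \emph{randomized} (round the convex combination by a coin flip); since determinism is the entire point of this theorem and of the paper, you would need to invoke or reprove the derandomized version (e.g., take the better of the two deterministic roundings, or greedily merge centers of $S(z_1)$ toward $S(z_2)$), which costs another constant factor but must be stated. Third, you conflate two distinct algorithms: the actual construction in \cite{MettuP00} is the hierarchically greedy ``online median'' procedure, not the parametric facility-location search, and its analysis does not pass through an LMP inequality at all. None of this undermines the truth of the theorem, but if you intend your paragraph as a proof you should either fully carry out the LMP analysis with explicit constants or simply cite \cite{MettuP00} or \cite{jain2001approximation} as the paper does.
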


For notational convenience, we denote the approximation ratio and the hidden polylogarithmic function in the running time of $\MPAlg$ by $\alpha$ and $A$ respectively. Thus, given a metric space of size $n$, $\MPAlg$ returns an $\alpha$-approximation in time at most $A \cdot n^2$.

\subsection{The Algorithm}

Let $(V, w, d)$ be a metric space of size $n$, $k \leq n$ be an integer and $\delta > 1$ be a parameter.
We also define values $\ell := \lceil \log_2(\log(n/k)/\log \delta) \rceil$, $\gamma := n/k$, and $q_i := \lceil \gamma^{1/2^{i}} \rceil$ for each $i \in [\ell]$, which we use to describe the algorithm.
The algorithm works in 2 \emph{phases}, which we describe below.

\medskip
\noindent
\textbf{Phase I:} In the first phase of the algorithm, we construct a sequence of partitions $Q_0,\dots, Q_\ell$ of the metric space $V$, such that the partition $Q_i$ is a \emph{refinement} of the partition $Q_{i-1}$.\footnote{i.e.~for each element $X \in Q_{i-1}$, there are elements $X_1,\dots, X_q \in Q_i$ such that $X=X_1\cup \dots \cup X_q$.} We start off by setting $Q_0 := \{V\}$. Subsequently, for each $i = 1,\dots,\ell$, we construct the partition $Q_i$ as follows:

\begin{wrapper}
    Initialize $Q_i \leftarrow \varnothing$. Then, for each $X \in Q_{i-1}$, arbitrarily partition $X$ into subsets $X_1, \dots, X_{q_i}$ such that $\left||X_{j}| - |X_{j'}|\right| \leq 1$ for each $j,j' \in [q_i]$, and add these subsets to $Q_i$.
\end{wrapper}

\medskip
\noindent
\textbf{Phase II:}
The second phase of the algorithm proceeds in \emph{iterations}, where we use the partitions $\{Q_i\}_i$ to compute the solution in a bottom-up manner.
Let $V_{\ell + 1}$ and $w_{\ell + 1}$ denote the set of points $V$ and the weight function $w$ respectively.
For each $i = \ell ,\dots, 0$, the algorithm constructs $V_i$ as follows:

\begin{wrapper}
    For each $X \in Q_i$, let $\sigma_X$ be the solution obtained by running $\MPAlg$ on the metric space $(X \cap V_{i+1},w_{i+1},d)$ and $S_X := \sigma_X(X \cap V_{i+1})$. For each center $y \in S_X$, let $w_i(y) := \sum_{x \in \sigma_{X}^{-1}(y)} w_{i+1}(x)$ be the total weight (w.r.t.~$w_{i+1}$) of the points assigned to $y$ in $X$ by $\sigma_X$. Let $V_i := \bigcup_{X \in Q_i} S_X$.
\end{wrapper}

\medskip
\noindent
\textbf{Output:}
For each $i \in [0, \ell]$, let $\sigma_i : V_{i+1} \longrightarrow V_i$ denote the mapping obtained by taking the union of the mappings $\{\sigma_X\}_{X \in Q_i}$.\footnote{Note that, since the domains of the mappings $\{\sigma_X\}_{X \in Q_i}$ partition $V_{i+1}$, their union is well defined.}
The output of the algorithm is the mapping $\sigma : V \longrightarrow V_0$, which we define as the composition
$\sigma := \sigma_{0} \circ \dots \circ \sigma_{\ell}$ of the $\sigma_i$.

\subsection{Analysis}

We now analyze the algorithm by bounding its approximation ratio and running time. We begin by proving the following lemmas, which summarize the relevant properties of the partitions constructed in Phase I of the algorithm.

\begin{lemma}\label{lem:partitions}
    For each $i \in [\ell]$, the set $Q_i$ is a partition of $V$ into $\prod_{j=1}^i q_j$ many subsets of size at most $n/|Q_i| + i$. Furthermore, $Q_i$ is a refinement of $Q_{i-1}$.
\end{lemma}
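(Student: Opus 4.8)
The statement to prove is \Cref{lem:partitions}: for each $i \in [\ell]$, the set $Q_i$ is a partition of $V$ into $\prod_{j=1}^i q_j$ many subsets of size at most $n/|Q_i| + i$, and $Q_i$ is a refinement of $Q_{i-1}$.

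This is an induction on $i$, analogous to \Cref{lem:partitions:main}. The plan:

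Base case $i=0$: $Q_0 = \{V\}$, which is a partition into $1 = \prod_{j=1}^0 q_j$ subset of size $n = n/1 + 0$.

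Inductive step: assume $Q_i$ is a partition of $V$ into $\prod_{j=1}^i q_j$ sets each of size at most $n/|Q_i| + i$. The algorithm forms $Q_{i+1}$ by splitting each $X \in Q_i$ into $q_{i+1}$ parts of nearly equal size (differing by at most 1), so $Q_{i+1}$ is a refinement of $Q_i$ and a partition of $V$ with $|Q_{i+1}| = q_{i+1}|Q_i| = \prod_{j=1}^{i+1} q_j$ sets. Each part has size at most $\lceil (n/|Q_i| + i)/q_{i+1} \rceil \le n/(q_{i+1}|Q_i|) + i/q_{i+1} + 1 \le n/|Q_{i+1}| + (i+1)$, using $q_{i+1} \ge 1$.

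Here is the proof.

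\begin{proof}
    We argue by induction on $i$. For the base case, $Q_0 = \{V\}$ is a partition of $V$ into $1 = \prod_{j=1}^{0} q_j$ subset of size $n = n/|Q_0| + 0$, as required.

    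Now suppose the statement holds for some $i$ with $0 \le i < \ell$, so that $Q_i$ is a partition of $V$ into $|Q_i| = \prod_{j=1}^{i} q_j$ subsets, each of size at most $n/|Q_i| + i$. The algorithm constructs $Q_{i+1}$ by taking each $X \in Q_i$ and partitioning it into subsets $X_1, \dots, X_{q_{i+1}}$ with $\big||X_j| - |X_{j'}|\big| \le 1$ for all $j, j'$. Since each $X \in Q_i$ is partitioned into these pieces, $Q_{i+1}$ is a partition of $V$ and is a refinement of $Q_i$. Moreover, the number of subsets in $Q_{i+1}$ is $q_{i+1} \cdot |Q_i| = q_{i+1} \cdot \prod_{j=1}^{i} q_j = \prod_{j=1}^{i+1} q_j = |Q_{i+1}|$.

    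It remains to bound the sizes. Fix $X \in Q_i$; by the inductive hypothesis $|X| \le n/|Q_i| + i$. Since $X$ is split into $q_{i+1}$ parts whose sizes differ by at most $1$, each part has size at most $\lceil |X| / q_{i+1} \rceil$. Using $q_{i+1} \ge 1$, we get
    $$ \left\lceil \frac{|X|}{q_{i+1}} \right\rceil \le \frac{1}{q_{i+1}}\left( \frac{n}{|Q_i|} + i \right) + 1 \le \frac{n}{q_{i+1} \cdot |Q_i|} + i + 1 = \frac{n}{|Q_{i+1}|} + (i+1). $$
    This completes the induction.
\end{proof}

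The only mild subtlety is keeping the additive error term from blowing up: each split introduces a ceiling, contributing an additive $+1$, and these accumulate to $+i$ after $i$ levels. Since $\ell = \lceil \log_2(\log(n/k)/\log\delta)\rceil$ is only polylogarithmic, this additive slack is harmless for the subsequent running-time analysis (where it will be absorbed exactly as in \Cref{lem:partitions:main}). I do not anticipate a real obstacle here; the statement is a routine structural invariant, and the main care is simply to verify that the arithmetic with the ceilings goes through with the stated bound $n/|Q_i| + i$ rather than something weaker.
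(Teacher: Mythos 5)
Your proof is correct and follows essentially the same route as the paper's: an induction on $i$ where the subset count is $q_{i+1}\cdot|Q_i|=\prod_{j=1}^{i+1}q_j$ and the size bound comes from $\bigl\lceil (n/|Q_i|+i)/q_{i+1}\bigr\rceil \le n/|Q_{i+1}| + (i+1)$, using $q_{i+1}\ge 1$ to absorb the $i/q_{i+1}$ term. No gaps.
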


\begin{proof}
    We define $Q_0$ as $\{V\}$, which is a trivial partition of $V$. Now, suppose that this statement holds for the partition $Q_i$, where $0 \leq i < \ell$. The algorithm constructs $Q_{i+1}$ by taking each $X \in Q_i$ and further partitioning $X$ into subsets $X_1,\dots,X_{q_{i+1}}$, such that difference in the sizes of any two of these subsets is at most $1$. Clearly, the partition $Q_{i+1}$ is a refinement of the partition $Q_i$. We can also observe that the number of subsets in the partition $Q_{i+1}$ is $q_{i+1} \cdot |Q_{i}| = q_{i+1} \cdot \prod_{j=1}^i q_j = \prod_{j=1}^{i+1} q_j$.\footnote{Note that we do not necessarily guarantee that all of the sets in these partitions are non-empty.} Finally, since each subset $X \in Q_i$ has size at most $n/|Q_i| + i$, it follows that each subset in $Q_{i+1}$ has size at most
    $$ \left\lceil \frac{1}{q_{i+1}} \cdot \left( \frac{n}{|Q_i|} + i \right) \right\rceil \leq \frac{n}{q_{i+1} \cdot |Q_i|} + \frac{i}{q_{i+1}} + 1 \leq \frac{n}{|Q_{i+1}|} + (i+1).\qedhere $$
\end{proof}

\begin{claim}\label{app:lem:prod q}
    For each $i \in [\ell]$, we have that $\gamma^{1 - 1/2^i} \leq \prod_{j=1}^i q_j \leq e^i \cdot \gamma^{1 - 1/2^i}$.
\end{claim}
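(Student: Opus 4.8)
The plan is to derive both inequalities from the elementwise bounds on $q_j = \lceil \gamma^{1/2^j} \rceil$, namely $\gamma^{1/2^j} \le q_j \le \gamma^{1/2^j} + 1$, and then collapse the exponents using the geometric sum $\sum_{j=1}^i 2^{-j} = 1 - 2^{-i}$. Throughout I would use the fact that $\gamma = n/k \ge 1$ (which holds since the algorithm is run with $k \le n$); this is the only place where anything needs checking, since it is what lets the additive $+1$ from the ceiling be absorbed into a multiplicative constant.

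For the lower bound, I would simply take the product of $q_j \ge \gamma^{1/2^j}$ over $j \in [i]$ to get
$\prod_{j=1}^i q_j \ge \prod_{j=1}^i \gamma^{1/2^j} = \gamma^{\sum_{j=1}^i 2^{-j}} = \gamma^{1 - 2^{-i}}$,
which is exactly the claimed left-hand inequality.

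For the upper bound, the key observation is that $\gamma \ge 1$ forces $\gamma^{1/2^j} \ge 1$, so the additive term is dominated: $q_j \le \gamma^{1/2^j} + 1 \le 2\,\gamma^{1/2^j} \le e\cdot \gamma^{1/2^j}$. Taking the product over $j \in [i]$ then gives
$\prod_{j=1}^i q_j \le e^i \prod_{j=1}^i \gamma^{1/2^j} = e^i\, \gamma^{1 - 2^{-i}}$,
as required. (If one prefers to make the constant explicit without invoking $2 \le e$, one can instead write $q_j \le \gamma^{1/2^j}\bigl(1 + \gamma^{-1/2^j}\bigr)$, apply $1 + x \le e^x$, and bound $\sum_{j=1}^i \gamma^{-1/2^j} \le i$ using $\gamma \ge 1$ — this yields the same conclusion.)

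There is no real obstacle here; the statement is a routine estimate and the proof is a two-line computation for each direction. The only subtlety worth flagging in the write-up is the use of $\gamma \ge 1$, since for $\gamma < 1$ the additive rounding error would no longer be swallowed by a constant factor.
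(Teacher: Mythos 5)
Your proof is correct and follows essentially the same route as the paper: bound each $q_j=\lceil\gamma^{1/2^j}\rceil$ elementwise and collapse the geometric sum, with the lower bound identical and the upper bound differing only in how the ceiling's $+1$ is absorbed (you use $\gamma^{1/2^j}+1\le e\,\gamma^{1/2^j}$ via $\gamma\ge 1$, while the paper writes $1+\gamma^{-1/2^j}\le e^{\gamma^{-1/2^j}}$ and sums the exponents — exactly the alternative you mention in your parenthetical). Your flagging of the hypothesis $\gamma=n/k\ge 1$ is apt, and it holds here since the theorem assumes $\delta\le n/k$ with $\delta\ge 2$.
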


\begin{proof}
    Let $i \in [\ell]$. For the lower bound, we can see that
    \begin{equation}\label{eq:prod q 1}
        \prod_{j=1}^i q_j = \prod_{j=1}^i \lceil \gamma^{1/2^j} \rceil \geq \prod_{j=1}^i \gamma^{1/2^j} = \gamma^{\sum_{j=1}^i 1/2^j} = \gamma^{1 - 1/2^i}.
    \end{equation}    
    For the upper bound, we use the fact that $\lceil x \rceil \leq x + 1 = x (1 + 1/x)$ to get that
    \begin{equation}\label{eq:prod q 2}
        \prod_{j=1}^i q_j = \prod_{j=1}^i \lceil \gamma^{1/2^j} \rceil \leq \prod_{j=1}^i \gamma^{1/2^j} \cdot \left(1 + \gamma^{-1/2^j}\right) = \left(\prod_{j=1}^i \gamma^{1/2^j} \right) \cdot \left(\prod_{j=1}^i \left(1 + \gamma^{-1/2^j}\right) \right).
    \end{equation}
    It follows from \Cref{eq:prod q 1} that $\prod_{j=1}^i \gamma^{1/2^j} = \gamma^{1 - 1/2^i}$. We can also see that
    $$ \prod_{j=1}^i \left(1 + \gamma^{-1/2^j}\right) \leq \prod_{j=1}^i \exp \! \left(\gamma^{-1/2^j}\right) = \exp \! \left(\sum_{j=1}^i \gamma^{-1/2^j}\right) \leq e^i.$$
    Thus, combining these upper bounds with \Cref{eq:prod q 2}, it follows that $\prod_{j=1}^i q_j \leq e^i \cdot \gamma^{1 - 1/2^i}$.
\end{proof}

\subsubsection*{Approximation Ratio}
To analyze the approximation ratio of the algorithm, we use the following lemma of \cite{focs/GuhaMMO00}, which shows that we can use a good bicriteria approximation for $k$-median as a `sparsifier' for the underlying metric space. A proof of this lemma using the same notation as our paper can be found in \cite{focs/BCLP24}.

\begin{lemma}[Lemma 10.3, \cite{focs/BCLP24}]\label{lem:bicri=sparsifier}
    Let $(V,w,d)$ be a metric space,
    $\sigma : V \longrightarrow V'$ be a mapping such that $\cost(\sigma, V, w) \leq \beta \cdot \OPT_k(V, w)$, and define $w'(y) := \sum_{x \in \sigma^{-1}(y)} w(x)$ for all $y \in V'$. Given a mapping $\pi : V' \longrightarrow S$ such that $\cost(\pi, V', w') \leq \alpha \cdot \OPT_k(V', w')$, we have that
    $$ \cost(\pi \circ \sigma, V, w) \leq (2\alpha + (1 + 2\alpha)\beta) \cdot \OPT_k(V, w). $$
\end{lemma}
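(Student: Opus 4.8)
\textbf{Proof proposal for \Cref{lem:bicri=sparsifier}.}

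The plan is to bound the cost of the composed mapping $\pi \circ \sigma$ by splitting the distance $d(x, \pi(\sigma(x)))$ using the triangle inequality through the intermediate point $\sigma(x) \in V'$, and then separately controlling the two resulting terms. Concretely, for every $x \in V$ we have $d(x, \pi(\sigma(x))) \leq d(x, \sigma(x)) + d(\sigma(x), \pi(\sigma(x)))$. Multiplying by $w(x)$ and summing over $x \in V$, the first term contributes exactly $\cost(\sigma, V, w) \leq \beta \cdot \OPT_k(V,w)$. For the second term, I would regroup the sum by the value $y = \sigma(x)$: since $w'(y) = \sum_{x \in \sigma^{-1}(y)} w(x)$, we get $\sum_{x \in V} w(x) \cdot d(\sigma(x), \pi(\sigma(x))) = \sum_{y \in V'} w'(y) \cdot d(y, \pi(y)) = \cost(\pi, V', w') \leq \alpha \cdot \OPT_k(V', w')$.

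So it remains to relate $\OPT_k(V', w')$ back to $\OPT_k(V, w)$. This is the step I expect to be the crux. The natural approach is to take an optimal size-$k$ solution $S^\star$ for $(V,w,d)$ and argue that it (or its projection) is a cheap solution for $(V', w')$. Using the projection lemma (\Cref{lem:projection}) or a direct triangle-inequality argument: for $y \in V'$, pick any $x_y \in \sigma^{-1}(y)$ and note $d(y, S^\star) \leq d(y, x_y) + d(x_y, S^\star)$. Weighting by $w'(y)$ and summing, the term $\sum_y w'(y) d(y, x_y)$ is again bounded by $\cost(\sigma, V, w) \leq \beta \cdot \OPT_k(V,w)$ after regrouping, while $\sum_y w'(y) d(x_y, S^\star) \leq \sum_{x \in V} w(x) d(x, S^\star) = \OPT_k(V,w)$ provided one chooses $x_y$ to be (close to) the minimizer — more carefully, one sums $d(x, S^\star)$ over all $x \in \sigma^{-1}(y)$ rather than a single representative, which gives $\sum_{x} w(x) d(\sigma(x), S^\star) \leq \sum_x w(x)(d(\sigma(x),x) + d(x,S^\star)) \le \beta \cdot \OPT_k(V,w) + \OPT_k(V,w)$. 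Hence $\OPT_k(V', w') \leq (1 + \beta) \cdot \OPT_k(V,w)$, possibly after projecting $S^\star$ into $V'$ at the cost already absorbed.

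Putting the pieces together: $\cost(\pi \circ \sigma, V, w) \leq \beta \cdot \OPT_k(V,w) + \alpha \cdot \OPT_k(V',w') \leq \beta \cdot \OPT_k(V,w) + \alpha(1+\beta)\cdot \OPT_k(V,w) = (\alpha + (1+\alpha)\beta)\cdot \OPT_k(V,w)$. The factor of $2$ discrepancy with the stated bound $(2\alpha + (1+2\alpha)\beta)$ comes from the fact that $\pi$ must place its centers within $V'$ (an ``improper'' restriction), so one should invoke \Cref{cor:improper}-style reasoning: bound $\OPT_k(V', w')$ against the best solution that may use centers anywhere, losing a factor $2$, which turns $\alpha \cdot \OPT_k(V',w')$ into $2\alpha$ times the relevant quantity. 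I would track this carefully, since getting the constants exactly right is the only real subtlety; the rest is bookkeeping with the triangle inequality and the regrouping identity for $w'$. Since this is a restatement of a known lemma from \cite{focs/BCLP24}, I would present the short self-contained argument above rather than re-deriving it from scratch.
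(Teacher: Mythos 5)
Your proposal is essentially correct, and since the paper does not prove \Cref{lem:bicri=sparsifier} itself (it cites Lemma 10.3 of \cite{focs/BCLP24}), your self-contained argument is the natural one: split $d(x,\pi(\sigma(x)))$ through the intermediate point $\sigma(x)$, regroup by clusters so the second term becomes $\cost(\pi,V',w')$, and then bound $\OPT_k(V',w')$ by evaluating (a projection of) the optimal solution $S^\star$ on the weighted instance $(V',w')$. Two spots need tightening. First, your single-representative claim, that $\sum_{y} w'(y)\, d(y,x_y) \le \cost(\sigma,V,w)$ for an arbitrary $x_y \in \sigma^{-1}(y)$, is false as stated: a far representative gets charged the entire cluster weight, and no single choice of representative handles both terms simultaneously. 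Your corrected version is the right one: for every $x \in \sigma^{-1}(y)$ use $d(y,S^\star) \le d(y,x)+d(x,S^\star)$, multiply by $w(x)$ and sum over all $x \in V$, which shows that $S^\star$, used as an \emph{improper} solution for $(V',w')$, costs at most $(1+\beta)\cdot\OPT_k(V,w)$. Second, the factor $2$: since $\OPT_k(V',w')$ requires centers inside $V'$, you must pass to the projection $T'$ of $S^\star$ onto $V'$; note that \Cref{cor:improper} is stated for the weights $w$, but its pointwise proof (for $y \in V'$ one has $d(y,T') \le 2\, d(y,S^\star)$ because $d(y,V')=0$) applies verbatim with the weights $w'$, giving $\OPT_k(V',w') \le 2(1+\beta)\cdot\OPT_k(V,w)$. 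With this the bookkeeping closes exactly: $\cost(\pi\circ\sigma,V,w) \le \beta\cdot\OPT_k(V,w) + \alpha\cdot 2(1+\beta)\cdot\OPT_k(V,w) = (2\alpha+(1+2\alpha)\beta)\cdot\OPT_k(V,w)$, so the $(\alpha+(1+\alpha)\beta)$ figure in your final paragraph should simply be replaced by this computation rather than adjusted informally afterwards.
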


For each $i \in [0, \ell]$, let $\sigma_i'$ denote the mapping $\sigma_i \circ \dots \circ \sigma_\ell$. Note that the output of the algorithm is precisely $\sigma'_0$.
We use \Cref{lem:bicri=sparsifier} to inductively bound the approximation ratio of each $\sigma'_i$. In particular, we prove the following lemma.

\begin{lemma}\label{lem:apx rec}
    For each $0 \leq i \leq \ell$, $\cost (\sigma'_{i}, V, w) \leq (9\alpha)^{\ell + 1 - i} \cdot \OPT_k(V, w)$.
\end{lemma}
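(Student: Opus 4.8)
The goal is to establish the recursion $\cost(\sigma'_i, V, w) \leq (9\alpha)^{\ell+1-i} \cdot \OPT_k(V,w)$ by downward induction on $i$, starting from $i = \ell$ and working down to $i = 0$. The key tool is \Cref{lem:bicri=sparsifier}, which lets us combine a bicriteria-approximate sparsifier with a good solution on the sparsified instance.

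\textbf{Base case.} For $i = \ell$, the mapping $\sigma'_\ell = \sigma_\ell$ is obtained by running $\MPAlg$ on each subspace $(X \cap V_{\ell+1}, w_{\ell+1}, d) = (X, w, d)$ for $X \in Q_\ell$. Since $\MPAlg$ returns an $\alpha$-approximation, summing the cost over all $X \in Q_\ell$ gives $\cost(\sigma_\ell, V, w) = \sum_{X \in Q_\ell} \cost(\sigma_X, X, w) \leq \alpha \cdot \sum_{X \in Q_\ell} \OPT_k(X) \leq 2\alpha \cdot \OPT_k(V, w)$, where the last inequality uses \Cref{cor:improper} together with the fact that the optimal solution restricted to each block of a partition has total cost at most $\OPT_k(V,w)$ (exactly as in the fourth-line computation inside the proof of \Cref{lem:apx:2}). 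Since $2\alpha \leq 9\alpha$, this gives the base case.

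\textbf{Inductive step.} Suppose the bound holds for $i+1$, i.e.~$\cost(\sigma'_{i+1}, V, w) \leq (9\alpha)^{\ell - i} \cdot \OPT_k(V,w)$. Note $\sigma'_{i+1}$ is a mapping $V \longrightarrow V_{i+1}$, and $w_{i+1}$ is precisely the pushforward weight $w_{i+1}(y) = \sum_{x \in (\sigma'_{i+1})^{-1}(y)} w(x)$. Now $\sigma'_i = \sigma_i \circ \sigma'_{i+1}$, where $\sigma_i : V_{i+1} \longrightarrow V_i$ is the union of the $\MPAlg$-solutions on the pieces $X \cap V_{i+1}$. The plan is to first show that $\sigma_i$, viewed as a solution on $(V_{i+1}, w_{i+1}, d)$, is an $O(\alpha)$-approximation: indeed $\cost(\sigma_i, V_{i+1}, w_{i+1}) = \sum_{X \in Q_i} \cost(\sigma_X, X \cap V_{i+1}, w_{i+1}) \leq \alpha \sum_{X \in Q_i} \OPT_k(X \cap V_{i+1}, w_{i+1}) \leq 2\alpha \cdot \OPT_k(V_{i+1}, w_{i+1})$, again using \Cref{cor:improper} (applied in the space $(V_{i+1}, w_{i+1}, d)$ with the partition $\{X \cap V_{i+1}\}_{X \in Q_i}$) and the partition-sum bound on $\OPT_k$. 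Then apply \Cref{lem:bicri=sparsifier} with $\sigma = \sigma'_{i+1}$, $\beta = (9\alpha)^{\ell - i}$, $\pi = \sigma_i$, and $\alpha' = 2\alpha$ (the ``$\alpha$'' of that lemma), yielding
$$ \cost(\sigma'_i, V, w) = \cost(\sigma_i \circ \sigma'_{i+1}, V, w) \leq \bigl(2 \cdot 2\alpha + (1 + 2 \cdot 2\alpha)\cdot (9\alpha)^{\ell - i}\bigr) \cdot \OPT_k(V, w). $$
It remains to check $4\alpha + (1 + 4\alpha)(9\alpha)^{\ell - i} \leq (9\alpha)^{\ell + 1 - i}$. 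Since $\alpha \geq 1$, we have $1 + 4\alpha \leq 5\alpha$ and $4\alpha \leq 4\alpha \cdot (9\alpha)^{\ell-i}$, so the left side is at most $(4\alpha + 5\alpha)(9\alpha)^{\ell-i} = 9\alpha \cdot (9\alpha)^{\ell-i} = (9\alpha)^{\ell+1-i}$, completing the induction.

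\textbf{Main obstacle.} The only subtle point is making sure that in the inductive step we correctly identify $w_{i+1}$ as the pushforward of $w$ under $\sigma'_{i+1}$ (so that $\OPT_k(V_{i+1}, w_{i+1})$ is the right quantity and \Cref{lem:bicri=sparsifier} applies verbatim), and that the bicriteria bound used is with respect to the \emph{weighted} space at level $i+1$ rather than the original $(V,w,d)$; this is a routine but necessary bookkeeping check, and the partition-summing step via \Cref{cor:improper} must be applied in each intermediate weighted metric space. Everything else is arithmetic with the constant $9$ chosen precisely so the recursion closes for $\alpha \geq 1$.
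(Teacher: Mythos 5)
Your proof is correct and follows essentially the same route as the paper: bound $\cost(\sigma_i, V_{i+1}, w_{i+1}) \leq 2\alpha \cdot \OPT_k(V_{i+1}, w_{i+1})$ by summing over the partition (with the factor $2$ from improper centers, as in \Cref{cor:improper}), then apply \Cref{lem:bicri=sparsifier} with $\beta = (9\alpha)^{\ell-i}$ and close the recursion via $4\alpha + (1+4\alpha)(9\alpha)^{\ell-i} \leq (9\alpha)^{\ell+1-i}$. The only cosmetic difference is that you anchor the induction at $i = \ell$ directly, whereas the paper starts from the identity map $\sigma'_{\ell+1}$ with cost $0$; both are fine.
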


\begin{proof}
    Let $\sigma_{\ell + 1}'$ denote the identity mapping on $V$. Then, we clearly have that $\cost(\sigma'_{\ell + 1}, V, w) = 0$.
    Now, let $i \in [0,\ell]$ and suppose that
    \begin{equation}\label{eq:apx 1}
        \cost (\sigma_{i+1}', V, w) \leq (9\alpha)^{\ell - i} \cdot \OPT_k(V, w).
    \end{equation}
    Let $\sigma^\star_{i+1}$ denote an optimal solution to the $k$-median problem in the metric space $(V_{i+1}, w_{i+1}, d)$.
    We can upper bound the cost of $\sigma_i$ (in the space $(V_{i+1}, w_{i+1}, d)$) by
    \begin{align}
        \cost(\sigma_i, V_{i+1}, w_{i+1}) &= \sum_{X \in Q_i} \cost(\sigma_X, X \cap V_{i+1}, w_{i+1})\nonumber\\
        &\leq \sum_{X \in Q_i} 2\alpha \cdot \cost(\sigma^\star_{i+1}, X \cap V_{i+1}, w_{i+1})\nonumber\\
        &= 2\alpha \cdot \cost(\sigma^\star_{i+1}, V_{i+1}, w_{i+1})\nonumber\\
        &= 2\alpha \cdot \OPT_k(V_{i+1}, w_{i+1}),\label{eq:apx 2}
    \end{align}
    where the first and third lines follow from the fact that $\{X \cap V_{i+1} \mid X \in Q_{i}\}$ partitions $V_{i+1}$ and
    the second line from $\sigma_X$ being an $\alpha$-approximate solution in the metric space $(X \cap V_{i+1}, w_{i+1}, d)$. We note that the extra factor of $2$ in the third line follows from the fact that $\sigma_{i+1}^\star(X \cap V_{i+1})$ might contain points that are not in $X \cap V_{i+1}$.
        
    Now, since we have that $\sigma_{i}' = \sigma_i \circ \sigma'_{i+1}$, we can apply \Cref{lem:bicri=sparsifier} using the upper bounds on $\cost (\sigma_{i+1}', V, w)$ and $\cost(\sigma_i, V_{i+1}, w_{i+1})$ given in \Cref{eq:apx 1,eq:apx 2} to get that
    $$\cost(\sigma'_{i}, V, w) \leq (4\alpha + (1 + 4\alpha) \cdot (9\alpha)^{\ell - i}) \cdot \OPT_k(V,w) \leq (9\alpha)^{\ell + 1 - i} \cdot \OPT_k(V,w). \qedhere$$
\end{proof}

\noindent
It follows from \Cref{lem:apx rec} by setting $i=0$ that 
$$\cost(V_0, V, w) = 2^{O(\ell)} \cdot \OPT_k(V,w) = \poly (\log(n/k)/\log \delta) \cdot \OPT_k(V,w).$$

\subsubsection*{Running Time}

The running time of Phase I of the algorithm is $O(n \ell) = \tilde O(n)$, since it takes $O(n)$ time to construct each partition $Q_i$ given the partition $Q_{i-1}$. Thus, we now focus on bounding the running time of Phase II.

We can first observe that the running time of the $i^{th}$ iteration in Phase II is dominated by the total time taken to handle the calls to the algorithm $\MPAlg$.
In the first iteration (when $i = \ell$), we make $|Q_\ell|$ many calls to $\MPAlg$, each one on a subspace of size at most $n/|Q_\ell| + \ell$ (by \Cref{lem:partitions}). Thus, by \Cref{thm:MP alg}, the time taken to handle these calls is at most
\begin{equation}\label{eq:1}
A \cdot \left(\frac{n}{|Q_\ell|} + \ell\right)^2 \cdot |Q_\ell| \leq A \ell^2 \cdot  \left(\frac{n}{|Q_\ell|}\right)^2 \cdot |Q_\ell| = A \ell^2 \cdot  \frac{n^2}{|Q_\ell|} = A \ell^2 \cdot \frac{n^2}{\prod_{j=1}^\ell q_j} \leq A \ell^2 \cdot \frac{n^2}{\gamma^{1 - 1/2^\ell}}, 
\end{equation}
where the first inequality follows from the fact that $n/|Q_\ell| \geq 1$ and $\ell \geq 1$, the last equality follows from \Cref{lem:partitions}, and the last inequality follows from \Cref{app:lem:prod q}.
We can now upper bound the RHS of \Cref{eq:1} by
\begin{equation}\label{eq:3}
A \ell^2 \cdot \frac{n^2}{\gamma^{1 - 1/2^\ell}} = A \ell^2 \cdot n^2 \cdot  \frac{k}{n} \cdot \gamma^{1/2^\ell} = A \ell^2 \cdot nk \cdot \gamma^{2^{-\ell}} \leq A \ell^2 \cdot nk \cdot \delta.  
\end{equation}
Thus, the time taken to handle these calls to $\MPAlg$ is $\tilde O(nk\delta)$.
For each subsequent iteration (when $0 \leq i < \ell$), we make $|Q_i|$ many calls to $\MPAlg$, 
each one on a subspace $(X \cap V_{i+1},w_{i+1},d)$ of size at most $q_{i+1} k$ since
$ |X \cap V_{i+1}| = | \bigcup_{j=1}^{q_{i+1}} S_{X_j}| \leq q_{i+1} k$,
where $X_1, \ldots, X_{q_{i+1}}$ are the subsets that $X$ is partitioned into, and each $S_{X_j}$ is the solution computed on the subspace $(X_j \cap V_{i+2}, w_{i+2},d)$ in the previous iteration.
It follows that the time taken to handle these calls is at most
\begin{align*}
A \cdot (q_{i+1} k)^2 \cdot |Q_i| &= A \cdot (q_{i+1} k)^2 \cdot \prod_{j=1}^i q_j = A \cdot k^2 \cdot q_{i+1} \cdot \prod_{j=1}^{i+1} q_j \\ &\leq A \cdot k^2 \cdot 2\gamma^{1/2^{i+1}} \cdot e^i \gamma^{1 - 1/2^{i+1}} = 2Ae^i \cdot k^2 \cdot \gamma =  2A e^i \cdot nk.    
\end{align*}
where the first equality follows from \Cref{lem:partitions} and the first inequality follows from \Cref{app:lem:prod q} and the fact that $q_{i+1} \leq 2\gamma^{1/2^{i+1}}$. Since $\ell = O(\log \log n)$, we get that $e^i \leq e^\ell = \tilde O(1)$ and it follows that the total time taken to handle these calls to $\MPAlg$ is $\tilde O(nk)$.
Consequently, the running time of the algorithm is $\tilde O(nk\delta) + \ell \cdot \tilde O(nk) = \tilde O(nk\delta)$.

\subsection{Extension to $k$-Means}\label{sec:guha:kmeans}

It is straightforward to extend this algorithm to the $k$-means problem, where the clustering objective is $\sum_{x \in V} w(x) \cdot d(x,S)^2$ instead of $\sum_{x \in V} w(x) \cdot d(x,S)$. In particular, we get the following theorem.

\begin{theorem}
    There is a deterministic algorithm for $k$-means that, given a metric space of size $n$, computes a $\poly(\log(n/k) / \log \delta)$-approximate solution in $\tilde O(nk\delta)$ time, for any $2 \leq \delta \leq n/k$.
\end{theorem}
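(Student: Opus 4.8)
The plan is to run the hierarchical partitioning algorithm of \Cref{sec:guha alg} essentially verbatim, making only two changes. First, replace the $k$-median black box $\MPAlg$ by a deterministic algorithm that computes an $O(1)$-approximation to $k$-means on a metric space of size $n$ in $\tilde O(n^2)$ time, used exactly as in \Cref{thm:MP alg} (with approximation constant $\alpha$ and running-time polylog factor $A$); such an algorithm follows, for instance, from the Mettu–Plaxton framework or from deterministic local search. Second, re-derive for the squared-distance objective the two structural facts that the analysis relies on. Write $\cost_2(S,U) := \sum_{x \in U} w(x)\, d(x,S)^2$ for the $k$-means cost. The only genuine change from the median case is that the triangle inequality for squared distances holds only up to a constant: $d(x,y)^2 \le (d(x,z)+d(z,y))^2 \le 2 d(x,z)^2 + 2 d(z,y)^2$. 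Carrying this factor of $2$ through the proof of \Cref{lem:projection} gives a $k$-means projection lemma $\cost_2(\proj(A,B)) \le O(1)\cdot\big(\cost_2(B) + \cost_2(A)\big)$, and carrying it through the proof of \Cref{lem:bicri=sparsifier} gives a $k$-means sparsifier lemma: if $\sigma : V \to V'$ has $k$-means cost at most $\beta$ times optimal and $\pi : V' \to S$ has $k$-means cost at most $\alpha$ times optimal on $(V',w')$, then $\pi \circ \sigma$ has $k$-means cost at most $O(\alpha)(1+\beta)$ times optimal on $(V,w)$, with an absolute hidden constant (larger than in the median case, but still fixed).

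With these two lemmas, the approximation analysis of \Cref{lem:apx rec} carries over unchanged in structure. Defining $\sigma'_i := \sigma_i \circ \dots \circ \sigma_\ell$ as before, one shows by induction on $i$ from $\ell+1$ down to $0$ that the $k$-means cost of $\sigma'_i$ on $(V,w)$ is at most $c^{\,\ell+1-i}$ times the optimal $k$-means cost, for a suitable absolute constant $c$ (some fixed multiple of $\alpha$, in place of $9\alpha$ in the median case). The inductive step uses the $k$-means analogue of \Cref{eq:apx 2} — that $\cost_2(\sigma_i, V_{i+1}, w_{i+1})$ is $O(\alpha)$ times the optimal $k$-means cost on $(V_{i+1},w_{i+1},d)$, where the extra $O(1)$ factor (rather than $2$) again comes from the approximate triangle inequality when projecting an optimal center of $X \cap V_{i+1}$ back into $X \cap V_{i+1}$ — followed by the $k$-means sparsifier lemma. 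Setting $i = 0$ and recalling $\ell = \lceil \log_2(\log(n/k)/\log\delta) \rceil$ gives that $V_0$ has $k$-means cost $c^{O(\ell)} = \poly(\log(n/k)/\log\delta)$ times optimal, as desired.

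The running-time analysis is literally identical to that in \Cref{sec:guha alg}, since the black box again runs in $\tilde O(n^2)$ time: Phase I costs $\tilde O(n\ell) = \tilde O(n)$; at the bottom level of Phase II we make $|Q_\ell|$ calls on instances of size at most $n/|Q_\ell| + \ell$ (by \Cref{lem:partitions}), which by \Cref{app:lem:prod q} and the choice of $\ell$ costs $\tilde O(n^2/|Q_\ell|) \le \tilde O(nk\delta)$; and at each higher level $0 \le i < \ell$ we make $|Q_i|$ calls on instances of size at most $q_{i+1} k$, costing $A \cdot (q_{i+1} k)^2 \cdot |Q_i| \le 2Ae^i \cdot nk = \tilde O(nk)$ in total since $e^i \le e^\ell = \tilde O(1)$. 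Summing over the $\ell+1$ levels gives total running time $\tilde O(nk\delta)$.

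I expect the only real work to be bookkeeping: checking that the squared-distance versions of \Cref{lem:projection} and \Cref{lem:bicri=sparsifier} hold with \emph{absolute} constants, so that the per-level blow-up stays $O(1)$ and the final bound is $2^{O(\ell)}$ rather than, say, $\ell^{O(\ell)}$; and confirming the availability of a deterministic $\tilde O(n^2)$-time $O(1)$-approximation for $k$-means in general metrics to serve as the black box. Neither point is difficult — the squared triangle inequality costs only a factor of $2$ per use, and such black boxes are standard — but the constants genuinely matter here, since they sit in the base of an exponent in $\ell$.
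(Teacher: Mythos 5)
Your proposal is correct and proves the theorem, but it handles the $k$-means objective by a genuinely different (more hands-on) route than the paper. The algorithm is the same in both cases: the hierarchy of \Cref{sec:guha alg} run verbatim with a deterministic $\tilde O(n^2)$-time $O(1)$-approximate black box, and the same induction of \Cref{lem:apx rec} driven by a sparsifier lemma, with identical running-time accounting. You work directly with the squared objective $\sum_{x} w(x)\, d(x,S)^2$ and re-derive \Cref{lem:projection} and \Cref{lem:bicri=sparsifier} via the relaxed triangle inequality $d(x,y)^2 \le 2\,d(x,z)^2 + 2\,d(z,y)^2$, which indeed costs only absolute constants per level and hence gives $c^{O(\ell)} = \poly(\log(n/k)/\log\delta)$ overall. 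The paper instead passes to the \emph{normalized} $k$-means cost $\bigl(\sum_x w(x)\, d(x,S)^2\bigr)^{1/2}$ (note your $\cost_2$ notation clashes with the paper's, which denotes this square root), observes that an $\alpha$-approximation for the normalized objective is exactly an $\alpha^2$-approximation for $k$-means, and then reuses the $k$-median analysis with no changes at all: the projection lemma holds verbatim for the normalized cost, \cite{focs/BCLP24} prove \Cref{lem:bicri=sparsifier} verbatim for it, and the Mettu--Plaxton black box is asserted to be $O(1)$-approximate for it, with the final squaring absorbed into the $\poly(\cdot)$. The paper's reduction buys zero constant-chasing and direct reuse of the cited sparsifier lemma; your route is more self-contained (you do not need the normalized-cost version of that lemma) at the price of redoing its proof with squared distances, which you correctly note must be checked to give absolute constants since they sit in the base of the $2^{O(\ell)}$ blow-up. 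The one soft spot is your black box: the paper justifies it by asserting that the Mettu--Plaxton algorithm is $O(1)$-approximate for (normalized) $k$-means, whereas your alternative suggestion of deterministic local search would not obviously run in $\tilde O(n^2)$ time; since you offer Mettu--Plaxton as the primary option, this matches the paper's own reliance and is not a gap.
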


We define the \emph{normalized $k$-means} objective as $\cost_2(S) := \left(\sum_{x \in V}  w(x) \cdot d(x,S)^2\right)^{1/2}$. As pointed out by \cite{focs/BCLP24}, for technical reasons, it is easier to work with this notion of normalized cost. By observing that a solution $S$ is an $\alpha^2$-approximation to $k$-means if and only if it is an $\alpha$-approximation to normalized $k$-means, we can assume w.l.o.g.~that we are working with the normalized objective function.

Since the Mettu-Plaxton algorithm \cite{MettuP02} can also be used to give a $O(1)$-approximation to the normalized $k$-means problem, this algorithm works for normalized $k$-means without any modification. Thus, the running time guarantees extend immediately. Furthermore, \cite{focs/BCLP24} show that the exact statement of \Cref{lem:bicri=sparsifier} holds for the normalized $k$-means objective, i.e.~replacing $\cost$ with $\cost_2$. Using this lemma, it is easy to see that the analysis of the approximation also extends with no modifications.

\end{document}